\keywords{Adhesive categories, rule algebras, Double Pushout (DPO) rewriting, stochastic mechanics}
\gdef\tpScale{0.7}
\newcommand{\tP}[1]{\ensuremath{\vcenter{\hbox{\begin{tikzpicture}[transform shape, scale=\tpScale]\normalsize#1\end{tikzpicture}}}}}
\gdef\colSEP{0.8}
\colorlet{h1color}{blue!40!black} 
\colorlet{h2color}{orange!90!black} 
\colorlet{h3color}{blue!40!white} 
\colorlet{h4color}{green!40!black} 
\def\bstr{b}
\def\bfstr{bf}
\def\cstr{c}
\def\fstr{f}
\def\lst{A,B,C,D,d,E,F,G,H,I,J,K,L,M,N,O,P,Q,R,S,T,U,V,W,X,Y,Z}
\newcommand{\MkB}[1]{\expandafter\def\csname\bstr#1\endcsname{\mathbb{#1}}}
\lst\do{%
    \expandafter\MkB \i     }
\newcommand{\MkBF}[1]{\expandafter\def\csname\bfstr#1\endcsname{\mathbf{#1}}}
\lst\do{%
    \expandafter\MkBF \i     }
\newcommand{\MkCal}[1]{\expandafter\def\csname\cstr#1\endcsname{\mathcal{#1}}}
\lst\do{%
    \expandafter\MkCal \i     }
\newcommand{\MkFrak}[1]{\expandafter\def\csname\fstr#1\endcsname{\mathfrak{#1}}}
\lst\do{%
    \expandafter\MkFrak \i     }
\renewcommand{\vec}[1]{\underline{#1}}
\newcommand{\mono}[1]{\mathsf{mono}(#1)}
\newcommand{\obj}[1]{\mathsf{obj}(#1)}
\newcommand{\Lin}[1]{\mathsf{Lin}(#1)}
\newcommand{\MatchSet}[2]{\mathsf{M}_{#1}(#2)}
\newcommand{\comp}[3]{#1 \stackrel{#2}{\blacktriangleleft} #3}
\newcommand{\GRule}[3]{#1\xleftharpoonup{#2}#3}
\newcommand{\ntC}{\circledast}
\newcommand{\bra}[1]{\left\langle #1\right\vert}
\newcommand{\ket}[1]{\left\vert #1\right\rangle}
\newcommand{\braket}[2]{\left\langle \left. #1 \right\vert #2\right\rangle}
\newcommand{\Prob}{\operatorname{\mathsf{Prob}}}
\newcommand{\End}{\operatorname{\mathsf{End}}}
\newcommand{\Stoch}{\operatorname{\mathsf{Stoch}}}
\newcommand{\rSpan}[5]{\!\!\begin{mycdInline}[ampersand replacement=\&] (#1 \& #3 \ar[l,hook',"#2" description] \ar[r,hook,"#4" description] \& #5) \end{mycdInline}\!\!}
\newcommand{\rSpanAlt}[5]{\left(#1\xhookleftarrow{#2}#3\xhookrightarrow{#4}#5\right)}
\newcolumntype{L}{>{$}l<{$}} 
\newcolumntype{R}{>{$}r<{$}} 
\newcolumntype{C}{>{$}c<{$}} 
\newcommand{\OneVertG}[1][]{%
\ifthenelse{\equal{#1}{}}{%
\tP{%
\node[vertices] (a) at (1,1) {};}}{%
\tP{%
\node[vertices,#1] (a) at (1,1) {};}}}
\newcommand{\TwoVertG}[1][]{%
\ifthenelse{\equal{#1}{}}{%
\tP{%
\node[vertices] (a) at (1,1) {};
\node[vertices] (b) at (1.5,1) {};}}{%
\tP{%
\node[vertices,#1] (a) at (1,1) {};
\node[vertices,#1] (b) at (1.5,1) {};}}}
\newcommand{\TwoVertGL}[3][]{%
\ifthenelse{\equal{#1}{}}{%
\tP{%
\node[vertices] (a) at (1,1) {#2};
\node[vertices] (b) at (1.5,1) {#3};}}{%
\tP{%
\node[vertices,#1] (a) at (1,1) {#2};
\node[vertices,#1] (b) at (1.5,1) {#3};}}}
\newcommand{\TwoVertEdgeG}[1][]{%
\ifthenelse{\equal{#1}{}}{\tP{%
\node[vertices] (a) at (1,1) {};
\node[vertices] (b) at (1.5,1) {};
\draw (a) edge (b);}}{\tP{%
\node[vertices,#1] (a) at (1,1) {};
\node[vertices,#1] (b) at (1.5,1) {};
\draw (a) edge[#1] (b);}}}
\newcommand{\TwoVertDirEdgeG}[1][]{%
\ifthenelse{\equal{#1}{}}{\tP{%
\node[vertices] (a) at (1,1) {};
\node[vertices] (b) at (1.5,1) {};
\draw (a) edge[dirEdge] (b);}}{\tP{%
\node[vertices,#1] (a) at (1,1) {};
\node[vertices,#1] (b) at (1.5,1) {};
\draw (a) edge[#1,dirEdge] (b);}}}
\def\temp{&} \catcode`&=\active \let&=\temp
\gdef\mycdScale{1}
\gdef\mycdinlineScale{1.0}
\definecolor{Brown}{rgb}{0.65, 0.16, 0.16}
\definecolor{ForestGreen}{rgb}{0.13, 0.55, 0.13}
\definecolor{Frame}{RGB}{55,155,198}
\definecolor{Red}{rgb}{0.93, 0.11, 0.14}
\definecolor{RoyalBlue}{RGB}{65,105,225}
\lstdefinelanguage{kappa}{
	alsoletter={\%,:,-,<,>,@},
	morekeywords=[1]{\%agent\:, \%token\:, \%var\:, \%obs\:},
	morekeywords=[2]{\%init\:},
	morekeywords=[3]{\%mod\:, \%def\:, \$ADD, \$DEL, \$SNAPSHOT, \$STOP, \$FLUX, \$TRACK, \$UPDATE, \$PLOTENTRY, \$PRINT, \$PRINTF},
	morekeywords=[4]{->, <->, @},
	morestring=[d]{'},
	morecomment=[l]\#
}
\newcommand{\includeFig}[2]{\vcenter{\hbox{\includegraphics[page=#1]{#2}}}}
\gdef\tpGOScale{0.5}
 \newcommand{\tPgo}[1]{\ensuremath{\vcenter{\hbox{\begin{tikzpicture}[transform shape, scale=\tpGOScale]\normalsize#1\end{tikzpicture}}}}}
 \def\vCirc{\scalerel*{\includegraphics[page=1]{}}{b}}
 \def\vSquare{\scalerel*{\includegraphics[page=2]{images/bcSymbols.pdf}}{b}}
 \def\vSquareSolid{\scalerel*{\includegraphics[page=3]{images/bcSymbols.pdf}}{b}}
\begin{document}

\title[Combinatorial Conversion and Moment Bisimulation]{Combinatorial Conversion and Moment Bisimulation\newline for Stochastic Rewriting Systems}

\author[N.~Behr]{Nicolas Behr\rsuper{a}}	
\address{\lsuper{a}IRIF, Universit\'{e} Paris-Diderot, F-75205 Paris Cedex 13, France}	
\email[Corresponding author]{nicolas.behr@irif.fr}  
\thanks{The work of N.B.\@ is supported by a \emph{Marie Sk\l{}odowska-Curie Individual Fellowship} (Grant Agreement No.~753750 -- RaSiR)}	

\author[V.~Danos]{Vincent Danos\rsuper{b}}
\address{\lsuper{b}D\'{e}partement d'Informatique de l'ENS, ENS, CNRS, PSL University Paris, France}
\email{\{vincent.danos,ilias.garnier\}@di.ens.fr}

\author[I.~Garnier]{Ilias Garnier\rsuper{b}}
\thanks{The work of I.G.\@ is supported by the ANR grant REPAS}	





\begin{abstract}
  \noindent We develop a novel method to analyze the dynamics of stochastic rewriting systems evolving over finitary adhesive, extensive categories. Our formalism is based on the so-called rule algebra framework~\cite{bdg2016,Behr2018} and exhibits an intimate relationship between the combinatorics of the rewriting rules (as encoded in the rule algebra) and the dynamics which these rules generate on observables (as encoded in the stochastic mechanics formalism). We introduce the concept of combinatorial conversion, whereby under certain technical conditions the evolution equation for (the exponential generating function of) the statistical moments of observables can be expressed as the action of certain differential operators on formal power series. This permits us to formulate the novel concept of moment-bisimulation, whereby two dynamical systems are compared in terms of their evolution of sets of observables that are in bijection. In particular, we exhibit non-trivial examples of graphical rewriting systems that are moment-bisimilar to certain discrete rewriting systems (such as branching processes or the larger class of stochastic chemical reaction systems). Our results point towards applications of a vast number of existing well-established exact and approximate analysis techniques developed for chemical reaction systems to the far richer class of general stochastic rewriting systems.
\end{abstract}

\maketitle

\section{Introduction}

Stochastic graph rewriting systems (SGRSs) have proved their adequacy at modeling various phenomena~\cite{nagl1978tutorial,heckelgraph,ehrig2008graph},
ranging from protein-protein interactions in
biology~\cite{danos2004formal,danos2012graphs} to social network dynamics~\cite{gross2009adaptive,pinaud2012porgy}. In this regard, SGRSs provide a promising setting for the development of theoretical and algorithmic tools for analyzing the properties of a wide range of models. These properties correspond to \emph{observables}, i.e.\ functions that describe the time-dependent occurrences of patterns of interest in the system being studied. The problem of computing these observables, either exactly or approximately, has garnered a lot of attention in recent years~\cite{danos2004computational,danos2008rule,Feret2012137,Petrov_2012,danos2015moment}.

In this paper, we approach this problem by formulating SGRSs in the \emph{rule-algebraic framework}~\cite{bdg2016} and address the problem for a far greater class of \emph{stochastic rewriting systems} (SRSs) using a recent extension to rewriting in adhesive, extensive categories~\cite{Behr2018}. Notably, the rule algebras are the associative unital algebras of compositions of rewriting rules, from which the action of rewriting rules on objects of the underlying category may be reconstructed in terms of the canonical representation of the rule algebras. From this point, well-established techniques from the realm of stochastic mechanics permit to develop a well-founded implementation of SRSs.  We achieve the following new results. Firstly, we prove a theorem allowing one to extract from a given SRS and a chosen family of observables
(satisfying a certain closure property w.r.t. the dynamics) a partial differential equation (PDE) describing the time-evolution of the \emph{exponential moment generating function} (EMGF) of that family. We refer to this procedure as \emph{combinatorial conversion}, since the resulting evolution equations are determined solely from algebraic combinatorial relations. The mathematical setting for this evolution is the realm of formal power series, and one can leverage powerful tools from abstract algebra and analytic combinatorics to study the higher moments of these families of observables. Secondly, we introduce the notion of \emph{moment bisimulation}. The idea is based on our observation that two given SRSs may exhibit the same EMGF evolution for certain choices of observables even if their underlying transition systems themselves do not coincide. We define a precise mathematical setup to efficiently compare SRSs via their EMGF evolution. The main motivation for this concept is the possibility to study the statistical moments of a set of observables for a complicated SRS (that may itself not even be amenable to efficient simulation) via analyzing a \emph{moment-bisimilar}, potentially less complex system instead. For vertex-counting observables, we obtain an exact characterization of the moment-bisimilarity class of families of SRS which are bisimilar to \emph{discrete graph rewriting systems}. The latter systems are better known as chemical reaction systems (i.e.\@ stochastic Petri nets with mass-action semantics). Our definition and analysis of bisimilarity classes is based upon the evolution equations for exponential moment generating functions of the observables of the stochastic rewriting systems. In particular, we show how to synthesize the corresponding discrete SRS from the information of the exponential moment generating function evolution equations.

\subsection{Relation to previous work}

The present article aims to relate two main fields of research, namely the theory of stochastic rewriting systems and the theory of chemical reaction systems. With regards to the former field, we would like to mention in particular the results that have been obtained in the development and analysis of the so-called Kappa framework~\cite{danos2004formal} that was introduced to study biochemical reaction frameworks. Since Kappa is based on (a particular notion of site-) graph rewriting, its practical development has seen a large number of interesting algorithms and mathematical structures. In terms of approximate and exact analysis, we would like to mention the ideas related to ``granularity''~\cite{Harmer_2010} and ideas of coarse-graining of descriptions of dynamical models (which bear a certain similarity with our notion of moment bisimulation), the analysis of the combinatorial complexity of rewriting systems (in biochemistry)~\cite{Deeds_2012,Abou_Jaoud__2016}, and also the idea of stochastic fragments as introduced in~\cite{feretstochastic} (see also~\cite{Murphy_2010,Petrov_2012}). The latter concept is similar in spirit to our combinatorial conversion results, albeit the precise mathematical setups is notably different (and the formulation in loc cit.\@ is in particular rather specialized to the settings encountered in calculations with labeled transition systems). A large set of results is available in this community regarding the generation of (deterministic) ODEs for expectation values of observables in Kappa systems, including the analyzer framework \textsc{KADE}~\cite{Camporesi_2017} (see also~\cite{danos2004computational} for an overview of earlier work on this topic). Moreover, some interesting results are available~\cite{Danos_2010} that allow one to perform exact or approximate model reduction in terms of the aforementioned ODEs. Our notion of combinatorial conversion and moment bisimilarity could be seen as an \emph{observable-based} generalization of these scenarios, where, instead of focusing on individual moments such as the expectation values of a given set of observables, we give a full equivalence of \emph{all} moments of two given sets of observables in two different models.

In the literature on chemical reaction systems, there is a large set of results available on computability and both analytical and numerical approximate methods in order to tackle studies of the relevant stochastic dynamics. Analytical methods include techniques that rely implicitly on combinatorial semi-linear normal-ordering techniques to treat first order reaction systems~\cite{McQuarrie_1967,jahnke2007solving,bdp2017}, or alternatively rely on the method of characteristics to solve linear differential equations~\cite{reis2018general}, or for certain classes of \emph{chemical reaction systems (CRSs)} methods relying on the notion of (generalized) orthogonal polynomials~\cite{McQuarrie_1967,bdp2017} or special functions~\cite{Hornos:2005aa,Shahrezaei:2008aa,Vandecan:2013aa}. Amongst the perturbative approaches, one finds various notions of finite state projections~\cite{kryven2015solution,cao2016state,dinh2016understanding}, tau-leaping~\cite{cao2006efficient,rathinam2003stiffness,cao2007adaptive}, moment closure techniques~\cite{gillespie2009moment,schnoerr2015comparison}, stochastic path integral techniques~\cite{peliti1985path,ohkubo2012one,Weber_2017}, quantized tensor train formats~\cite{zhang2015enabling,liao2015tensor,gelss2016solving}, and various forms of Magnus expansions~\cite{blanes2009magnus,iserles2018applications}. Many of these techniques might in fact pose interesting approaches also to study the dynamics of more general stochastic rewriting systems, even those that do not possess the special property of discrete bisimilarity to chemical reaction systems. Finally, notions of bisimilarity such as backwards bisimulation based on the concept of lumpability have been studied in~\cite{Feret2012137,cardelli_et_al:LIPIcs:2015:5368}. These methods differ from our approach in that our moment bisimulations  capture dynamical information on a subset of the observables, as opposed to a reduction of information based on coarse-graining the state spaces of the models in the aforementioned methods.

\subsection{Overview of results and plan of the paper}
In Section~\ref{sec:CRS}, we introduce a formulation of chemical reaction systems that will serve as the basis of our general stochastic mechanics formalism and that illustrates some of the main concepts of the theory. The general framework requires the notion of so-called \emph{rule algebras}, i.e.\ of the algebras of sequential compositions of linear rules in rewriting systems over finitary adhesive, extensive categories (with the main concepts and results recalled in Section~\ref{sec:SMF}). As a key result of our formalism, we present in Section~\ref{sec:CC} the notion of \emph{combinatorial conversion}, whereby a given set of observables in a stochastic rewriting system is amenable to an analysis in terms of their exponential moment generating function precisely if it fulfills a certain technical condition that renders the evolution equation for the EMGF into the form of a PDE on certain formal power series. Since the latter no longer explicitly contains information on the precise structure of the observables and of the transitions of the stochastic system, it is natural to define a notion of \emph{moment bisimulation} (see Section~\ref{sec:DB}) as the situation where two equinumerous sets of observables in two different stochastic rewriting system have an isomorphic evolution equation for their EMGFs. Amongst the resulting notion of bisimilarity classes, somewhat surprisingly the bisimilarity class of \emph{discrete graph} rewriting systems (which includes chemical reaction systems) contains also non-trivial classes of more general rewriting systems and sets of observables, which renders the notion of \emph{discrete moment bisimulation} a very interesting candidate for a method of computation for stochastic rewriting systems. We illustrate this phenomenon by providing two application example case studies in Section~\ref{sec:AE} (a variant of the so-called \emph{voter model}) and in Section~\ref{sec:BC} (a \emph{cryptocurrency toy model}).

\section{The blueprint for a general stochastic mechanics framework: chemical reaction systems in the language of Delbr\"uck and Doi}
\label{sec:CRS}

As will become evident in the sequel, chemical reaction systems constitute the simplest possible type of rewriting systems: in the rule algebra framework, they are understood as rewriting systems of \emph{discrete graphs}. To illustrate this claim and to motivate the construction of a general stochastic mechanics framework, let us first consider the case of chemical reaction systems of a single chemical species. We will follow closely our recent exposition~\cite{bdp2017}, which in turn relies on standard chemical reactions systems theory, particularly on the ideas of Delbr\"uck~\cite{delbruck1940statistical} and Doi~\cite{Doi_1976} (see also~\cite{McQuarrie_1967,Weber_2017}).\\

In the formulation in terms of probability generating functions due to Delbr\"uck~\cite{delbruck1940statistical}, a chemical reaction of one species $X$ is a dynamical system with transitions of the form
\begin{equation}\label{eq:CRNtrans}
	i\,X\xrightharpoonup{\kappa_{i,o}}o\, X\,.
\end{equation}
Here, $i$ denotes the number of particles $X$ that enter the transition, $o$ the number that exits the transition, and $\kappa_{i,o}\in \bR_{>0}$ is the so-called base rate of the transition. A \emph{pure state} of the system is characterized as a state with a precise number $n$ of particles $X$, while in general the system will evolve on the space of probability distributions (or \emph{mixed states}) over the pure states. More concretely, starting from a pure state, the probability for a transition as described in~\eqref{eq:CRNtrans} to occur is proportional to its base rate $\kappa_{i,o}$ times the number of possibilities that $i$ particles $X$ can be picked from the $n$ particles in the state (with all particles considered as indistinguishable). This concept is known as \emph{mass action semantics} (see e.g.\ \cite{Gillespie_2007}, or~\cite{Cain_2014} for a short introduction, or~\cite{Voit_2015} for a historical review). Delbr\"uck's key idea consisted in encoding this dynamics via the notion of \emph{probability generating functions}: each pure state of $n$ particles $X$ is encoded as a monomial $x^n$ (with $x$ a formal variable), while a probability distribution at some time $t$ (with $t\geq 0$) over such pure states is mapped to the formal power series
\begin{equation*}
	P(t;x):=\sum_{n\geq 0} p_n(t)x^n\,.
\end{equation*}
Here, $p_n(t)$ denotes the probability at time $t$ that the system is in a state with $n$ particles, and we require the standard conditions
\[
	\forall n\geq0, t\geq 0:\; p_n(t)\in \bR_{\geq0}\;\land\; \forall t\geq 0:\;\sum_{n\geq 0}p_n(t)=1\,.
\]
Then given a set of transitions with base rates as in~\eqref{eq:CRNtrans} and an initial state described by a probability generating function $P_0(x)$, the evolution of the system under mass-action semantics is given in Delbr\"uck's formulation as\footnote{Here, in order to keep the expression for the infinitesimal generator $\cH$ as compact as possible, we set $\kappa_{i,o}=0$ for those transitions not included in the (finite) set of transitions that define the given model.}
\begin{equation}\label{eq:dellEvo}
\begin{aligned}
  \frac{\partial}{\partial t}P(t;x)&= \cH P(t;x)\,,\quad P(0;x)=P_0(x)\\
  \cH&=\sum_{i,o} \kappa_{i,o}\left(\hat{x}^o-\hat{x}^i\right)\left(\tfrac{\partial}{\partial x}\right)^i\,.
\end{aligned}
\end{equation}
The linear operator $\hat{x}$ is defined as the \emph{formal multiplication operator}, $\hat{x}x^n:=x^{n+1}$, while $\frac{\partial}{\partial x}$ denotes the \emph{formal derivative operator}. It is instructive to expand the formal solution of the evolution equation~\eqref{eq:dellEvo}, $P(t;x)=e^{t\cH}P(0;x)$, into a Taylor series around $t=0$:
\[
\begin{aligned}
P(t;x)&=(1+t\cH+O(t^2))P(0;x)\\
&=\left(1-t \sum_{i,o} \kappa_{i,o}\hat{x}^i\left(\tfrac{\partial}{\partial x}\right)^i\right)P(0;x)
+t\sum_{i,o} \kappa_{i,o}\hat{x}^o\left(\tfrac{\partial}{\partial x}\right)^i P(0;x)+O(t^2)\,.
\end{aligned}
\]
Thus, for a given transition from $i$ to $o$ particles $X$ at base rate $\kappa_{i,o}$, the contribution $\kappa_{i,o}\hat{x}^o\partial_x^i$ to $\cH$ implements the transition, while the (diagonal) contribution $-\kappa_{i,o}\hat{x}^i\partial_x^i$ ensures that the normalization of the probability distribution generating function $P(t;x)$ remains intact (see also~\cite{bdp2017} for an extended explanation). Mass-action semantics is indeed implemented via this form of operator $\cH$, since for a given pure state encoded as $x^n$ one finds that
\[
(1+t\cH+O(t^2))x^n= \left((1-\kappa_{i,o}(n)_i)x^n +t\kappa_{i,o}(n)_i\, x^{n-i+o}\right)+O(t^2)\,,
\]
with $(a)_b:=a!/(a-b)!$ (and $(a)_b:=0$ for $b>a$).

\begin{exa}\label{exa:bd}
As a simple, yet sufficiently non-trivial example of a one-species chemical reaction system, let us consider for illustration a \emph{birth-death model} (with $\beta,\tau\in \bR_{\geq0}$):
\begin{equation}\label{eq:bdDef}
    \varnothing\xrightharpoonup{\beta}X\,,\quad X\xrightharpoonup{\tau}\varnothing
\end{equation}
In Delbr\"uck's formalism, we have to solve the evolution equation~\eqref{eq:dellEvo} with infinitesimal generator $\cH$ given as
\begin{equation*}
    \cH=\beta(\hat{x}-1)+\tau(1-\hat{x})\tfrac{\partial}{\partial x}\,.
\end{equation*}
Taking advantage of the fact that $\cH$ is of \emph{semi-linear form} (i.e.\ contains at most a first-order derivative in each term), one may utilize the so-called \emph{semi-linear normal-ordering technique}~\cite{Dattoli:1997iz,blasiak2005boson,blasiak2011combinatorial} in order to obtain a closed-form solution for $P(t;x)$~\cite{bdp2017}:
\begin{equation}\label{eq:solBDP}
\begin{aligned}
    P(t;x)&=Pois(\tfrac{\beta}{\tau}(1-e^{-\tau t});x)\cdot P_0(Bern(e^{-\tau t};x))\\
    Pois(\alpha;x)&:=e^{\alpha(x-1)}\quad (\alpha\in \bR_{\geq0})\,,\quad
    Bern(\alpha;x):=\alpha x+(1-\alpha)\quad (0\leq \alpha\leq 1)
\end{aligned}
\end{equation}
Here, $P_0(x):=P(0;x)$ is the probability generating function (PGF) of the initial state, $Pois(\alpha;x)$ the PGF of a \emph{Poisson distribution of parameter} $\alpha$, and $Bern(\alpha;x)$ that of a \emph{Bernoulli distribution of parameter} $\alpha$. In order to interpret formula~\eqref{eq:solBDP}, let us recall that for any two (discrete) probability distributions $\mu_1$ and $\mu_2$ with respective PGFs $P_1(x)$ and $P_2(x)$, the \emph{product} $P_1(x)\cdot P_2(x)$ yields the PGF of the \emph{convolution} $\mu_1*\mu_2$, while the \emph{composition} $P_1(P_2(x))$ encodes the PGF of the \emph{compound distribution} of $\mu_1$ and $\mu_2$. We thus find that the PGF $P(t;x)$ in~\eqref{eq:solBDP} is a convolution of a Poisson distribution with the compound of the initial state distribution (encoded by $P_0(x)$) and a Bernoulli distribution. Moreover, taking the limit $t\to\infty$, we recover the well-known fact that the system state converges onto a Poisson distribution of parameter $\tfrac{\beta}{\tau}$:
\begin{equation*}
\lim\limits_{t\to\infty}P(t;x)=Pois(\tfrac{\beta}{\tau};x)
\end{equation*}
\end{exa}

In order to extend this formalism to general stochastic rewriting systems, it proves fruitful to consider the following alternative formulation of chemical reaction systems in terms of \emph{boson creation ($a^{\dag}$) and annihilation operators ($a$)} due to M.~Doi~\cite{Doi_1976}. We obtain this formulation immediately from the following isomorphism (of representations of the Heisenberg-Weyl algebra, see e.g.\ \cite{bdp2017}):
\begin{equation}\label{eq:baseChange}
x^n \leftrightarrow \ket{n}\,, \; \hat{x}\leftrightarrow a^{\dag}\,,\; \tfrac{\partial}{\partial x}\leftrightarrow a\,. 
\end{equation}
It is evident from the definition of the operators $a^{\dag}$ and $a$ that they fulfill the so-called \emph{canonical commutation relation}
\begin{equation}\label{eq;HWccr}
[a,a^{\dag}]:=a a^{\dag}-a^{\dag}a = \mathbb{1}\,,
\end{equation}
where $\mathbb{1}$ denotes the identity operator on the vector space spanned by the vectors $\ket{n}$. Moreover, the identities $\hat{x}x^{n}=x^{n+1}$ and $\tfrac{\partial}{\partial x}x^n=nx^{n-1}$ translate into
\begin{equation}\label{eq:HWcanrepClassical}
a^{\dag}\ket{n}=\ket{n+1}\,,\quad a\ket{0}=0\,,\; \forall n>0:\; a\ket{n}=n\ket{n-1}\,.
\end{equation}
Accordingly, the time-dependent state of a chemical reaction system will now be described as a time-dependent state in the basis of so-called \emph{number vectors} $\ket{n}$,
\begin{equation*}
\ket{\Psi(t)}=\sum_{n\geq 0} p_n(t)\ket{n}\,.
\end{equation*}
The operator $\cH$ of~\eqref{eq:dellEvo} is transformed via the change of basis~\eqref{eq:baseChange} into the operator
\begin{equation}\label{eq:HcrnDoi}
H=\sum_{i,o} \kappa_{i,o}\left(a^{\dag\:o}-a^{\dag\:i}\right)a^i\,.
\end{equation}

While thus far this alternative representation does not yield any further insights, a key additional concept in Doi's formalism is the notion of the so-called \emph{dual projection vector} $\bra{}$, which is defined via its action on states as~\cite{Doi_1976} (compare~\cite{bdp2017})
\begin{equation}\label{eq:projBra}
  \forall n\geq 0:\; \braket{}{n}:=1_{\bR}\,.
\end{equation}
One then finds the relationships
\begin{equation*}
\braket{}{\Psi(t)}=1\,,\quad \bra{}H =  0\,,
\end{equation*}
which express the normalization of $\ket{\Psi(t)}$ as a probability distribution, and a certain technical property of $H$ (that is necessary in order for $H$ to qualify as an infinitesimal generator of a continuous-time Markov chain or CTMC). Moreover, one may naturally define the concept of \emph{observables} in this formalism: an observable $O$ is a \emph{diagonal operator} (in the basis $\{\ket{n}\}_{n\geq 0}$),
\begin{equation*}
 \forall n\geq 0:\; O\ket{n}=\omega_O(n)\ket{n}\,.
\end{equation*}
In the basis of linear operators $a^{\dag}$ and $a$, due to the canonical commutation relation~\eqref{eq;HWccr}, any linear operator may be expressed as a linear combination of the \emph{normal-ordered} expressions of the form $a^{\dag\:r}a^s$ (with $r,s\in \bZ_{\geq 0}$). Diagonal linear operators must thus be linear combinations of expressions of the form $a^{\dag\: k}a^k$. Indeed, a classical result from the combinatorics literature (see e.g.~~\cite{blasiak2005combinatorics}) entails that
\begin{equation}\label{eq:discrStirling}
    a^{\dag\: k}a^k=\sum_{n=0}^k s_1(k;n)(a^{\dag}a)^n\,,
\end{equation}
where $s_1(n;k)$ are the (signed) \emph{Stirling numbers of the first kind}~\cite{OEISstirlingFirst}. This identifies the operator $\hat{n}:= a^{\dag}a$, the so-called \emph{number operator} (since $\hat{n}\ket{n}=n\ket{n}$), as the only independent observable for chemical reaction systems of one particle type, as according to~\eqref{eq:discrStirling} every possible observable of a discrete system may be expressed as a polynomial in $\hat{n}$. One might then ask about the \emph{statistical moments} of this observable. Contact with classical probability theory is established via defining the \emph{expectation value} $\bE_{\ket{\Psi(t)}}(O)$ of an observable $O$ in the time-dependent state $\ket{\Psi(t)}$ as
\begin{equation*}
  \bE_{\ket{\Psi(t)}}(O):=\bra{}O\ket{\Psi(t)}\,.
\end{equation*}
Defining the exponential moment generating function (EMGF) $M(t;\lambda)$ of the statistical moments of the number operator $\hat{n}$ as
\begin{equation*}
  M(t;\lambda):=\bra{}e^{\lambda\hat{n}}\ket{\Psi(t)}\,,
\end{equation*}
one may derive the following \emph{EMGF evolution equation} (cf.~\cite{bdp2017}, Thm.~3):
\begin{equation*}
\begin{aligned}
  \tfrac{\partial}{\partial t}M(t;\lambda)&= \bD(\lambda,\partial_{\lambda})M(t;\lambda)\\
  \bD(\lambda,\partial_{\lambda})&= \sum_{i,o}\kappa_{i,o} \left(e^{\lambda(o-i)}-1\right)\sum_{k=0}^{i}s_1(i,k)\left(\tfrac{\partial}{\partial\lambda}\right)^{k}\,.
\end{aligned}
\end{equation*}
It will prove rather important to our main framework to understand the derivation of this result from the structures introduced thus far. Following loc cit.\@ closely, we compute:
\begin{equation}\label{eq:derMEGFch}
\begin{aligned}
		\tfrac{\partial}{\partial t}M(t;\lambda)&=\bra{}e^{\lambda O}H\ket{\Psi(t)}\\
		&=\bra{}\left(e^{\lambda O}He^{-\lambda O}\right)e^{\lambda O}\ket{\Psi(t)}\\
		&\overset{(*)}{=}
			\bra{}\left(e^{ad_{\lambda O}}H\right)e^{\lambda O}\ket{\Psi(t)}\\
		&=\bra{}\left[
			H+\sum_{q\geq1}\frac{\lambda^q}{q!}ad_{O}^{\circ\:q}(H)
		\right]e^{\lambda O}\ket{\Psi(t)}\\
		&\overset{(**)}{=} \sum_{q\geq1}\frac{\lambda^q}{q!}\bra{}\left[ad_{O}^{\circ\:q}(H)
		\right]e^{\lambda O}\ket{\Psi(t)}\,.
\end{aligned}
\end{equation}
Here, in the step marked $(*)$ we have made use of the variant $e^{\lambda A}Be^{-\lambda A}=e^{\lambda ad_A}B$ of the BCH formula (see e.g. \cite{Hall_2015}, Prop.~3.35), where for two composable linear operators $A$ and $B$ the adjoint action is defined as $ad_A B:=AB-BA$, and with $ad_A^{\circ\:q} B$ denoting the $q$-fold application of $ad_A$ to $B$ (and $ad_A^{\circ 0}B:=B$), while the step $(**)$ follows from $\bra{}H = 0$.

\begin{exa}[Ex.~\ref{exa:bd} continued]\label{exa:bd2}
    Following Doi's method, performing the change of basis~\eqref{exa:bd}, the birth-death system of~\eqref{eq:bdDef} is found to have the infinitesimal generator
    \begin{equation*}
        H=\beta(a^{\dag}-1)+\tau(1-a^{\dag})a\,.
    \end{equation*}
    In order to explicitly verify the property $\bra{}H=0$, it is sufficient to verify this property on \emph{basis vectors} $\ket{n}$:
    \begin{equation*}
        \bra{}H\ket{n}=\beta(\braket{}{n+1}-\braket{}{n})+\tau(n\braket{}{n-1}-n\braket{}{n})=0\,.
    \end{equation*}
    As an illustration of the key technique of Doi's method in view of this paper, the calculation of evolution equations for exponential moment-generating functions (EMGFs), let us explicitly carry out the steps of the derivation~\eqref{eq:derMEGFch} for the example at hand. Choosing the observable $\hat{n}:=a^{\dag}a$, let us first compute the following \emph{commutator}:
    \begin{equation*}
        ad_{\hat{n}}(H):=[\hat{n},H]=\hat{n}H-H\hat{n}
        =\beta a^{\dag}-\tau a\,.
    \end{equation*}
    Here, we made use of the fact that for any two \emph{diagonal} operators $A$ and $B$, $[A,B]=0$, while for the last equality we made use of the \emph{canonical commutation relation} $[a,a^{\dag}]=\mathbb{1}$. Inserting this result into~\eqref{eq:derMEGFch}, and with the auxiliary formula
\begin{equation*}
    \sum_{q\geq1}\frac{\lambda^q}{q!}ad_{\hat{n}}^{\circ\:q}(H)
    =\beta(e^{\lambda}-1) a^{\dag}+\tau(e^{-\lambda}-1) a\,,
\end{equation*}
the first and last equations of~\eqref{eq:derMEGFch} evaluate to
\begin{equation}\label{eq:MEGFbd}
\begin{aligned}
\tfrac{\partial}{\partial t}M(t;\lambda)&=\tfrac{\partial}{\partial t}\bra{}e^{\lambda \hat{n}}\ket{\Psi(t)}\\
&=\bra{}\left[\beta(e^{\lambda}-1) a^{\dag}+\tau(e^{-\lambda}-1) a\right]e^{\lambda \hat{n}}\ket{\Psi(t)}\\
&=\bra{}\left[\beta(e^{\lambda}-1)+\tau(e^{-\lambda}-1) a^{\dag}a\right]e^{\lambda \hat{n}}\ket{\Psi(t)}\\
&=\left(\beta(e^{\lambda}-1)+ \tau(e^{-\lambda}-1)\tfrac{\partial}{\partial \lambda}\right)M(t;\lambda)\,.
\end{aligned}
\end{equation}
In the second to last step, we invoked the auxiliary formula $\bra{}a^{\dag\:k}a^{\ell}=\bra{}a^{\dag\:\ell}a^{\ell}$, which follows from combining~\eqref{eq:HWcanrepClassical} with the definition~\eqref{eq:projBra} of the dual projection vector $\bra{}$. In combination with~\eqref{eq:discrStirling}, this yields the following useful identity:
\begin{equation}\label{eq:JCdiscr}
    \bra{}a^{\dag\:k}a^{\ell}\: e^{\lambda \hat{n}}=\bra{}a^{\dag\:\ell}a^{\ell}\: e^{\lambda \hat{n}}=\sum_{r=0}^{\ell}s_1(\ell;r)\bra{}\hat{n}^r\:e^{\lambda \hat{n}}
    =\sum_{r=0}^{\ell}s_1(\ell;r)\tfrac{\partial^r}{\partial\lambda^r}\bra{}e^{\lambda \hat{n}}\,.
\end{equation}
In Section~\ref{sec:DB}, it will be this type of identity that will provide the intuition for the generalized concept of \emph{polynomial jump-closure} necessary in the setting of generic rewriting systems. Finally, choosing for sake of illustration as an initial state a pure state with $N$ particles, $\ket{\Psi(0)}=\ket{N}$ (whence $M(0;\lambda)=e^{\lambda N}$), we may once again employ the technique of semi-linear normal ordering to compute the closed-form solution for $M(t;\lambda)$:
\begin{equation*}
M(t;\lambda)=e^{\tfrac{\beta}{\tau}(1-e^{-\tau t})(e^{\lambda}-1)}\cdot
\left(1+e^{-\tau t}(e^{\lambda}-1)\right)^N\,.
\end{equation*}
Upon comparison to the result~\eqref{eq:solBDP} computed in Delbr\"uck's formalism, we may thus verify the well-known relation $M(t;\lambda)=P(t;e^{\lambda})$.
\end{exa}

The ``stochastic mechanics'' framework as presented here for describing statistical moments of chemical reaction systems in the form of exponential moment generating functions will be generalized to the case of rewriting systems over arbitrary adhesive, extensive categories in the main part of this paper. Crucially, while the choice of methodology in terms of focusing on either the \emph{probability generating functions} or on the \emph{exponential moment generating functions} is fully equivalent in the case of chemical reaction systems, the latter is the only generic choice possible in the case of general stochastic rewriting systems. These aspects will be discussed in further detail in Sections~\ref{sec:egfDynamics} and~\ref{sec:CC}. As advertised in the introduction, chemical reaction systems will be identified (cf.\ Example~\ref{ex:discrGRS}) as the special case of the rewriting of \emph{discrete} (i.e. edge-less) graphs. This discovery in turn will play an important role in the proof of Theorem~\ref{thm:DMB} (establishing a somewhat surprising relationship between certain general rewriting systems and chemical reaction systems).

\section{Stochastic mechanics framework for DPO rewriting over adhesive, extensive categories}
\label{sec:SMF}

For the readers' convenience, we will briefly recall the essential concepts of the \emph{rule algebra formalism} for the special case of \emph{Double Pushout (DPO) rewriting}, in the recently developed variant for DPO rewriting over arbitrary adhesive, extensive\footnote{Referring to~\cite{Lack:2004aa} for the precise details, recall that an adhesive, extensive category is a category that is adhesive and that possesses a strict initial object. Note however that not every extensive category is adhesive.} categories as introduced in~\cite{Behr2018} (compare to~\cite{bdg2016,bdgh2016} for the original version based on relational concepts). 

In the traditional approach to DPO rewriting (see e.g.~\cite{CorradiniMREHL97,DBLP:conf/gg/1997handbook}), one of the core definitions is that of the action of a rewriting rule on an object of the underlying adhesive, extensive category via a particular match. Non-determinism arises thus due to the in general multiple possible matches in applying a given rewriting rule. For the special case of \emph{discrete graph rewriting} (or rewriting of finite sets), we observed in~\cite{bdg2016} that this type of non-determinism gives rise to the famous \emph{Heisenberg-Weyl algebra} and its canonical representation~\cite{blasiak2010combinatorial,blasiak2011combinatorial}. Thus this famous algebra is a special case of a so-called \emph{rule algebra}, which is the associative unital algebra of (concurrent) compositions of rewriting rules. Since the main focus of this paper consists in certain applications and extensions of this formalism and not the definition of the formalism itself, suffice it here for brevity to recall the basic definitions (compare Figure~\ref{fig:sketch}), referring the readers to~\cite{Behr2018} for the full technical details.

\begin{figure}[t]
  \centering
 \includegraphics[width=0.9\textwidth,page=1]{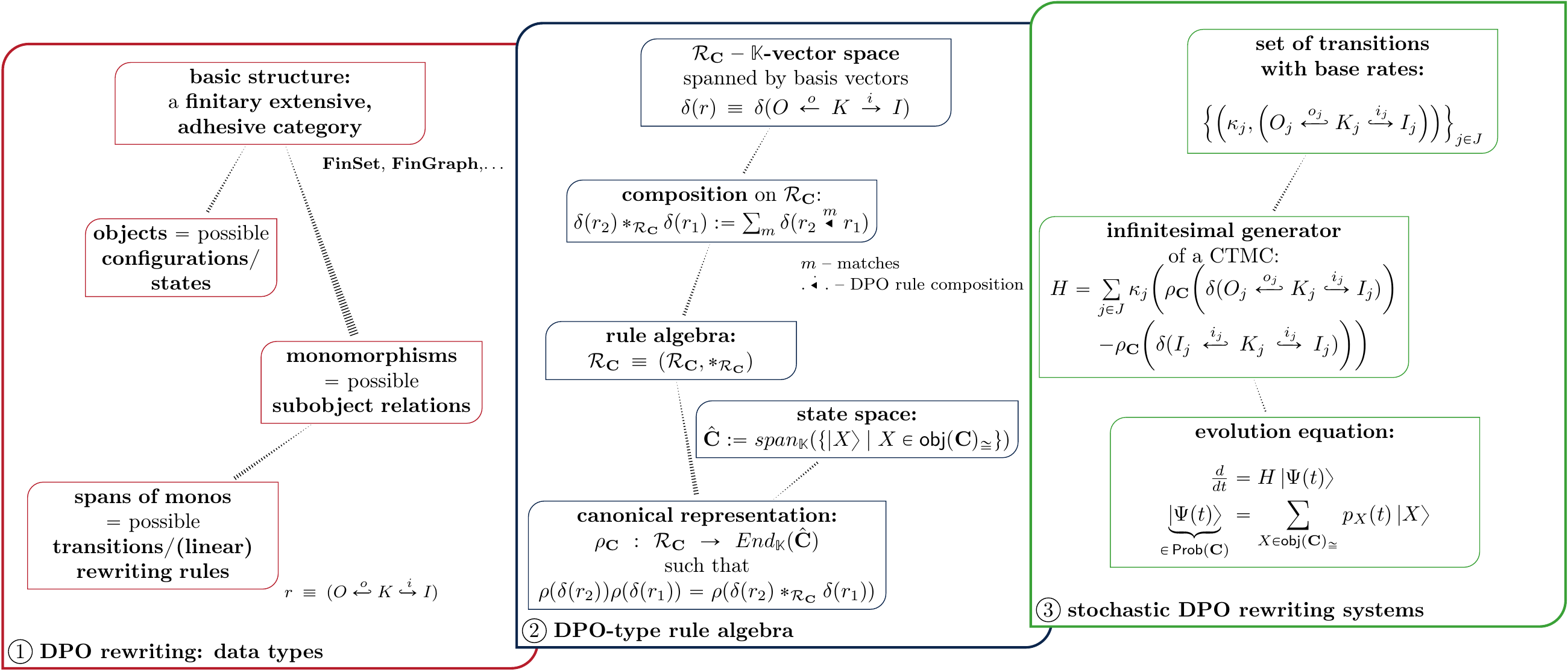}
  \caption{Sketch of the DPO-type rule algebra framework.}
  \label{fig:sketch}
\end{figure}

\subsection{Core definitions of DPO rewriting}

In the remainder of this paper, we will assume that all of our rewriting systems are formulated over \emph{adhesive, extensive categories}~\cite{Lack:2004aa,Lack:2005aa} $\bfC$ of \emph{finitary} type~\cite{GABRIEL_2014} (such as $\mathbf{FinSet}$ (finite sets), $\mathbf{FinGraph}$ (finite directed multigraphs), finite hypergraph categories etc.; see e.g.\ \cite{Lack:2004aa,Lack:2005aa} for further details). A category $\bfC$ being adhesive and extensive entails (amongst other properties) that
\begin{itemize}[label=$\triangleright$]
\item $\bfC$ possesses a \emph{strict initial object}, customarily denoted $\emptyset\in \obj{\bfC}$, such that for each object $X\in \obj{\bfC}$ there exists precisely one monomorphism $\emptyset\hookrightarrow X$,
\item pullbacks and pushouts along monomorphisms exist, and
\item pushout complements are unique.
\end{itemize}

The key concept in DPO rewriting (and from which the moniker originates) is the following one, traditionally referred to as the notion of \emph{direct derivations}:
\begin{defi}\label{def:DPOR}
	Let $\bfC$ be a finitary adhesive, extensive category. Define $\Lin{\bfC}$, the \emph{set of linear rules}, as the set of isomorphism classes of spans\footnote{Contrary to most of the standard rewriting literature, we opt to interpret spans of monomorphisms to encode partial injective morphisms from the \emph{right to the left leg} of the span, rather than in the opposite orientation; this will prove advantageous when considering operator compositions, as we may then use a convention in line with most of the mathematics literature.} of monomorphisms,
\begin{equation*}
	\Lin{\bfC}:=\{
		\rSpan{O}{o}{K}{i}{I}\mid O,K,I\in \obj{\bfC},\; o,i\in \mono{\bfC}
	\}\diagup_{\cong}\,.
\end{equation*} 
Here\footnote{The notation $A\equiv B$ will signify that ``$A$ is an equivalent or shorthand notation for $B$'' (assuming information explicit in $B$ but not in $A$ is clear from the context).}, two spans $r_j\equiv \rSpan{O_j}{o_j}{K_j}{i_j}{I_j}\in \Lin{\bfC}$ (for $j=1,2$) are isomorphic iff there exist three isomorphisms $\omega:O_1\rightarrow O_2$, $\kappa:K_1\rightarrow K_2$ and $\iota:I_1\rightarrow I_2$ such that the diagram
\[
\begin{mycd}
	O_1
		\ar[d,"\cong","\omega"'] & 
	K_1
		\ar[l,hook',"o_1"']
		\ar[r,"i_1"]
		\ar[d,"\cong","\kappa"'] & 
	I_1
		\ar[d,"\cong","\iota"']\\
	O_2 & 
	K_2 
		\ar[l,hook',"o_2"]
		\ar[r,hook,"i_2"'] &
	I_2
\end{mycd}
\]
commutes.

Given a linear rule $r\equiv \rSpan{O}{o}{K}{i}{I}\in \Lin{\bfC}$, an object $X\in \obj{\bfC}$ and a monomorphism $m:I\hookrightarrow X$, if the diagram below is constructable (i.e.\ if the pushout complement marked $POC$ exists),
\begin{equation}\label{eq:DPOdd}
\begin{mycd}
	O
		\ar[d,hook,dotted]
		\ar[dr,phantom,"PO"] & 
	K
		\ar[l,hook',"o"']
		\ar[r,"i"]
		\ar[d,hook,dashed] & 
	I
		\ar[d,hook,"m"]
		\ar[dl,phantom,"POC"]\\
	r_m(X) & 
	K' 
		\ar[l,hook',dotted]
		\ar[r,hook,dashed] &
	X
\end{mycd}
\end{equation}
we refer to the monomorphism $m:I\hookrightarrow X$ as an \emph{admissible match} of rule $r$ in $X$, while $r_m(X)$ denotes the object resulting from this application. For each linear rule $r$ and object $X$, we denote the \emph{set of admissible matches} by $\MatchSet{r}{X}$. More precisely, since the operations of pushout and pushout complement are only unique up to isomorphism, we must understand the above definitions as defined up to isomorphisms, i.e.\ $r_m(X)$ is understood as the isomorphism class whose representatives are calculated by picking representatives for $r$, $X$ and $m$ as well as for the relevant pushouts, and correspondingly (in a slight abuse of notations) the set $\MatchSet{r}{X}$ is also being defined up to isomorphisms of $r$ and $X$.
\end{defi}

By standard practice and for notational convenience, we will often keep the dependence on the various notions of isomorphisms implicit, such as in the ensuing lemma (where e.g.\ ``$r_{\emptyset_{m_X}}(X)=X$'' should be understood as ``$r_{\emptyset_{m_X}}(X)\cong X$'').

\begin{lem}\label{lem:trivRule}
	For every adhesive, extensive category $\bfC$, there exists a special linear rule 
	\[
		r_{\emptyset}\equiv \rSpanAlt{\emptyset}{}{\emptyset}{}{\emptyset}\in \Lin{\bfC}\,,
	\]
	referred to henceforth as the \emph{trivial rule}, with the property that for every object $X\in \obj{\bfC}$ there exists precisely one admissible match $m_X:\emptyset\hookrightarrow X \in \MatchSet{r_{\emptyset}}{X}$, with 
\begin{equation}\label{eq:DPOddTriv}
\begin{mycd}
	\emptyset
		\ar[d,hook,dotted]
		\ar[dr,phantom,"PO"] &[-15pt] 
	\emptyset
		\ar[l,hook']
		\ar[r,hook]
		\ar[d,hook,dashed] &[10pt]
	\emptyset
		\ar[d,hook,"m_X"]
		\ar[dl,phantom,"POC"]\\
	r_{\emptyset_{m_X}}(X)=X & 
	X 
		\ar[l,equal,dotted]
		\ar[r,equal,dashed] &
	X
\end{mycd}\,.
\end{equation}
\begin{proof}
	By virtue of the fact that $\emptyset\in \obj{\bfC}$ is the strict initial object of $\bfC$, there exists a unique monomorphism $m_X:\emptyset\hookrightarrow X$. It thus remains to prove that for a given object $X$, the morphism $m_X$ is indeed an admissible match. It follows from standard category-theoretical properties that for any monomorphism $(A\hookrightarrow B)\in \mono{\bfC}$, the following square is both a pushout and a pullback:
	\begin{equation*}
	\begin{mycd}
		A \ar[d,hook]\ar[r,equal] & A\ar[d,hook]\\
		B \ar[r,equal] & B
	\end{mycd}\,.
	\end{equation*}
\end{proof}
\end{lem}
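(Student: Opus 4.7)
The plan is to split the statement into its two assertions: (i) existence and uniqueness of the match $m_X\colon \emptyset\hookrightarrow X$, and (ii) the fact that the diagram~\eqref{eq:DPOddTriv} is constructable, with result $r_{\emptyset_{m_X}}(X)\cong X$. For (i), by the very definition of a strict initial object in $\bfC$, there exists exactly one morphism $\emptyset\to X$ for every $X\in\obj{\bfC}$, and a standard consequence of strictness is that this morphism is a monomorphism (any morphism into a strict initial object must be an iso, and initial morphisms are automatically mono). So the set-theoretic side of the claim, modulo admissibility, is immediate.

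For (ii), I would produce the pushout complement and pushout explicitly. Take the POC square on the right of~\eqref{eq:DPOddTriv} to be the one with $\emptyset$ in the top-left (image of the middle object $K$ of the trivial rule), $m_X\colon \emptyset\hookrightarrow X$ on the right, $X$ in the bottom-left and the identity $1_X$ along the bottom row. That this square is a pushout (hence, by uniqueness of pushout complements in the adhesive setting, \emph{the} POC) is exactly the general categorical fact recorded in the author's stub: for any mono $A\hookrightarrow B$, the square with two copies of that mono on the vertical sides and $1_B$ below is both a pushout and a pullback, specialised here to $A=\emptyset$, $B=X$. For the left-hand pushout of~\eqref{eq:DPOddTriv}, the span is $\emptyset\leftarrow\emptyset\rightarrow X$; since the apex $\emptyset$ is initial, every cocone under this span factors uniquely through $X$ together with the identity, so $X$ itself (with $1_X$ on the right and $m_X$ on the left) is the pushout. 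Combining the two squares yields $r_{\emptyset_{m_X}}(X)\cong X$.

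The only genuine content, and thus the only potential obstacle, is the recognition that the \emph{identity square}
\[
\begin{mycd}
A\ar[r,equal]\ar[d,hook] & A\ar[d,hook]\\
B\ar[r,equal] & B
\end{mycd}
\]
is simultaneously a pushout and a pullback in $\bfC$. The pullback part is trivial (identities are terminal in the slice), while the pushout part follows from the universal property of identities together with the fact that any cocone $(A\rightrightarrows\bullet)$ on two identical arrows must factor through the cocone apex $B$ uniquely. Everything else in the lemma is bookkeeping: uniqueness up to isomorphism of pushouts and pushout complements ensures that the constructed POC and PO coincide with the ones required by Definition~\ref{def:DPOR}, and strictness of $\emptyset$ ensures that no alternative match $m_X$ could arise.
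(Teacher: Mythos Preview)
Your proof is correct and follows essentially the same route as the paper's: both hinge on strict initiality of $\emptyset$ for the uniqueness of $m_X$ and on the observation that the identity square on a monomorphism is simultaneously a pushout and a pullback. You are simply a bit more explicit, treating the left pushout square via the universal property of the initial object rather than reusing the identity-square fact, and spelling out why the unique morphism $\emptyset\to X$ is mono; the paper compresses all of this into the single identity-square remark.
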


The second key ingredient (in particular in view of the construction of rule algebras, albeit with slightly less focus in the traditional rewriting literature) concerns the sequential compositions of linear rules. Intuitively, upon applying two linear rules in sequence to an object of the underlying category according to Definition~\ref{def:DPOR}, it is natural to imagine that one might alternatively first ``precompute'' a form of sequential composition of the rules themselves prior to application to the object. This concept will prove extremely versatile in practical computations in stochastic rewriting systems (see the later sections in this paper). 

\begin{defi}[Sequential compositions of linear rules]\label{def:seqCompDPOR}
	Let $r_j\equiv \rSpan{O_j}{o_j}{K_j}{i_j}{I_j}\in \Lin{\bfC}$ (for $j=1,2$) be two linear rules. Then a span of monomorphisms $\mu\equiv \rSpanAlt{I_1}{}{M}{}{O_2}$ constitutes an \emph{admissible match of $r_1$ into $r_2$} if and only if the following diagram is constructable (whence if the pushout complements marked $POC$ exist):
	\begin{equation}\label{eq:RcompSeq}
	\begin{mycd}
		&
		&
		&
		K_{12}
			\ar[dll,hook,dotted]
			\ar[dlll,hook',bend right=10,"o_{12}"']
			\ar[dlll,phantom,"="]
			\ar[drr,hook',dotted]
			\ar[drrr,hook,bend left=10,"i_{12}"]
			\ar[drrr,phantom,"="]
			\ar[d,phantom,"PB"] &
		&
		&\\
		O_{12} &
		K_{1}'
			\ar[l,hook',dotted]
			\ar[rr,hook,dashed]
			\ar[dr,phantom,"POC"] 
			\ar[dl,phantom,"PO"] &
		&
		M' &
		&
		K_{2}' 
			\ar[r,hook,dotted]
			\ar[ll,hook',dashed]
			\ar[dl,phantom,"POC"]
			\ar[dr,phantom,"PO"] &
		I_{12}\\
		O_1 
			\ar[u,hook,dotted] &
		K_1
			\ar[l,hook',"o_1"]
			\ar[r,hook,"i_1"']
			\ar[u,hook,dashed] &
		I_1 
			\ar[ur,hook,bend left] &
		M 
			\ar[l,hook']
			\ar[r,hook]
			\ar[u,phantom,"PO"] & 
		O_2 
			\ar[ul,hook',bend right] &
		K_2 
			\ar[l,hook',"o_2"]
			\ar[r,hook,"i_2"']
			\ar[u,hook,dashed] &
		I_2
			\ar[u,hook,dotted]
	\end{mycd}
	\end{equation}
Here, the squares marked $PO$ are constructed by taking pushouts, the one marked $PB$ by taking pullback, and the triangles marked $=$ are commuting triangles of morphisms. In case the diagram is constructable, the resulting linear rule
\begin{equation*}
	r_{12}\equiv \rSpan{O_{12}}{o_{12}}{K_{12}}{i_{12}}{I_{12}}\in \Lin{\bfC}
\end{equation*}
is called the \emph{composite of $r_1$ with $r_2$ along the match $\mu$}, which we also denote by
\begin{equation*}
	r_{12}=\comp{r_1}{\mu}{r_2}\,.
\end{equation*}
The \emph{set of admissible matches of $r_1$ into $r_2$} is denoted $\MatchSet{r_1}{r_2}$. More precisely, both $\MatchSet{r_1}{r_2}$ and $r_{12}$ are defined up to isomorphisms (where isomorphisms of matches are induced by isomorphisms of the linear rules).
\end{defi}

\begin{lem}[Trivial match of linear rules]\label{lem:rTrivMatch}
	For any two linear rules ($j=1,2$)
	\[
	r_j\equiv\rSpan{O_j}{o_j}{K_j}{i_j}{I_j}\in \Lin{\bfC}\,,
	\]
	the set of admissible matches $\MatchSet{r_1}{r_2}$ is non-empty, since it at least contains the \emph{trivial match} $\mu_{\emptyset}\equiv \rSpanAlt{I_1}{}{\emptyset}{}{O_2}$. Moreover, the composition along trivial matches is commutative, in the sense that if $\mu_{\emptyset}'\in \MatchSet{r_2}{r_1}$ is the trivial match of $r_2$ into $r_1$,
	\begin{equation*}
		\comp{r_1}{\mu_{\emptyset}}{r_2}=\comp{r_2}{\mu_{\emptyset}'}{r_1}\,.
	\end{equation*}
\begin{proof}
	Admissibility of the trivial match $\mu_{\emptyset}\in \MatchSet{r_1}{r_2}$ for two linear rules $r_1$ and $r_2$ follows from the construction of the following diagram (where $+$ denotes the operation of disjoint union):
	\begin{equation}\label{eq:RcompSeqTrivMatch}
	\begin{mycd}
		&
		&
		&
		K_1 +  K_2
			\ar[dll,hook,dotted]
			\ar[dlll,hook',bend right=10,"o_1 +  o_2"']
			\ar[dlll,phantom,"="]
			\ar[drr,hook',dotted]
			\ar[drrr,hook,bend left=10,"i_1 +  i_2"]
			\ar[drrr,phantom,"="]
			\ar[d,phantom,"PB"] &
		&
		&\\
		O_1 +  O_2 &
		K_1 +  O_2
			\ar[l,hook',dotted]
			\ar[rr,hook,dashed]
			\ar[dr,phantom,"POC"] 
			\ar[dl,phantom,"PO"] &
		&
		I_1 +  O_2 &
		&
		I_1 +  K_2 
			\ar[r,hook,dotted]
			\ar[ll,hook',dashed]
			\ar[dl,phantom,"POC"]
			\ar[dr,phantom,"PO"] &
		I_1 +  I_2\\
		O_1 
			\ar[u,hook,dotted] &
		K_1
			\ar[l,hook',"o_1"]
			\ar[r,hook,"i_1"']
			\ar[u,hook,dashed] &
		I_1 
			\ar[ur,hook,bend left] &
		\emptyset 
			\ar[l,hook']
			\ar[r,hook]
			\ar[u,phantom,"PO"] & 
		O_2 
			\ar[ul,hook',bend right] &
		K_2 
			\ar[l,hook',"o_2"]
			\ar[r,hook,"i_2"']
			\ar[u,hook,dashed] &
		I_2
			\ar[u,hook,dotted]
	\end{mycd}
	\end{equation}
	Here, the fact that this diagram is always constructable hinges on a property intrinsic to adhesive categories known as ``pushout-pullback decomposition'' (cf.\ e.g. \cite{Lack:2004aa}, Lemma~29): in the case at hand, for any diagram of the form
	\begin{equation*}
\begin{mycd}
	C  \ar[r,hook] \ar[rr, bend left, hook]  & 
	A +  C \ar[r,hook] & 
	B +  C \\
	\emptyset \ar[r,hook] \ar[rr,bend right, hook] \ar[u,hook] & 
	A \ar[r,hook] \ar[u,hook]  & B \ar[u,hook]
\end{mycd}\,,
\end{equation*}
the outer square is a pushout, the right square a pullback, and thus by pushout-pullback decomposition all squares are both pushouts and pullbacks. This proves the existence of the relevant pushout squares marked $POC$ in~\eqref{eq:RcompSeqTrivMatch}. Finally, commutativity of the composition of linear rules along trivial matches follows from the symmetry of the diagram in~\eqref{eq:RcompSeqTrivMatch}.
\end{proof}
\end{lem}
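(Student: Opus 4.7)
The plan is to instantiate diagram~\eqref{eq:RcompSeq} with $M=\emptyset$ and to show that all constructions collapse to coproducts. The candidate span $\mu_{\emptyset}=\rSpanAlt{I_1}{}{\emptyset}{}{O_2}$ is well-defined because strict initiality of $\emptyset$ provides unique monomorphisms into both $I_1$ and $O_2$. The expected construction data are $M'=I_1+O_2$, $K_1'=K_1+O_2$, $K_2'=I_1+K_2$, $K_{12}=K_1+K_2$, together with $O_{12}=O_1+O_2$ and $I_{12}=I_1+I_2$, with all morphisms being the evident coprojections.

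The key step, and the main obstacle, is verifying that the two pushout complements exist at the expected objects. For the left POC, I would assemble the horizontally pasted rectangle
\[
\begin{mycd}
\emptyset \ar[r,hook]\ar[d,hook] & K_1 \ar[r,hook]\ar[d,hook] & I_1 \ar[d,hook]\\
O_2 \ar[r,hook] & K_1+O_2 \ar[r,hook] & I_1+O_2
\end{mycd}
\]
whose left square is the pushout realizing $K_1+O_2$ as the coproduct of $K_1$ and $O_2$, and whose outer rectangle is the pushout realizing $I_1+O_2$ as the coproduct of $I_1$ and $O_2$ (both pushouts exist by strict initiality of $\emptyset$). Pushout pasting then forces the right square to be a pushout, which is precisely the statement that $K_1+O_2$ is a pushout complement of $K_1\hookrightarrow I_1\hookrightarrow I_1+O_2$. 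A symmetric argument on the other side yields $K_2'=I_1+K_2$.

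With the POCs established, the remaining data assemble by universal properties: the central pullback of $K_1+O_2 \hookrightarrow I_1+O_2 \hookleftarrow I_1+K_2$ evaluates to $K_1+K_2$ by the disjointness of coproducts intrinsic to an extensive category, and the two outer pushouts yield $O_{12}=O_1 +_{K_1} (K_1+O_2) = O_1+O_2$ and $I_{12}=I_2 +_{K_2} (I_1+K_2)=I_1+I_2$ by a second application of pushout pasting against coproducts.

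Commutativity is then a symmetry observation: computing $\comp{r_2}{\mu_{\emptyset}'}{r_1}$ with the same recipe produces the span $\rSpanAlt{O_2+O_1}{}{K_2+K_1}{}{I_2+I_1}$, which is canonically isomorphic to $\comp{r_1}{\mu_{\emptyset}}{r_2}=\rSpanAlt{O_1+O_2}{}{K_1+K_2}{}{I_1+I_2}$ via the coproduct swap $X+Y\cong Y+X$, and hence defines the same element of $\Lin{\bfC}$ under the isomorphism-of-spans equivalence of Definition~\ref{def:DPOR}. Once the POCs have been certified by the pushout-pasting step above, the rest of the argument is pure coproduct bookkeeping.
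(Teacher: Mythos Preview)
Your argument is correct and arrives at exactly the same composite span as the paper, but the route you take for the crucial POC step differs. The paper establishes that the square
\[
\begin{mycd}
K_1 \ar[r,hook,"i_1"] \ar[d,hook] & I_1 \ar[d,hook]\\
K_1+O_2 \ar[r,hook] & I_1+O_2
\end{mycd}
\]
is a pushout by invoking the adhesive-specific \emph{pushout--pullback decomposition} lemma: it first observes that the right square in
\[
\begin{mycd}
C \ar[r,hook] & A+C \ar[r,hook] & B+C\\
\emptyset \ar[u,hook]\ar[r,hook] & A \ar[u,hook]\ar[r,hook] & B \ar[u,hook]
\end{mycd}
\]
is a pullback (coproduct stability/disjointness), that the outer rectangle is a pushout, and then concludes via the decomposition lemma that both squares are pushouts. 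You instead use the elementary \emph{pushout pasting/cancellation} lemma (outer pushout $+$ left pushout $\Rightarrow$ right pushout), which holds in any category and requires no adhesivity at that point. This is a perfectly valid and arguably cleaner alternative; the only place you genuinely need the extensive structure is where you say so, namely in identifying the central pullback as $K_1+K_2$ via disjointness and pullback-stability of coproducts. The commutativity argument via the coproduct swap is the same symmetry observation the paper makes.
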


As a side remark, we note that for two linear rules $r_1,r_2\in \Lin{\bfC}$, their composition along the trivial match results in a linear rule that in the context of traditional rewriting would be interpreted as a (sequentially independent) parallel rule.

\subsection{From DPO rewriting to DPO rule algebras}
\label{sec:DPOrToRAs}

The definition of composition of linear rules along admissible matches entails that in general there might exist multiple possible matches for two given rules. One possibility to ``encode'' this particular form of non-determinism consists in the following notion of so-called \emph{rule algebras}:

\begin{defi}[DPO rule algebras; cf.~\cite{Behr2018}, Def.~4.2]\label{def:DPOra}
	Let $\bfC$ be a finitary adhesive, extensive category, $\Lin{\bfC}$ the set of linear rules over $\bfC$, and $\comp{.}{.}{.}$ the composition operation introduced in Definition~\ref{def:seqCompDPOR}. Let $\bK$ be a field (of characteristic $0$, e.g.\ $\bK=\bR$), and define $\cR_{\bfC}$ as the \emph{$\bK$-vector space with basis vectors $\delta(r)$ indexed by elements $r\in\Lin{\bfC}$}:
	\begin{equation*}
      \cR_{\bfC}:=span_{\bK}\left(\left\{
      	\delta(r)
      \right\}_{r\in \Lin{\bfC}}\right)\,.
	\end{equation*}
	We equip the $\bK$-vector space $\cR_{\bfC}$ with a \emph{bilinear multiplication} $\ast_{\cR_{\bfC}}$ via defining\footnote{Note that according to Lemma~\ref{lem:rTrivMatch}, the set of admissible matches $\MatchSet{r_1}{r_2}$ for arbitrary linear rules $r_1,r_2\in \Lin{\bfC}$ is non-empty, and moreover finite due to finitarity of $\bfC$, whence the definition of $\ast_{\cR_{\bfC}}$ as provided in~\eqref{eq:defRAM} is well-posed.}
	\begin{equation}\label{eq:defRAM}
	\begin{aligned}
	\ast_{\cR_{\bfC}}&:\cR_{\bfC}\times \cR_{\bfC}\rightarrow \cR_{\bfC}:\\
	&\quad (\delta(r_1),\delta(r_2))\mapsto \delta(r_1)\ast_{\cR_{\bfC}}\delta(r_2):=\sum_{\mu\in \MatchSet{r_1}{r_2}}\delta\left(\comp{r_1}{\mu}{r_2}\right)\,,
	\end{aligned}
	\end{equation}
	extended to arbitrary elements of $\cR_{\bfC}$ by linearity. We refer to
	\[
		\cR_{\bfC}\equiv (\cR_{\bfC},*_{\cR_{\bfC}})
	\]
	as the \emph{DPO rule algebra over $\bfC$}.
\end{defi}

We cite from~\cite{Behr2018} the following important result:
\begin{thm}[\cite{Behr2018}, Thm.~4.3]
	For every finitary adhesive, extensive category $\bfC$, the DPO rule algebra $\cR_{\bfC}$ as introduced in Definition~\ref{def:DPOra} is a \emph{unital associative algebra}, with unit
	\[
	R_{\emptyset}:= \delta(r_{\emptyset})\,,\quad r_{\emptyset}:= \rSpanAlt{\emptyset}{}{\emptyset}{}{\emptyset}\in \Lin{\bfC}
	\]
	for the composition operation $\ast_{\cR_{\bfC}}$.
\begin{proof}
	The fact that $R_{\emptyset}$ constitutes the unit element for $\ast_{\cR_{\bfC}}$ follows directly from the definition of $\ast_{\cR_{\bfC}}$ as given in~\eqref{eq:defRAM} and by specializing Lemma~\ref{lem:rTrivMatch} to the cases $r_1=r_{\emptyset}$ and $r_2=r_{\emptyset}$ (thus verifying that $R_{\emptyset}$ is both a left and a right unit for $\ast_{\cR_{\bfC}}$). For the somewhat involved proof of the associativity property, we refer the interested readers to~\cite{Behr2018}.
\end{proof}
\end{thm}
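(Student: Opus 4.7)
The plan is to establish the two algebraic axioms separately: first the unit property of $R_{\emptyset}$, then associativity of $\ast_{\cR_{\bfC}}$.

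For the unit property, I would specialise Lemma~\ref{lem:rTrivMatch} to the case in which one of the two rules is the trivial rule $r_{\emptyset}$. The key combinatorial step is to argue that for any linear rule $r \in \Lin{\bfC}$, the set $\MatchSet{r_{\emptyset}}{r}$ consists of exactly one element, namely the trivial match $\mu_{\emptyset} \equiv \rSpanAlt{\emptyset}{}{\emptyset}{}{O_r}$. This is a consequence of the strictness of the initial object: any admissible match is a span of monomorphisms $\rSpanAlt{\emptyset}{}{M}{}{O_r}$, and the existence of a mono $M \hookrightarrow \emptyset$ forces $M \cong \emptyset$. Next I would inspect the composition diagram~\eqref{eq:RcompSeq} with $r_1 = r_{\emptyset}$: the left pushout and pushout complement squares become trivial (they involve only the strict initial object), so the resulting composite span is isomorphic to $r$ itself. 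By the definition~\eqref{eq:defRAM} this gives $R_{\emptyset} \ast_{\cR_{\bfC}} \delta(r) = \delta(r)$, and an entirely symmetric diagram chase yields $\delta(r) \ast_{\cR_{\bfC}} R_{\emptyset} = \delta(r)$.

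For associativity, the goal is to show, for arbitrary basis vectors $\delta(r_1), \delta(r_2), \delta(r_3)$, the identity
\[
(\delta(r_1) \ast_{\cR_{\bfC}} \delta(r_2)) \ast_{\cR_{\bfC}} \delta(r_3) = \delta(r_1) \ast_{\cR_{\bfC}} (\delta(r_2) \ast_{\cR_{\bfC}} \delta(r_3))\,.
\]
Expanding both sides using~\eqref{eq:defRAM} yields double sums over pairs of admissible matches; the strategy is to exhibit a natural bijection between the two index sets and to check that corresponding summands produce isomorphic composite rules. The natural object to introduce is a single \emph{triple composition diagram} in which $r_1, r_2, r_3$ are glued together via a compatible pair of matches, built out of iterated pushouts along monomorphisms and iterated pushout complements. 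Each of the two bracketings realises this diagram through a different sequence of intermediate constructions: on the left-hand side one first composes $r_1$ with $r_2$ and then matches the result with $r_3$, while on the right-hand side one proceeds in the opposite order.

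The transport of matches between the two orderings rests on three structural properties of finitary adhesive, extensive categories: (i) the pasting/decomposition lemma for pushouts along monomorphisms, (ii) van Kampen squares allowing pushout-pullback interchange, and (iii) uniqueness (up to iso) of pushout complements along monomorphisms. Using (i)--(iii), I would show that every admissible pair $(\mu_{12}, \mu_{(12)3})$ on the left-hand side factors uniquely through a triple-composition diagram which in turn gives rise to a unique admissible pair $(\mu_{23}, \mu_{1(23)})$ on the right-hand side, and conversely; moreover the two resulting composite rules coincide up to the isomorphism equivalence defining $\Lin{\bfC}$. The main obstacle is precisely this diagram chase: tracking all interface objects through two iterated pushout/pushout-complement constructions and certifying at each stage that the required square exists, is a pushout, and is unique up to iso is combinatorially intricate. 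Since the full verification is carried out in~\cite{Behr2018}, my proof would invoke that reference for the technical bookkeeping while explicitly exhibiting the triple-composition diagram as the common refinement that mediates the bijection on matches.
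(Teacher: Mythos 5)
Your proposal takes essentially the same route as the paper: the unit law is obtained by specialising Lemma~\ref{lem:rTrivMatch} to the case where one of the rules is the trivial rule $r_{\emptyset}$, and the associativity proof is deferred to~\cite{Behr2018}. Your explicit remark that strictness of the initial object forces any match $\rSpanAlt{\emptyset}{}{M}{}{O_r}$ to have $M\cong\emptyset$ (so that the sum in~\eqref{eq:defRAM} collapses to the single trivial-match term) is a useful detail the paper leaves implicit, but it does not constitute a different approach.
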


As a first important example for a calculation involving the concept of a rule algebra, consider the following special case (which will provide a running example due to its importance for explaining the relationship between chemical reactions and stochastic rewriting systems):
\begin{exa}\label{exa:HWra1}
    Consider the following special subclass of linear rewriting rules in $\mathbf{FinGraph}$:
    \begin{equation*}
        r_{p,q}:=\rSpanAlt{\bullet^{\uplus\:p}}{}{\emptyset}{}{\bullet^{\uplus\:p}}\,.
    \end{equation*}
    Here, for each $n\in\bZ_{\geq0}$, $\bullet^{\uplus\:n}$ denotes the (isomorphism class of the) \emph{discrete graph with $n$ vertices}. It is an easy exercise to compute (with $*\equiv *_{\cR_{\mathbf{FinGraph}}}$)
    \begin{equation}\label{eq:hwA}
        \delta(r_{p_2,q_2})*\delta(r_{p_1,q_1})=\sum_{k=0}^{min(q_2,p_1)}k!\binom{q_2}{k}\binom{p_1}{k}\delta(r_{p_1+p_2-k,q_1+q_2-k})\,.
    \end{equation}
    In fact, the coefficient of the $k$-th summand expresses precisely the number of possibilities to pair $k$ vertices of the \emph{input interface} $\bullet^{\uplus\:q_2}$ of the rule $r_{p_2,q_2}$ with $k$ vertices of the \emph{output interface} $\bullet^{\uplus\:p_1}$ of the rule $r_{p_1.q_1}$ while disregarding the order of forming the pairs. Let us introduce a special notation for two particular rule algebra elements\footnote{The notation $x^{\dag}$ is chosen due to the fact that $r_{1,0}=r_{0,1}^{\dag}$, in the sense that the span of monomorphisms $r_{1,0}$ is the reverse of the span $r_{0,1}$.} of the form $\delta(r_{p,q})$:
    \begin{equation*}
        x^{\dag}:=\delta(r_{1,0})=\delta\rSpanAlt{\bullet}{}{\emptyset}{}{\emptyset}\,,\; x:=\delta(r_{0,1})=\delta\rSpanAlt{\emptyset}{}{\emptyset}{}{\bullet}
    \end{equation*}
    Utilizing the formula~\eqref{eq:hwA} repeatedly, one may derive the following set of important results~\cite{Behr2018}:
    \begin{subequations}
    \begin{align}
        x^{\dag\:*p}*x^{\:*q}&:=\overset{\text{$p$ times}}{x^{\dag}*\dotsc x^{\dag}}*\overset{\text{$q$ times}}{x^{\vphantom{\dag}}*\dotsc x}=\delta(r_{p,q})\label{eq:raHW1}\\
        [x,x^{\dag}]&=x*x^{\dag}-x^{\dag}*x=\delta(r_{0,0})=\delta\rSpanAlt{\emptyset}{}{\emptyset}{}{\emptyset}\label{eq:raHW2}\,.
    \end{align}
    \end{subequations}
    Equation~\eqref{eq:raHW1} has a simple explanation from the viewpoint of DPO rewriting: in this particular order of composition ($q$ vertex deletion rules followed by $p$ vertex creation rules), one does not find any other admissible matches of rules in each of the rule compositions than the trivial match, whence the formula directly follows from the calculation presented in Lemma~\ref{lem:rTrivMatch}. Combining~\eqref{eq:hwA} and~\eqref{eq:raHW1}, the rule algebra elements $x^{\dag}$, $x$ and $R_{\emptyset}=\delta(r_{0,0})$ thus \emph{generate} a subalgebra of $\cR_{\mathbf{FinGraph}}$ with basis elements $\delta(r_{p,q})$ (in the sense that $\delta(r_{p,q})=x^{\dag\:*p}*x^{\:*q}$). Finally, the \emph{canonical commutation relation}~\eqref{eq:raHW2} expresses the fact that $x^{\dag}$ and $x$ do not commute.
\end{exa}

The fact of $\cR_{\bfC}$ being a unital associative algebra entails that one may define a natural notion of \emph{representation} for $\cR_{\bfC}$, which finally re-establishes the contact of the rule algebra formalism with the traditional notion of DPO rewriting:

\begin{defi}[Canonical representations; cf.~\cite{Behr2018}, Def.~4.4]\label{def:canRep}
	For a finitary adhesive, extensive category $\bfC$, let $\cR_{\bfC}$ be the associated DPO rule algebra. Define $\hat{\bfC}$ as the \emph{free $\bK$-vector space with basis vectors $\ket{X}$ indexed by isomorphism classes of objects of $\bfC$},
\begin{equation*}
	\hat{\bfC}:=span_{\bK}\left(\left\{\ket{X}\right\}_{X\in \obj{\bfC}_{\cong}}\right)\,,
\end{equation*}
and denote by $End_{\bK}(\hat{\bfC})$ the space of $\bK$-linear endomorphisms of $\hat{\bfC}$. Then the \emph{canonical representation} $\rho_{\bfC}\equiv \rho_{\cR_{\bfC}}$ is defined as
\begin{equation}\label{eq:defCanRep}
\begin{aligned}
	\rho_{\bfC}&: \cR_{\bfC}\rightarrow End_{\bK}(\hat{\bfC}):\\
	&\quad \forall r\in \cR_{\bfC}, X\in \obj{\bfC}:\quad \rho_{\bfC}(\delta(r))\ket{X}:=\sum_{m\in \MatchSet{r}{X}} \ket{r_m(X)}\,,
\end{aligned}
\end{equation}
extended to arbitrary elements of $\cR_{\bfC}$ and of $\hat{\bfC}$ by linearity (and with notations $\MatchSet{r}{X}$ and $r_m(X)$ as introduced in Definition~\ref{def:DPOR}). 
\end{defi}

\begin{thm}[cf.~\cite{Behr2018}, Thm.~4.5]\label{thm:canrep}
	The canonical representation $\rho_{\bfC}$ of the DPO rule algebra $\cR_{\bfC}$ over some finitary adhesive, extensive category $\bfC$ is a \emph{homomorphism of unital associative algebras}, and thus indeed constitutes a well-posed representation of $\cR_{\bfC}$.
\begin{proof}
	The claim entails that $\rho_{\bfC}$ maps the unit element of $\cR_{\bfC}$ to the unit element of $End_{\bK}(\hat{\bfC})$, and that for all $r_1,r_2\in \Lin{\bfC}$,
	\[
		\rho_{\bfC}(\delta(r_1))\rho_{\bfC}(\delta(r_2))=\rho_{\bfC}(\delta(r_1\ast_{\cR_{\bfC}}r_2))\,.
	\]
	The first part of the claim follows directly from Lemma~\ref{lem:trivRule} and from the Definition of $\rho_{\bfC}$ as specified in~\eqref{eq:defCanRep}. We refer the interested readers to~\cite{Behr2018} for the proof of the second part of the claim (which involves a variant of the so-called concurrency theorem from the traditional rewriting literature).
\end{proof}
\end{thm}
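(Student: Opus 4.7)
The plan is to verify the two defining properties of an algebra homomorphism separately. The first, preservation of the unit, is essentially immediate from the setup, while the second, multiplicativity, will rely on a DPO concurrency theorem that establishes a match-counting bijection.

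For unit preservation, I would unfold the definition~\eqref{eq:defCanRep} applied to $R_\emptyset = \delta(r_\emptyset)$ and invoke Lemma~\ref{lem:trivRule}: for every $X \in \obj{\bfC}$ there is a unique admissible match $m_X \in \MatchSet{r_\emptyset}{X}$, and the resulting object $(r_\emptyset)_{m_X}(X)$ is isomorphic to $X$ (as displayed in diagram~\eqref{eq:DPOddTriv}). Hence $\rho_{\bfC}(R_\emptyset)\ket{X} = \ket{X}$ on every basis vector, so $\rho_{\bfC}(R_\emptyset)$ is the identity endomorphism of $\hat{\bfC}$.

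For multiplicativity, the plan is to fix $r_1,r_2 \in \Lin{\bfC}$ and a basis vector $\ket{X} \in \hat{\bfC}$, then expand both sides of the desired identity
\[
\rho_{\bfC}(\delta(r_1) \ast_{\cR_{\bfC}} \delta(r_2))\ket{X} \;=\; \rho_{\bfC}(\delta(r_1))\,\rho_{\bfC}(\delta(r_2))\ket{X}
\]
using Definitions~\ref{def:DPOra} and~\ref{def:canRep}. The left-hand side becomes a double sum indexed by admissible matches $\mu \in \MatchSet{r_1}{r_2}$ of $r_1$ into $r_2$, and then admissible matches $n \in \MatchSet{\comp{r_1}{\mu}{r_2}}{X}$ of the composite rule into $X$; the right-hand side becomes a double sum indexed by $m_2 \in \MatchSet{r_2}{X}$ followed by $m_1 \in \MatchSet{r_1}{r_{2,m_2}(X)}$. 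The required claim therefore reduces to exhibiting, for each $X$, a bijection
\[
\Phi_X \from \bigsqcup_{m_2 \in \MatchSet{r_2}{X}} \MatchSet{r_1}{r_{2,m_2}(X)} \;\xrightarrow{\;\sim\;}\; \bigsqcup_{\mu \in \MatchSet{r_1}{r_2}} \MatchSet{\comp{r_1}{\mu}{r_2}}{X}
\]
such that whenever $\Phi_X(m_2,m_1) = (\mu,n)$ one has $r_{1,m_1}(r_{2,m_2}(X)) \cong (\comp{r_1}{\mu}{r_2})_n(X)$.

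The main obstacle is precisely this bijection: it is the content of the DPO \emph{concurrency theorem}, whose construction proceeds by pasting and decomposing pushout squares along the cospan arising from the match $m_2$ and the output leg of $r_2$. Concretely, one direction of $\Phi_X$ takes a two-step match $(m_2,m_1)$, forms the pushout of the relevant monomorphisms to recover a span $\mu \in \MatchSet{r_1}{r_2}$ together with an induced match $n$ of the composite into $X$; the inverse direction reads off $(m_2,m_1)$ from a composite match $(\mu,n)$ by successively taking pushout complements. Both directions rely critically on uniqueness of pushout complements in the adhesive, extensive setting and on the pushout--pullback decomposition lemma (cited in the proof of Lemma~\ref{lem:rTrivMatch}), together with the universal properties guaranteeing that the two-step and one-step rewrites produce isomorphic results. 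Since the full diagram chase is long and is already carried out in~\cite{Behr2018}, my proof would state the concurrency bijection as the central lemma, sketch the pushout pasting that defines $\Phi_X$ and $\Phi_X^{-1}$, and then conclude by matching terms in the two double sums.
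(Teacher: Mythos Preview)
Your proposal is correct and follows essentially the same approach as the paper: both split the claim into unit preservation (handled via Lemma~\ref{lem:trivRule} and the definition of $\rho_{\bfC}$) and multiplicativity (reduced to a concurrency-type bijection and deferred to~\cite{Behr2018}). You simply spell out the shape of the concurrency bijection in more detail than the paper does, which is fine.
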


\begin{exa}[Ex.~\ref{exa:HWra1} continued]\label{exa:HWra2}
    Specializing the rule algebra representation $\rho=\rho_{\cR_{\mathbf{FinGraph}}}$ to the case of the subalgebra of $\cR_{\mathbf{FinGraph}}$ generated by $x^{\dag}$, $x$ and $R_{\emptyset}$ as defined in Example~\ref{exa:HWra1}, it is straightforward to verify that (with $\mathbb{1}:=\mathbb{1}_{\mathbf{FinGraph}}$)
    \begin{equation*}
        \rho(R_{\varnothing})\ket{X}=\mathbb{1}\ket{X}\,.
    \end{equation*}
    In order to provide an interpretation for the linear operators $a^{\dag}:=\rho(x^{\dag})$ and $a:=\rho(x)$, let us introduce the notation
    \begin{equation*}
        \ket{n}:=\ket{\bullet^{\uplus\:n}}\qquad (n\in \bZ_{\geq0})
    \end{equation*} 
    for the basis vector associated to the isomorphism class of a \emph{discrete} (i.e.\ edgeless) graph with $n$ vertices (with $\ket{0}:=\ket{\emptyset}$). Invoking the definition of $\rho$ according to~\eqref{eq:defCanRep}, we find the following identities:
    \begin{equation}\label{eq:HWcanrepRA}
        a^{\dag}\ket{n}=\ket{n+1}\,,\; a\ket{0}=0\,,\;\forall n>0:\; a\ket{n}=n\ket{n-1}\,.
    \end{equation}
    Note that these formulae precisely coincide with those in~\eqref{eq:HWcanrepClassical}. Finally, we illustrate the typical usage scenario for the representation property described in the proof of Theorem~\ref{thm:canrep} in order to map the commutation relation~\eqref{eq:raHW2} of Example~\ref{exa:HWra1} into the following statement:
    \begin{equation*}
    \begin{aligned}
        \rho\left([x,x^{\dag}]\right)&=\rho(x*x^{\dag}-x^{\dag}*x)=\rho(x)\rho(x^{\dag})-\rho(x^{\dag})\rho(x)=aa^{\dag}-a^{\dag}a=\mathbb{1}\,.
    \end{aligned}
    \end{equation*}
    Since this is precisely the \emph{canonical commutation relation}~\eqref{eq;HWccr}, we find that on the subspace of $\widehat{\mathbf{FinGraph}}$ spanned by the discrete graph states $\ket{n}$, the operators $a^{\dag}$ and $a$ coincide with the classical generators of the so-called \emph{Heisenberg-Weyl algebra}~\cite{Behr2018}. However, in contrast to the classical formulation, our new formulation permits to also act on \emph{non-discrete} graph states. For example, one finds that
    \begin{equation}\label{eq:raNOF}
        a^{\dag}\ket{\TwoVertDirEdgeG[]}=\ket{\OneVertG[]\;\TwoVertDirEdgeG[]}\,,\;
        a\ket{\TwoVertDirEdgeG[]}=0\,.
    \end{equation}
    In other words, $a^{\dag}$ implements the addition of a disjoint vertex to a graph, while $a$ implements the deletion of a vertex, with the second equation above expressing the fact that it is not possible in DPO semantics to delete vertices with incident edges (unless the incident edges are explicitly deleted as well via the choice of a suitable rule). Finally, utilizing the results of Example~\ref{exa:HWra1}, we may derive the important identity
    \begin{equation*}
        a^{\dag\:p}a^{\:q}=\rho(x^{\dag\:*p}*x^{\:*q})=\rho\left(\delta\rSpanAlt{\bullet^{\uplus\:p}}{}{\emptyset}{}{\bullet^{\uplus\:q}}\right)\,.
    \end{equation*}
    This identity provides a one-to-one correspondence between $a^{\dag\:p}a^{\:q}$ (sometimes referred to as a term in \emph{normal-ordered form}, i.e.\ with all occurrences of $a^{\dag}$ to the left) and the representation of the rule algebra element $\delta(r_{p,q})$.
\end{exa}

For later convenience, we introduce the following (commutative) operation motivated from the results of Lemma~\ref{lem:rTrivMatch}:
\begin{defi}\label{def:sup}
	Let $r_1,r_2\in \Lin{\bfC}$ be two linear rules. Then we define the \emph{induced superposition operation} $\squplus$ as
	\begin{equation*}
		\rho_{\bfC}(\delta(r_1))\squplus \rho_{\bfC}(\delta(r_2)):=\rho_{\bfC}\left(\delta\left(
		\comp{r_1}{\mu_{\emptyset}}{r_2}
		\right)\right)\,,
	\end{equation*}
with $\mu_{\emptyset}\in \MatchSet{r_1}{r_2}$ denoting the trivial match of $r_1$ into $r_2$.
\end{defi}
It is important to note that the operation $\squplus$ thus defined does \emph{not} coincide with the operation $\oplus$ of superposition of linear operators, which would entail a splitting of the basis of the underlying vector space into two disjoint invariant subspaces. Instead, the operation $\squplus$ lifts the sequentially independent application of linear rules to the level of representations.\\

Finally, we note the following property of canonical representations, which poses an important prerequisite for constructing stochastic rewriting systems:

\begin{lem}[compare \cite{bdg2016}, Lemma~30 and \cite{Behr2018}, proof of Thm.~7.3]
\label{lem:crFin}
For every finitary adhesive, extensive category $\bfC$, the canonical representation $\rho_{\bfC}$ of the DPO rule algebra $\cR_{\bfC}$ over $\bfC$ ranges in \emph{column- and row-finite} linear operators on $\hat{\bfC}$, whence: 
\begin{itemize}[label=$\triangleright$]
\item For an arbitrary rule algebra basis element $R\equiv\delta(r)\in \cR_{\bfC}$ (with $r\in \Lin{\bfC}$ a linear rule) and an arbitrary pure state $\ket{X}\in \hat{\bfC}$, $\rho_{\bfC}(R)\ket{X}$ yields a \emph{finite} linear combination of states.
\item Moreover, for a given pure state $\ket{Y}\in \hat{\bfC}$ and for a fixed element $R\equiv\delta(r)\in \cR_{\bfC}$, there exist only finitely many pure states $\ket{X}\in \hat{\bfC}$ such that $\ket{Y}$ is a summand in $\rho_{\bfC}(R)\ket{X}$.
\end{itemize}
\begin{proof}
	The claim follows directly from the fact that for any given linear rule 
	\[
		r\equiv\rSpan{O}{o}{K}{i}{I}\in \Lin{\bfC}\,,
	\]
	there exist only finitely many monomorphisms $m:I\hookrightarrow X$ into any given object $X\in \obj{\bfC}$, and whence only finitely many admissible matches (cf.\ Definition~\ref{def:DPOR}). Conversely, the set of objects $X\in \obj{\bfC}$ that rewrite into a given object $Y\in \obj{\bfC}$ may be determined (by virtue of the symmetry involved in Definition~\ref{def:DPOR}) via computing the set of admissible matches for the application of the reverse rule $\bar{r}\equiv\rSpan{I}{i}{K}{o}{O}\in \Lin{\bfC}$, which is thus again a finite set.
\end{proof}
\end{lem}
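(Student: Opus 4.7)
The plan is to derive both finiteness statements from the single observation that, in a finitary adhesive, extensive category, the set of monomorphisms from a fixed object $I$ into any given object $X$ is finite (up to the isomorphism classes involved in the definitions). Both bullets then reduce to counting admissible matches, which are subsets of such hom-sets of monomorphisms.

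For the first bullet (column-finiteness), I would unfold the definition of $\rho_{\bfC}(\delta(r))\ket{X}$ from equation~\eqref{eq:defCanRep}, which expresses it as $\sum_{m \in \MatchSet{r}{X}} \ket{r_m(X)}$. By Definition~\ref{def:DPOR}, every admissible match $m$ is in particular a monomorphism $I\hookrightarrow X$ (where $I$ is the input interface of $r$), so $\MatchSet{r}{X}$ embeds into the set of monomorphisms $\mono{\bfC}(I,X)$. Finitarity of $\bfC$ then bounds $|\MatchSet{r}{X}|<\infty$, so the indexed sum is finite, establishing column-finiteness.

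For the second bullet (row-finiteness), the idea is to exploit the symmetry intrinsic to the DPO diagram~\eqref{eq:DPOdd}: if $\ket{Y}$ appears as a summand in $\rho_{\bfC}(\delta(r))\ket{X}$, then by definition there is some $m\in \MatchSet{r}{X}$ with $Y\cong r_m(X)$, so the diagram~\eqref{eq:DPOdd} for $(r,m,X)$ can be read ``from right to left'' to yield a pushout-pushout diagram for the \emph{reverse rule} $\bar{r}\equiv\rSpan{I}{i}{K}{o}{O}\in \Lin{\bfC}$ applied to $Y$ via some admissible match $\bar{m}\in \MatchSet{\bar{r}}{Y}$, with $\bar{r}_{\bar{m}}(Y)\cong X$. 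Uniqueness of pushout complements in adhesive, extensive categories ensures that the map $(X,m)\mapsto (Y,\bar{m})$ is injective up to the relevant isomorphisms, so the isomorphism classes of sources $X$ rewriting into $Y$ via $r$ inject into $\MatchSet{\bar{r}}{Y}$. Applying the argument of the first bullet to $\bar{r}$ and $Y$ then yields that this set is finite, completing the proof.

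The main obstacle I foresee is not a deep technical difficulty but rather the bookkeeping of isomorphism classes: one must be careful that the ``reverse diagram'' construction descends correctly to isomorphism classes of rules, objects, and matches, which ultimately rests on the uniqueness (up to isomorphism) of pushout complements in an adhesive category. Once this symmetry of DPO rewriting under rule reversal is properly formalized, both statements follow from a single finitarity count.
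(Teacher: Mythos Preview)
Your proposal is correct and follows essentially the same approach as the paper: both bullets are reduced to the finiteness of admissible matches via finitarity of $\bfC$, with row-finiteness handled by passing to the reverse rule $\bar{r}$ and invoking the symmetry of the DPO diagram. Your write-up is in fact more careful than the paper's in spelling out the injectivity of $(X,m)\mapsto(Y,\bar{m})$ via uniqueness of pushout complements, but the underlying argument is the same.
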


\subsection{Stochastic mechanics for DPO rewriting systems}
\label{sec:SMDPOR}

Taking advantage of the DPO-type rule algebra framework, it is possible~\cite{bdg2016,Behr2018} to define the notion of continuous-time Markov chains (CTMCs) based on DPO rewriting rules in a fashion compatible with the general notions of stochastic mechanics~\cite{bdp2017}. We briefly recall the relevant material from~\cite{bdg2016} (see also~\cite{norris} for the general theory of CTMCs for comparison):
\begin{defi}
Let $\bfC$ be a finitary adhesive, extensive category whose set of isomorphism classes of objects $\obj{\bfC}_{\cong}:=\obj{\bfC}\diagup\cong$ is countable. The \emph{\textbf{state space}} $\cS_{\bfC}$ is defined as the Fr\'{e}chet space of real sequences $f=(f_X)_{X\in \obj{\bfC}_{\cong}}$ indexed by basis vectors $\ket{X}\in \hat{\bfC}$ (so-called \emph{pure states}) and with semi-norms $\|f\|_X:=|f_X|$,
\begin{equation*}
\cS_{\bfC}:=\left(\bR^{\hat{\bfC}},\left\{\|.\|_X\right\}_{X\in\obj{\bfC}_{\cong}}\right)\,.
\end{equation*}
The convex subset $\Prob(\bfC)\subset \cS_{\bfC}$ of \textbf{\emph{subprobability distributions over $\bfC$}}  is defined as
\begin{equation*}
\Prob(\bfC):=\left\{\left.\ket{\Psi}=\sum_{X\in \obj{\bfC}_{\cong}}\psi_X\ket{X}\in \cS_{\bfC}\right\vert \sum_{X\in \obj{\bfC}_{\cong}}\psi_X\leq 1\land \forall X\in \obj{\bfC}_{\cong}: \psi_X\in \bR_{\geq 0}\right\}\,.
\end{equation*}
The space of \emph{substochastic operators} $\Stoch(\cS_{\bfC})\subsetneq \End(\cS_{\bfC})$ is defined as the subset of the space of endomorphisms $\End(\cS_{\bfC})$ that leave $\Prob(\bfC)$ invariant. A linear operator $H\in \End(\cS_{\bfC})$ is called \emph{Hamiltonian} or \emph{infinitesimal stochastic operator} or conservative stable $Q$-matrix~\cite{wjanderson} if for $H\equiv(h_{X,Y})_{X,Y\in \obj{\bfC}_{\cong}}$,
\begin{equation}\label{eq:Hprops}
\begin{aligned}
(i)& & \forall X\in \obj{\bfC}_{\cong}&:\quad h_{X,X}\leq 0\\
(ii) && \forall X,Y\in \obj{\bfC}_{\cong}&: X\neq Y\Rightarrow h_{x,y}\geq 0\\
(iii)& & \forall Y\in \obj{\bfC}_{\cong}&: \sum_{X\in \obj{\bfC}_{\cong}}h_{X,Y}=0\,.
\end{aligned}
\end{equation}
Given a Hamiltonian $H\in \End(\cS_{\bfC})$, $H$ gives rise to an \emph{\textbf{evolution semi-group}} $\cE:[0,\infty)\rightarrow \Stoch(\cS_{\bfC})$ (i.e.\ $\cE(t)\cE(t')=\cE(t+t')$ for all $t,t'\in \bR_{\geq0}$) in the form of the point-wise minimal non-negative solution of the so-called \emph{Kolmogorov backwards equation}
\begin{equation*}
\frac{d}{dt}\cE(t)=H\cE(t)\,,\quad \cE(0)=\mathbb{1}_{\End(\cS_{\bfC})}\,.
\end{equation*}
The data of a Hamiltonian $H\in \End(\cS_{\bfC})$ and an \emph{initial state} $\ket{\Psi(0)}\in \Prob(\cS_{\bfC})$ determines the time-evolution of the corresponding \textbf{\emph{continuous-time Markov chain (CTMC)}} as~\cite{norris}
\begin{equation*}
\ket{\Psi(t)}=\cE(t)\ket{\Psi(0)}\,.
\end{equation*}
\end{defi}
While it appears to be an open problem of classifying those finitary adhesive, extensive categories that have the additional property that their class of isomorphism classes is a countable set, many categories of practical interest satisfy this property. Typical examples include $\mathbf{FinSet}$ (the category of finite sets and total set functions) and $\mathbf{FinGraph}$ (the category of finite directed multigraphs).\\ 

It then remains to determine how to construct Hamiltonians from rule algebra elements that encode linear DPO rewriting rules. It is important to note the following result from~\cite{bdg2016}:
\begin{lem}[compare \cite{bdg2016}, Lemmas~30 and~31 and \cite{Behr2018}, proof of Thm.~7.3]
The representation $\rho_{\bfC}(r)$ of an arbitrary rule algebra element $r\in \cR_{\bfC}$ furnishes an element of $\End(\cS_{\bfC})$.
\end{lem}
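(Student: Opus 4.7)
The plan is to reduce by linearity to the case of basis elements $R = \delta(r)$ with $r \in \Lin{\bfC}$, and then extend the definition~\eqref{eq:defCanRep} from $\hat{\bfC}$ to the full Fr\'echet space $\cS_{\bfC}$. For each pair $X, Y \in \obj{\bfC}_{\cong}$, write
\[
M_{Y,X} := \bigl|\{m \in \MatchSet{r}{X} : r_m(X) \cong Y\}\bigr|
\]
for the multiplicity with which $\ket{Y}$ appears in $\rho_{\bfC}(R)\ket{X}$ according to~\eqref{eq:defCanRep}. For any sequence $f = (f_X)_{X \in \obj{\bfC}_{\cong}} \in \cS_{\bfC}$, I would \emph{define}
\[
(\rho_{\bfC}(R)\, f)_Y := \sum_{X \in \obj{\bfC}_{\cong}} M_{Y,X}\, f_X \qquad (Y \in \obj{\bfC}_{\cong}),
\]
and the first task is to check that this is well-posed.

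Well-posedness is immediate from the row-finiteness statement in Lemma~\ref{lem:crFin}: for every fixed $Y$, the set $S_Y := \{X : M_{Y,X} \neq 0\}$ is finite, so the defining sum has only finitely many non-zero summands and yields an honest real number. The resulting sequence lies in $\bR^{\hat{\bfC}} = \cS_{\bfC}$ and the assignment $\rho_{\bfC}(R) : \cS_{\bfC} \to \cS_{\bfC}$ is manifestly $\bR$-linear; restricted to pure states $\ket{X}$ (viewed as sequences with a single non-zero entry equal to $1$) it reproduces~\eqref{eq:defCanRep}, so the extension is genuinely compatible with the canonical representation already defined on $\hat{\bfC}$.

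Next I would verify continuity in the Fr\'echet topology of $\cS_{\bfC}$ induced by the seminorms $\|\cdot\|_X$. For each $Y$ one has
\[
\|\rho_{\bfC}(R)\, f\|_Y = \Bigl|\sum_{X \in S_Y} M_{Y,X}\, f_X\Bigr| \;\leq\; \sum_{X \in S_Y} M_{Y,X}\, \|f\|_X,
\]
which bounds the output seminorm by a \emph{finite} $\bR_{\geq 0}$-linear combination of input seminorms --- precisely the standard criterion for continuity of a linear map on a Fr\'echet space presented by a countable family of seminorms. Passing from basis elements $\delta(r)$ to arbitrary $r = \sum_{i} \alpha_i \delta(r_i) \in \cR_{\bfC}$ (a finite $\bK$-linear combination) by linearity then completes the argument.

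The only slightly subtle point is noticing that it is \emph{row}-finiteness of $\rho_{\bfC}(R)$ --- not column-finiteness --- that powers the extension from $\hat{\bfC}$ to $\cS_{\bfC}$, since column-finiteness only guarantees finiteness of $\rho_{\bfC}(R)\ket{X}$ for a single basis vector, whereas evaluating $\rho_{\bfC}(R)$ on an arbitrary (possibly infinitely supported) sequence $f$ requires that each output coordinate be a finite sum. Once this observation is made, the remainder of the argument is direct bookkeeping and the lemma reduces to an essentially immediate application of Lemma~\ref{lem:crFin}.
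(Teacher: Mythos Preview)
Your proposal is correct and follows the same route as the paper: both reduce the claim to Lemma~\ref{lem:crFin}. The paper's own proof is a single sentence (``$\rho_{\bfC}(r)$ is both a column- and a row-finite operator, from which the claim follows''), whereas you actually unpack that implication --- correctly isolating row-finiteness as the ingredient that makes the extension from $\hat{\bfC}$ to arbitrary sequences in $\cS_{\bfC}$ well-defined, and supplying the seminorm estimate for Fr\'echet continuity that the paper leaves implicit.
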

\begin{proof}
According to Lemma~\ref{lem:crFin}, $\rho_{\bfC}(r)$ is both a column- and a row-finite operator, from which the claim follows.
\end{proof}

\begin{defi}
The \emph{projection} $\bra{}$ is defined as the linear operation\footnote{We will consider $\bra{}$ throughout this paper as a partial linear map on $\cS_{\bfC}$.} 
\begin{equation*}
\bra{}: \ell^1_{\bR}(\hat{\bfC})\rightarrow \bR: \sum_{X\in \obj{\bfC}_{\cong}}\psi_X\ket{X}\mapsto \sum_{X\in \obj{\bfC}_{\cong}}\psi_X\,.
\end{equation*}
\end{defi}
\begin{lem}[compare~\cite{bdg2016}, Lemma~32]\label{lem:Haux}
For every Hamiltonian $H\in \End(\cS_{\bfC})$, $\bra{}H = 0$.
\begin{proof}
By definition of Hamiltonians, i.e.\ $\bra{}H\ket{X}=\sum_{Y\in \obj{\bfC}_{\cong}}h_{Y,X}=0_{\bR}$ for all $X\in \obj{\bfC}_{\cong}$.
\end{proof}
\end{lem}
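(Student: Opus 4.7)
The plan is to verify the identity $\bra{}H=0$ pointwise on the basis $\{\ket{X}\}_{X\in\obj{\bfC}_{\cong}}$ and then invoke linearity, since $\bra{}$ is defined as a linear operation on $\ell^1_{\bR}(\hat{\bfC})$.

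First I would fix an arbitrary isomorphism class $Y\in\obj{\bfC}_{\cong}$ and expand $H\ket{Y}$ in the given basis using the matrix-element description $H\equiv (h_{X,Y})_{X,Y\in\obj{\bfC}_{\cong}}$, so that
\[
H\ket{Y}=\sum_{X\in\obj{\bfC}_{\cong}}h_{X,Y}\ket{X}.
\]
Before applying $\bra{}$, I would check that this sum lies in the domain of $\bra{}$: since $H$ is assumed to be a Hamiltonian in $\End(\cS_{\bfC})$ and in the settings of interest arises from a (finite combination of) canonical representations of rule algebra elements, it is column-finite (cf.\ Lemma~\ref{lem:crFin}), so only finitely many $h_{X,Y}$ are nonzero for fixed $Y$, and hence $H\ket{Y}\in \ell^1_{\bR}(\hat{\bfC})$.

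Then, applying the definition of $\bra{}$ as the linear functional that sums the coefficients in the basis expansion, I obtain
\[
\bra{}H\ket{Y}=\sum_{X\in\obj{\bfC}_{\cong}}h_{X,Y}=0,
\]
where the last equality is exactly property (iii) of Definition of Hamiltonians in~\eqref{eq:Hprops}. Since $\bra{}H$ vanishes on every basis vector and both $\bra{}$ and $H$ are linear, $\bra{}H=0$ holds on the entire subspace of $\cS_{\bfC}$ on which the composition is defined.

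There is no real obstacle here; the statement is essentially a direct reformulation of the column-sum-zero condition on $H$, and the only mild subtlety is ensuring that $\bra{}$ can be applied to $H\ket{Y}$, which is guaranteed by the column-finiteness inherited from the canonical representation. I would therefore present the proof as a short computation on basis vectors followed by an appeal to linearity.
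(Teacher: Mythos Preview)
Your proof is correct and follows essentially the same approach as the paper: both verify $\bra{}H\ket{X}=\sum_{Y}h_{Y,X}=0$ on basis vectors using property~(iii) of the Hamiltonian definition. Your version is slightly more detailed in justifying why $H\ket{Y}$ lies in the domain of $\bra{}$, but the core argument is identical.
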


The main objective in the study of stochastic rewriting systems consists in the following scenario: given a set of linear rules and of base frequencies\footnote{The SI unit of the rates is $[s^{-1}]$, yet since rates are always multiplied by the time parameter $t$ (whose SI unit is $[s]$), we will omit the units for brevity.} for each rule (also referred to as \emph{base rates}), one would like to determine the stochastic dynamics of \emph{observable quantities} of the system. For rewriting systems, such quantities are so-called \emph{observables}, whose precise implementation will be specified momentarily. According to the (minimal process) CTMC semantics, the system state is a time-dependent subprobability distribution over all possible configurations of the dynamical system. The quantities of interest are the (time-dependent) \emph{statistical moments} of the observables with respect to this time-dependent subprobability. (The CTMC semantics is in general a subprobability and not a probability because the chain can `explode' in finite time depending on $H$.)
\begin{prop}[\cite{bdg2016}, p.~7f and~\cite{Behr2018}, Thm.~7.3]\label{prop:Obs}
The space $\cO_{\bfC}\subset \End_{\bR}(\hat{\bfC})$ of linear operators that are \emph{diagonal} in the basis of pure states $\ket{X}\in \hat{\bfC}$ and that originate from finite linear combinations of operators that are representations of rule algebra elements is spanned by so-called \emph{(DPO)-type observables}, 
which are operators of the form
\begin{equation*}
O_P^k:=\rho_{\bfC}\left(\delta\left(o_P^k\right)\right)\,,
\quad o_P^k:=\rSpan{P}{k}{K}{k}{P}\in \Lin{\bfC}\,.
\end{equation*}
Their action on pure states $\ket{X}\in \hat{\bfC}$ is given by
\begin{equation*}
O_P^k\ket{X}=\omega_P^k(X)\ket{X}\,,\quad \omega_P^k(X)=|\MatchSet{o_P^k}{X}|\,,
\end{equation*}
with $\omega_M^k(X)\in \bZ_{\geq0}$ the \emph{number of admissible matches of the rule $o_P^k\in \Lin{\bfC}$ in $X\in \obj{\bfC}$}.
\begin{proof}
	The claim follows directly by observing that according to Definition~\ref{def:DPOR}, applying a linear rule of the form $o_P^k$ as above to an object $X$ in DPO rewriting, the diagram constructed according to~\eqref{eq:DPOdd} is \emph{symmetric}, which entails that for every admissible match $m:P\hookrightarrow X$, $(o_P^k)_m(X)=X$. 
\end{proof}
\end{prop}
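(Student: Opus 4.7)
My plan is to establish the explicit action formula first, which immediately yields diagonality and hence membership of each $O_P^k$ in $\cO_{\bfC}$, and then to argue the spanning direction by characterizing which rule algebra basis elements yield diagonal representations.

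Computing $O_P^k\ket{X}$ directly via \eqref{eq:defCanRep} gives
\begin{equation*}
O_P^k\ket{X} \;=\; \sum_{m \in \MatchSet{o_P^k}{X}} \ket{(o_P^k)_m(X)}\,.
\end{equation*}
The key observation is that $(o_P^k)_m(X) \cong X$ for every admissible match $m : P \hookrightarrow X$. To see this, I would draw the DPO direct-derivation diagram \eqref{eq:DPOdd} for the rule $o_P^k = \rSpanAlt{P}{k}{K}{k}{P}$. Because both legs coincide with the monomorphism $k : K \hookrightarrow P$, the diagram is mirror-symmetric about its central vertical axis: the pushout complement on the right yields a monomorphism $K' \hookrightarrow X$, and its mirror image on the left, taking $r_m(X) := X$ with structure maps $m : P \hookrightarrow X$ and the same $K' \hookrightarrow X$, is automatically a pushout by the symmetric universal property. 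Uniqueness of pushouts up to isomorphism then gives $(o_P^k)_m(X) \cong X$, whence
\begin{equation*}
O_P^k\ket{X} \;=\; |\MatchSet{o_P^k}{X}|\,\ket{X} \;=\; \omega_P^k(X)\,\ket{X}\,,
\end{equation*}
which is manifestly diagonal.

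For the spanning direction, I would establish the converse characterization: a rule algebra basis element $\delta(r)$ with $r = \rSpan{O}{o}{K}{i}{I}$ has a diagonal representation only when $r \cong o_P^k$ for some $P, K, k$. Testing $r_m(X) \cong X$ on $X := I$ with the identity match forces $O \cong I$; and then uniqueness of pushout complements in the adhesive setting forces the two legs to coincide up to the induced isomorphism, reducing $r$ to observable form. Given this characterization, any $O = \sum_i c_i \,\rho_{\bfC}(\delta(r_i)) \in \cO_{\bfC}$ splits according to whether each $r_i$ is of observable form, and the non-observable remainder must vanish because its off-diagonal matrix entries cannot cancel between distinct rule-isomorphism classes (using the row- and column-finiteness from Lemma~\ref{lem:crFin}).

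The main obstacle is this final cancellation step: ruling out nontrivial cross-cancellations between representations of distinct non-observable rules conspiring to yield a purely diagonal operator. I would handle this by fixing a source object $X$ and comparing the multisets $\{\,(o_P^k)_m(X) : m \in \MatchSet{r_i}{X}\,\}$ indexed by $i$, arguing that admissible matches of non-isomorphic rules into a common source produce distinguishable target contributions in the matrix entries $\bra{Y}\rho_{\bfC}(\delta(r_i))\ket{X}$, so that diagonal contributions can only come individually from $o_P^k$-type rules.
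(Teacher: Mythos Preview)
Your first part—the computation of $O_P^k\ket{X}$ via the mirror symmetry of the DPO diagram for a rule whose two legs coincide—is exactly the argument the paper gives, only with the details of the symmetry spelled out. The paper's entire in-line proof is that single observation; it does not prove the spanning direction here at all but defers it to the cited references.

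Your attempt at the spanning direction is therefore more ambitious than what the paper actually does, and it has a real gap. The characterization of which \emph{single} basis elements $\delta(r)$ represent diagonally is plausible (though your phrase ``uniqueness of pushout complements forces the two legs to coincide'' needs more than the test on $X=I$: one must probe with enough objects $X$ to separate the two legs, as in the asymmetric-pattern examples). The serious problem is the cancellation step. Your proposed fix—that matches of non-isomorphic rules into a common source $X$ yield ``distinguishable target contributions''—is false as stated: distinct rules can produce the same matrix entry $\bra{Y}\rho_{\bfC}(\delta(r_i))\ket{X}$ on many pairs $(X,Y)$, so off-diagonal cancellations between different $r_i$ cannot be ruled out object-by-object. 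What one actually needs is a linear-independence statement for the family $\{\rho_{\bfC}(\delta(r))\}_r$ modulo the diagonal subspace, and that requires a global argument (exploiting that $\hat{\bfC}$ contains enough objects to separate rules, e.g.\ by considering the action on disjoint unions $I\uplus Z$ for varying $Z$), not the local matrix-entry comparison you sketch.

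In short: the part of your proof that overlaps with the paper's proof is correct and identical in spirit; the additional spanning argument is not in the paper and, as you yourself flag, is incomplete—the obstacle you name is genuine and your proposed resolution does not close it.
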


\begin{exa}
Consider the finitary adhesive, extensive category $\mathbf{FinGraph}$ of \emph{finite directed multigraphs}. In order to illustrate the precise nature of observables as described in Proposition~\ref{prop:Obs}, we define the following three variants of \emph{edge-counting graph observables}, presented as \emph{rule diagrams}~\cite{bdg2016,bdgh2016} (i.e.\ with input graphs on the bottom, output graphs on top and with the structure of the partial injective morphisms encoded in the underlying spans of monomorphisms indicated by dashed lines):
\begin{equation*}
O_{E_{00}}:=
    \rho\left(\tP{%
        \vA{2}{1}{black}{}
        \vC{1}{1}{black}{}
        \vA{2}{2}{black}{}
        \vC{1}{2}{black}{}
        \eC{1}{1}{<}{black}{}
        \eA{2}{1}{<}{black}{}
        }\right)\,,\;
  O_{E_{01}}:=
    \rho\left(\tP{%
        \vA{2}{1}{black}{}
        \vC{1}{1}{black}{}
        \vI{1}{2}{black}{}{black}{}
        \eC{1}{1}{<}{black}{}
        \eA{2}{1}{<}{black}{}
        }\right)\,,\;
    O_{E_{10}}:=
    \rho\left(\tP{%
        \vI{1}{1}{black}{}{black}{}
        \vA{2}{2}{black}{}
        \vC{1}{2}{black}{}
        \eC{1}{1}{<}{black}{}
        \eA{2}{1}{<}{black}{}
        }\right)\,,\;
  O_{E_{11}}:=
    \rho\left(\tP{%
        \vI{1}{1}{black}{}{black}{}
        \vI{1}{2}{black}{}{black}{}
        \eI{1}{1}{<}{black}{}{black}{}
        }\right)\,.
\end{equation*}
Applying the three different observables e.g.\ to the graph state $\ket{G}=\ket{\tPgo{%
      \gE{1}{1}{1}{2}{<}{black}{}{}
      \gE{1}{2}{1}{3}{<}{black}{}{}
      \gE{1}{4}{1}{5}{<}{black}{}{}
}}$, we obtain
\begin{equation*}
O_{E_{00}}\ket{G}=1\cdot\ket{G}\,,\; O_{E_{01}}\ket{G}=O_{E_{10}}\ket{G}=2\cdot\ket{G}\,,\; O_{E_{11}}\ket{G}=3\cdot\ket{G}\,.
\end{equation*}
In other words, $O_{E_{00}}$ effectively only counts \emph{isolated edges}, $O_{E_{01}}$ and $O_{E_{01}}$ count \emph{edges whose target/source vertex is of arity $1$}, respectively, while $O_{E_{11}}$ counts \emph{edges regardless of any constraints}.
\end{exa}

For practical purposes, the concept of \emph{connected} observables plays a central role:

\begin{defi}
An observable $O_P^k\in \cO_{\bfC}$ is called \emph{connected} if and only if there do not exist observables $O_{P_1}^{k_1},O_{P_2}^{k_2}\in \cO_{\bfC}$ such that
\[
	O_P^k=O_{P_1}^{k_1}\squplus O_{P_2}^{k_2}\,.
\]
We denote the space of \emph{connected observables} by $\cO_{\bfC}^c\subsetneq \cO_{\bfC}$.
\end{defi}

\begin{restatable}{lem}{lemmaCobs}
\label{lem:Cobs}
Any observable $O_P^k\in \cO_{\bfC}$ may be equivalently expressed in terms of a \emph{polynomial in connected graph observables}, whence if $i\in\cI$ is an indexing scheme for elements of $\cO_{\bfC}^c$, there exist coefficients $f_i(O^k_P)\in \bR$ such that
\begin{equation*}
\cO^k_P=\sum_{i\in\cI}f_i(O^k_P) O_{i}\quad (O_{i}\in\cO_{\bfC}^c)
\end{equation*}
and where only finitely many of the coefficients are non-zero.
\end{restatable}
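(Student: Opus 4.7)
\emph{Proof plan.} The plan is to proceed by induction on the number of connected components of the pattern $P$, using an inclusion--exclusion style rearrangement to convert superpositions ($\squplus$) into ordinary operator products.

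Since $\bfC$ is adhesive and extensive, every object decomposes uniquely up to isomorphism and reordering as a coproduct of connected objects; write $P\cong P_1\uplus\cdots\uplus P_n$ with each $P_i$ connected. By extensivity, the subobject $k\from K\hookrightarrow P$ likewise decomposes as $K\cong K_1\uplus\cdots\uplus K_n$ with $k_i\from K_i\hookrightarrow P_i$, so the rule $o_P^k$ factors as the disjoint union of the rules $o_{P_i}^{k_i}$. Invoking Definition~\ref{def:sup} and the trivial-match composition formula of Lemma~\ref{lem:rTrivMatch} yields $O_P^k = O_{P_1}^{k_1}\squplus\cdots\squplus O_{P_n}^{k_n}$, each factor being connected by construction.

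For any two observables $O_{P_1}^{k_1}, O_{P_2}^{k_2}$, the homomorphism property of the canonical representation (Theorem~\ref{thm:canrep}) together with the definition~\eqref{eq:defRAM} of $\ast_{\cR_{\bfC}}$ gives
\[
O_{P_1}^{k_1}\cdot O_{P_2}^{k_2} \;=\; \rho_{\bfC}\bigl(\delta(o_{P_1}^{k_1})\ast_{\cR_{\bfC}}\delta(o_{P_2}^{k_2})\bigr) \;=\; \sum_{\mu\in \MatchSet{o_{P_1}^{k_1}}{o_{P_2}^{k_2}}} O_{P_{12}(\mu)}^{k_{12}(\mu)}\,,
\]
which is a finite sum by finitarity of $\bfC$. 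The composite $\comp{o_{P_1}^{k_1}}{\mu}{o_{P_2}^{k_2}}$ is again of symmetric observable form because Definition~\ref{def:seqCompDPOR} is symmetric under swapping input and output legs of a rule whenever both input rules are themselves symmetric. Separating the trivial-match contribution, which by Definition~\ref{def:sup} coincides with $O_{P_1}^{k_1}\squplus O_{P_2}^{k_2}$, gives the key rearrangement
\[
O_{P_1}^{k_1}\squplus O_{P_2}^{k_2} \;=\; O_{P_1}^{k_1}\cdot O_{P_2}^{k_2} \;-\; \sum_{\mu\neq\mu_\emptyset} O_{P_{12}(\mu)}^{k_{12}(\mu)}\,.
\]

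The induction on $n$ is then straightforward: the base case $n=1$ is trivial, and for $n>1$ I would split $P=P_1\uplus P'$ with $P'$ having $n-1$ components and apply the rearrangement to $O_{P_1}^{k_1}$ and $O_{P'}^{k'}$. The product $O_{P_1}^{k_1}\cdot O_{P'}^{k'}$ is a polynomial in connected observables by the inductive hypothesis applied to $O_{P'}^{k'}$, and for each $\mu\neq\mu_\emptyset$ the pushout $P_{12}(\mu)$ merges the sole component of $P_1$ with at least one component of $P'$, so $P_{12}(\mu)$ has at most $n-1$ connected components and the inductive hypothesis applies to $O_{P_{12}(\mu)}^{k_{12}(\mu)}$ as well. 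The main obstacle is precisely this last structural claim that a nontrivial gluing strictly decreases the connected-component count: it rests on strict initiality of $\emptyset$ (so that any nontrivial overlap object $M$ is non-initial and therefore touches at least one component on each side) together with the behaviour of pushouts of non-initial spans in adhesive, extensive categories. A minor bookkeeping obligation is the preservation of the symmetric observable-rule shape under sequential composition, which follows from the manifest left/right symmetry of Definition~\ref{def:seqCompDPOR}.
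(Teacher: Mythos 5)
Your proposal is correct and follows essentially the same route as the paper's proof: induction on the number of connected components, using the decomposition of the rule-algebra product into the trivial-match contribution $\squplus$ plus nontrivial-match terms, and observing that the latter have strictly fewer connected components. You additionally spell out the structural points the paper leaves implicit (the $\squplus$-factorization of a disconnected observable via extensivity, preservation of the symmetric span shape under composition, and the strict decrease of the component count for nontrivial overlaps), which is a welcome refinement rather than a divergence.
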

\begin{proof}
We prove the statement by induction. For generic rule algebra basis elements $R_j\equiv\delta(r_j)\in \cR_{\bfC}$ (for $j=1,2$, with $r_1,r_2\in \Lin{\bfC}$), we have that
\begin{equation*}
  R_1 \ast_{\cR_{\bfC}} R_2=R_1 \squplus  R_2+R_1\ntC_{\cR_{\bfC}} R_2\,,
\end{equation*}
where $R_1\squplus R_2$ is a shorthand for the contribution to $R_1\ast_{\cR_{\bfC}}R_2$ that arises from the composition of the underlying linear rules $r_1$ and $r_2$ along a trivial match (cf.\ Lemma~\ref{lem:rTrivMatch}), while $R_1\ntC_{\cR_{\bfC}} R_2$ denotes all terms originating from nontrivial matches. Let $\squplus$ denote the lifting of the operation $ + $ on rule algebra elements to their representations (compare Definition~\ref{def:sup}), and $\boxcoasterisk_{\cR_{\bfC}}$ the lifting of the operation $\ntC_{\cR_{\bfC}}$, whence
\begin{equation*}
\rho_{\bfC}(R_1)\boxcoasterisk_{\cR_{\bfC}}\rho_{\bfC}(R_2):=
\rho_{\bfC}(R_1)\rho_{\bfC}(R_2)-\rho_{\bfC}(R_1)\squplus\rho_{\bfC}(R_2)\,.
\end{equation*}
Then a single step of the recursion involves expressing a disconnected graph observable $O\in\cO_{\bfC}$,
\[
O\equiv \bigsquplus_{i=1}^{n_c}O_{c_i}\,,
\]
where the $O_{c_i}\in \cO_{\bfC}^c$ are connected observables, in the form
\begin{equation*}
O=
 O_{c_1} \left(\bigsquplus_{i=2}^{n_c}\cO_{c_i}\right)
-O_{c_1}\boxcoasterisk_{\cR_{\bfC}} \left(\bigsquplus_{i=2}^{n_c}\cO_{c_i}\right)\,.
\end{equation*}
As the second term has fewer connected components than $O$ (including zero if $n_c=1$), the conclusion follows by induction.
\end{proof}
Due to this Lemma, we will be able to focus our detailed analysis of stochastic dynamics on connected observables without loss of generality. The second main technical tool is the following:

\begin{thm}[Jump-closure theorem, cf.\ \cite{Behr2018}, Thm.~7.3]\label{thm:jc}
Let $\bfC$ be a finitary adhesive, extensive category. Then for arbitrary linear rules
\[
		r\equiv\rSpan{O}{o}{K}{i}{I}\in \Lin{\bfC}\,,
\]
one finds that
\begin{equation*}
	\bra{}\rho_{\bfC}(\delta(r))=\bra{}\bO(\delta(r))\,,
\end{equation*}
with the operation $\bO:\cR_{\bfC}\rightarrow \cO_{\bfC}$ from rule algebra elements to observables defined via
\begin{equation}\label{eq:bOp}
	\bO(\delta(r)):=\cO_I^i\,,\quad \cO_I^i:= \rho_{\bfC}\left(\delta\left(
		\rSpan{I}{i}{K}{i}{I}
	\right)\right)\,.
\end{equation}
We extend the operation $\mathbb{O}$ linearly in order to obtain an operation defined on generic elements of $\cR_{\bfC}$.
\begin{proof}
Due to the definition of $\bra{}$, for any $r\in \Lin{\bfC}$ and for any object $X\in \obj{\bfC}$, one finds that
\begin{equation*}
	\bra{}\rho_{\bfC}(\delta(r))\ket{X}=|\MatchSet{r}{X}|\,.
\end{equation*}
On the other hand, the admissibility of a monomorphism $m:I\hookrightarrow X$ as a match of $r$ into $X$ hinges according to Definition~\ref{def:DPOR} solely on whether or not the pushout complement of $K\xhookrightarrow{i}I\xhookrightarrow{m}X$ exists, i.e.\ is entirely independent of the precise nature of the data $O\xhookleftarrow{o}K$ of the linear rule $r$. Consequently, one obtains the same number of admissible matches if one replaces $O\xhookleftarrow{o}K$ with any other data of the form $O'\xhookleftarrow{o'}K$, and the claim follows.
\end{proof}
\end{thm}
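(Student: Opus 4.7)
The plan is to verify the identity pointwise on basis vectors $\ket{X}$ for $X\in\obj{\bfC}_{\cong}$, and then extend by bilinearity of $\rho_{\bfC}$, $\bO$, and $\bra{}$. So I fix an arbitrary pure state $\ket{X}$ and compute both sides explicitly. For the left-hand side, unfolding the canonical representation via~\eqref{eq:defCanRep} gives
\begin{equation*}
\bra{}\rho_{\bfC}(\delta(r))\ket{X}=\bra{}\sum_{m\in\MatchSet{r}{X}}\ket{r_m(X)}=|\MatchSet{r}{X}|\,,
\end{equation*}
since the projection $\bra{}$ sends every basis vector to $1_{\bR}$, so the output objects $r_m(X)$ are collapsed and only the cardinality of the match set survives.

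For the right-hand side, I would invoke Proposition~\ref{prop:Obs}: the rule $o_I^i=\rSpan{I}{i}{K}{i}{I}$ has a symmetric span, so $\cO_I^i=\rho_{\bfC}(\delta(o_I^i))$ is diagonal in the pure-state basis with eigenvalue $\omega_I^i(X)=|\MatchSet{o_I^i}{X}|$. Consequently
\begin{equation*}
\bra{}\bO(\delta(r))\ket{X}=\bra{}\cO_I^i\ket{X}=|\MatchSet{o_I^i}{X}|\,.
\end{equation*}
The theorem therefore reduces to the combinatorial identity $|\MatchSet{r}{X}|=|\MatchSet{o_I^i}{X}|$, or ideally the stronger statement that the two match sets coincide as sets.

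The crux, and I expect the only nontrivial step, is to show that admissibility of a monomorphism $m:I\hookrightarrow X$ depends \emph{only} on the input data $K\xhookrightarrow{i}I$ of the rule and on $m$ itself, never on the output leg $O\xhookleftarrow{o}K$. Going back to Definition~\ref{def:DPOR} and the diagram~\eqref{eq:DPOdd}, the sole obstruction to $m$ being admissible is the existence of the pushout complement $\mathsf{POC}$ of the composite $K\xhookrightarrow{i}I\xhookrightarrow{m}X$; once this pushout complement exists (and it is unique, by the adhesive/extensive hypotheses on $\bfC$), the remaining left-hand pushout producing $r_m(X)$ can always be constructed whatever the output leg $o$ is, because pushouts along monomorphisms exist in $\bfC$. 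Since $r$ and $o_I^i$ share the same input datum $K\xhookrightarrow{i}I$, they impose literally the same admissibility condition on $m$, so $\MatchSet{r}{X}=\MatchSet{o_I^i}{X}$.

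Assembling the three steps, the equality of the scalars on the two sides holds for every pure state $\ket{X}$; extending by linearity in $\ket{X}$ and in $\delta(r)$ yields the theorem in full generality. The main obstacle is really just conceptual: one must carefully separate the role of the input and output legs of a span rule in the construction of a direct derivation and argue that the relevant square-by-square construction in~\eqref{eq:DPOdd} factors through the POC alone. Once this separation is made explicit, no further calculation is required.
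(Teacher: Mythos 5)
Your proof is correct and takes essentially the same route as the paper's: both reduce the claim to the observations that $\bra{}\rho_{\bfC}(\delta(r))\ket{X}=|\MatchSet{r}{X}|$ and that admissibility of a match $m:I\hookrightarrow X$ hinges solely on the existence of the pushout complement of $K\xhookrightarrow{i}I\xhookrightarrow{m}X$, hence only on the input leg of the span, so that $r$ and $o_I^i$ have identical match sets. Your explicit appeal to Proposition~\ref{prop:Obs} for the right-hand side is a slightly more detailed bookkeeping of the same argument.
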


Finally, all of these technical preparations permit us to formulate continuous-time Markov chains for stochastic rewriting systems in a uniform fashion:

\begin{thm}\label{thm:StochMech}
	Let $\bfC$ be a finitary adhesive, extensive category, and let
	\[
		\{ (\kappa_j,r_j)\}_{j\in \cJ}\,,\; \kappa_j\in \bR_{> 0}\,,\;
		r_j\in \Lin{\bfC}
	\]
	be a (finite) set of pairs of \emph{base rates} and \emph{linear rules}. Then together with an \emph{initial state} $\ket{\Psi_0}\in \Prob(\cS_{\bfC})$, this data specifies a \emph{continuous-time Markov chain} with infinitesimal generator $H\in End_{\bR}(\hat{\bfC})$ defined as
	\begin{equation}\label{eq:defH}
		H:=\hat{H}+\bar{H}\,,\quad 
		\hat{H}:= \rho_{\bfC}(h)\,,\quad
		\bar{H}:= -\bO(h)\,,\quad
		h:=\sum_{j\in \cJ}\kappa_j \delta(r_j)\,.
	\end{equation}
	We refer to $\hat{H}$ as the \emph{off-diagonal} and to $\bar{H}$ as the \emph{diagonal contribution to $H$} (for evident reasons).
	\begin{proof}
		The claim follows by verifying the defining properties of infinitesimal stochastic operators according to~\eqref{eq:Hprops}. Properties~\eqref{eq:Hprops}(i) and~\eqref{eq:Hprops}(ii) (i.e.\ the non-positivity of the diagonal and the non-negativity of the off-diagonal contributions) follow directly from the definition of $H$ as in~\eqref{eq:defH}. Property~\eqref{eq:Hprops}(iii) is equivalent to demanding that for all pure states $\ket{Y}\in \hat{\bfC}$,
		\[
  			\bra{}H\ket{Y}=0\,.
		\]
		This property is readily verified to hold as a consequence of the structure of $H$ and of the jump closure theorem (Theorem~\ref{thm:jc}), whereby $\bra{}\hat{H}=-\bra{}\bar{H}$, thus the claim follows.
	\end{proof}
\end{thm}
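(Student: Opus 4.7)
The plan is to verify the three defining properties of an infinitesimal stochastic operator listed in~\eqref{eq:Hprops} by exploiting the decomposition $H=\hat{H}+\bar{H}$ and the jump-closure theorem. Since $\bar{H}=-\mathbb{O}(h)$ is a $\bK$-linear combination of observables, it is diagonal in the basis of pure states, so all off-diagonal entries of $H$ come entirely from $\hat{H}=\rho_{\bfC}(h)$; conversely, the diagonal entry $H_{X,X}$ splits as $\hat{H}_{X,X}-\mathbb{O}(h)_{X,X}$. The central observation is that Theorem~\ref{thm:jc} identifies $\bra{}\hat{H}$ with $\bra{}\mathbb{O}(h)$ on every pure state, i.e.\ the full column sum of $\hat{H}$ at column $X$ equals precisely the diagonal entry of $\mathbb{O}(h)$ at $X$.

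First, I would establish property~\eqref{eq:Hprops}(ii): for $Y\neq X$ one has $H_{Y,X}=\hat{H}_{Y,X}$ because $\bar{H}$ vanishes off the diagonal. Expanding $\hat{H}_{Y,X}=\sum_{j\in\cJ}\kappa_{j}\,\bra{Y}\rho_{\bfC}(\delta(r_j))\ket{X}$, and recalling from Definition~\ref{def:canRep} that $\rho_{\bfC}(\delta(r_j))\ket{X}=\sum_{m\in\MatchSet{r_j}{X}}\ket{(r_j)_m(X)}$ is a non-negative integer combination of basis vectors, this entry is a non-negative real number by positivity of each $\kappa_j$.

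Next I would verify property~\eqref{eq:Hprops}(iii), which amounts to $\bra{}H=0$. Linearity of $\mathbb{O}$ and of $\rho_{\bfC}$ lets one write $\bra{}H=\bra{}\rho_{\bfC}(h)-\bra{}\mathbb{O}(h)$, which vanishes termwise by the jump-closure theorem applied to each $\delta(r_j)$. Finally, for property~\eqref{eq:Hprops}(i), I would combine the previous two steps: the jump-closure identity, evaluated on a pure state $\ket{X}$, gives
\begin{equation*}
\mathbb{O}(h)_{X,X}=\bra{}\mathbb{O}(h)\ket{X}=\bra{}\hat{H}\ket{X}=\sum_{Y\in\obj{\bfC}_{\cong}}\hat{H}_{Y,X}\,,
\end{equation*}
and since every summand on the right is non-negative (as just established), the single summand $\hat{H}_{X,X}$ is bounded above by the total, so $H_{X,X}=\hat{H}_{X,X}-\mathbb{O}(h)_{X,X}\leq 0$.

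No genuine obstacle is expected; the only subtle point is that the diagonal of $\hat{H}$ itself need not vanish (a rule may admit matches that leave $X$ invariant, e.g.\ via observable-like spans), so property~(i) is not immediate from positivity alone and requires the jump-closure theorem to compensate. Existence of the evolution semigroup $\cE(t)$ generated by $H$ from an initial state $\ket{\Psi_0}\in\Prob(\cS_{\bfC})$ then follows from the standard minimal-solution construction for CTMCs together with Lemma~\ref{lem:crFin}, which guarantees that $H$ is a well-defined row- and column-finite element of $\End(\cS_{\bfC})$.
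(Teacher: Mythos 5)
Your proof is correct and follows essentially the same route as the paper: verify the three $Q$-matrix properties of~\eqref{eq:Hprops}, using the jump-closure theorem to identify $\bra{}\hat{H}$ with $\bra{}\bO(h)$. In fact your treatment of property~(i) is slightly more careful than the paper's, which asserts it follows ``directly from the definition'' — you correctly note that $\hat{H}$ can itself contribute to the diagonal, so one needs the column-sum identity $\bO(h)_{X,X}=\sum_{Y}\hat{H}_{Y,X}\geq \hat{H}_{X,X}$ to conclude $H_{X,X}\leq 0$.
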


At this point, we would like to provide some intuitions to the interested readers via a comparison to Doi's framework for chemical reaction systems in the one-species case (see the proof of Theorem~\ref{thm:DMB} for the multi-species case):
\begin{exa}\label{ex:discrGRS}
Utilizing~\eqref{eq:raNOF} of Example~\ref{exa:HWra1}, we may provide a precise interpretation of Doi's formulation~\eqref{eq:HcrnDoi} for the infinitesimal generator $H$ of a one-species chemical reaction system:
\begin{equation*}
\begin{aligned}
H&=\sum_{i,o} \kappa_{i,o}\left(a^{\dag\:o}a^i-a^{\dag\:i}a^i\right)\\
&=\sum_{i,o} \kappa_{i,o}\left(\rho\left(\delta\rSpanAlt{\bullet^{\uplus\:o}}{}{\emptyset}{}{\bullet^{\uplus\:i}}\right)-\rho\left(\delta\rSpanAlt{\bullet^{\uplus\:i}}{}{\emptyset}{}{\bullet^{\uplus\:i}}\right)\right)
\end{aligned}
\end{equation*}
In other words, a one-species chemical reaction system is in fact verbatim a certain form of stochastic rewriting system based upon discrete graph rewriting rules (and with a state-space restricted to discrete graph states). 
\end{exa}

\subsection{Dynamics of exponential moment generating functions of observables}\label{sec:egfDynamics}

Suppose now that for a particular application at hand one were interested in studying the properties of a stochastic rewriting system via a denumerable set $\{O_i\}_{i\in\cI}$ of (w.l.o.g.\ connected) observables $O_i\in \cO_{\bfC}^c$. With $\{\lambda_i\}_{i\in \cI}$ a collection of formal variables, let us introduce the convenient multi-index notation
\begin{equation*}
	\vec{\lambda}\cdot \vec{O}:=\sum_{i\in \cI} \lambda_i O_i\,.
\end{equation*}
At least formally, in practice one would like to determine the \emph{exponential moment generating function} $M(t;\vec{\lambda})\equiv M(t;\vec{\lambda};\vec{O})$ of the system\footnote{Here and in the following, we will use the shorthand notation $M(t;\vec{\lambda})$ instead of a more verbose notation $M(t;\vec{\lambda};\vec{O})$, since the choice of the corresponding set of connected observables $\{O_i\}_{i\in \cI}$ may be assumed to be clear from the context (and with $\vec{\lambda}\cdot\vec{O}:=\sum_{i\in \cI}\lambda_i O_i$ providing the ``pairing'' of formal variables and observables).}, which is defined as
\begin{equation*}
  M(t;\vec{\lambda}):=\left\langle e^{\vec{\lambda}\cdot \vec{O}}\right\rangle(t)\equiv \left\langle\left\vert e^{\vec{\lambda}\cdot \vec{O}}\right\vert\Psi(t)\right\rangle\,.
\end{equation*}
Here, $\ket{\Psi(t)}=\cE(t)\ket{\Psi(0)}$ is the time-dependent state of the stochastic rewriting system, related to the Master equation via $\frac{d}{dt}\ket{\Psi(t)}=H\ket{\Psi(t)}$. Clearly, a closed-form solution for $M(t;\vec{\lambda})$ would amount to the knowledge of all moments of the observables of interest at all times $t\geq 0$ along the stochastic evolution, according to
\begin{equation*}
  \langle O_{i_1}\dotsc O_{i_n}\rangle(t)=\left[\tfrac{\partial}{\partial \lambda_{i_1}}\dotsc \tfrac{\partial}{\partial \lambda_{i_n}}M(t;\vec{\lambda})\right]\bigg\vert_{\vec{\lambda}\to \vec{0}}\,.
\end{equation*}
Disregarding for the moment some potential problems with the formal exponential moment generating function (such as the question of finiteness of all the moments, summability of terms etc., which in practice have to be dealt with on a case-by-case basis; see however~\cite{DBLP:conf/fossacs/DanosHGS17}), one may derive from the generic properties of a stochastic rewriting system and the aforementioned results the following formal\footnote{Note in particular that we do not make any claims here on the existence of solutions of these formal differential equations, which is why such equations at present must be analyzed on a case-by-case basis. We will present a number of concrete examples where the analysis is possible in Sections~\ref{sec:AE} and~\ref{sec:BC}.} evolution equations:
\begin{thm}[Compare~\cite{bdg2016}]\label{thm:EMGFev}
Let $H$ denote the evolution operator of a stochastic rewriting system, and let $O_1$, $O_2$, $\dotsc$ ($O_i\in \cO_{\bfC}^c$) be an at most denumerable set of connected observables. Then the exponential moment generating function $M(t;\vec{\lambda})$ fulfills the following \emph{formal evolution equation}:
\begin{equation*}
  \label{eq:MGev}
  \begin{aligned}
    \frac{\partial}{\partial t}M(t;\vec{\lambda})&=\left\langle \left(e^{ad_{\vec{\lambda}\cdot \vec{O}}}H\right)e^{\vec{\lambda}\cdot \vec{O}}\right\rangle(t)
    =\sum_{n\geq1}\frac{1}{n!}\left\langle \left(ad_{\vec{\lambda}\cdot\vec{O}}^{\circ\:n}\hat{H}\right)e^{\vec{\lambda}\cdot\vec{O}}\right\rangle(t)\,.
  \end{aligned}
\end{equation*}
Here, the so-called \emph{adjoint action} of a linear operator $A$ on a linear operator $B$ (with $A$ and $B$ composable) is defined as $ad_A B:=AB-BA$, also referred to as the \emph{commutator} $[A,B]=AB-BA$ of $A$ and $B$. By convention, $ad_A^{\circ 0}B:=B$, while $ad_A^{\circ\:n}B$ for $n>0$ denotes the $n$-fold application of $ad_A$ to $B$.
\begin{proof}
By making use of the Master equation and (in the step marked $(*)$) of a variant of the BCH formula (see e.g.\  \cite{hall2015lieGroups}, Prop.~3.35), we may compute:
  \begin{align*}
    \frac{d}{dt}\left\langle e^{\vec{\lambda}\cdot \vec{O}}\right\rangle(t)
    &=\left\langle e^{\vec{\lambda}\cdot \vec{O}}H\right\rangle(t)=\left\langle \left(e^{\vec{\lambda}\cdot \vec{O}}He^{\vec{-\lambda}\cdot \vec{O}}\right)e^{\vec{\lambda}\cdot \vec{O}}\right\rangle(t)\overset{(*)}{=}\left\langle \left(e^{ad_{\vec{\lambda}\cdot \vec{O}}}H\right)e^{\vec{\lambda}\cdot \vec{O}}\right\rangle(t)\,.
  \end{align*}%
   Expanding the exponential and using that $ad^{\circ\:0}_A B:=B$ as well as $\bra{}H=0$ (which follows from Lemma~\ref{lem:Haux}, and also explicitly from Theorem~\ref{thm:jc}) results in the expanded form of the evolution equation.
\end{proof}
\end{thm}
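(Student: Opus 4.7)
The plan is to differentiate $M(t;\vec{\lambda})=\bra{}e^{\vec{\lambda}\cdot\vec{O}}\ket{\Psi(t)}$ in $t$, push the infinitesimal generator $H$ past the exponential $e^{\vec{\lambda}\cdot\vec{O}}$ via a BCH-type similarity identity, and then invoke the two structural facts $\bra{}H=0$ and $[\bar{H},\vec{\lambda}\cdot\vec{O}]=0$ in order to arrive at the stated sum over $n\geq 1$ involving only the off-diagonal part $\hat{H}$.

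First I would differentiate under the linear pairing: using the Master equation $\tfrac{d}{dt}\ket{\Psi(t)}=H\ket{\Psi(t)}$, which holds by definition of the evolution semigroup generated by $H$, I obtain
\[
  \tfrac{\partial}{\partial t}M(t;\vec{\lambda}) = \bra{}e^{\vec{\lambda}\cdot\vec{O}}H\ket{\Psi(t)}.
\]
Next, inserting the identity $e^{-\vec{\lambda}\cdot\vec{O}}e^{\vec{\lambda}\cdot\vec{O}}=\mathbb{1}$ between $H$ and $\ket{\Psi(t)}$ and applying the conjugation identity $e^{A}Be^{-A}=e^{ad_{A}}B$ with $A=\vec{\lambda}\cdot\vec{O}$ and $B=H$ (see e.g.\ \cite{Hall_2015}, Prop.~3.35) yields the first equality of the theorem.

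For the second equality, I would expand $e^{ad_{\vec{\lambda}\cdot\vec{O}}}H = \sum_{n\geq 0}\tfrac{1}{n!}\,ad^{\circ n}_{\vec{\lambda}\cdot\vec{O}}H$. The $n=0$ summand is $H$ itself, which is annihilated from the left by $\bra{}$ thanks to Lemma~\ref{lem:Haux}, so only the tail $n\geq 1$ survives. To then replace $H$ by $\hat{H}$ inside each remaining summand, I would invoke the decomposition $H=\hat{H}+\bar{H}$ from Theorem~\ref{thm:StochMech}: by construction $\bar{H}=-\bO(h)$ is a linear combination of observables and is therefore diagonal in the pure-state basis, just as each connected observable $O_{i}\in\cO_{\bfC}^{c}$ is diagonal. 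Since any two diagonal operators commute, one has $[\vec{\lambda}\cdot\vec{O},\bar{H}]=0$, and consequently $ad^{\circ n}_{\vec{\lambda}\cdot\vec{O}}H = ad^{\circ n}_{\vec{\lambda}\cdot\vec{O}}\hat{H}$ for every $n\geq 1$, which gives the claimed formula.

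The main obstacle is not algebraic but analytic: neither $e^{\vec{\lambda}\cdot\vec{O}}\ket{\Psi(t)}$ nor the series $e^{ad_{\vec{\lambda}\cdot\vec{O}}}H$ need to converge in any natural topology on states or on operators, respectively. As the theorem statement flags with the qualifier \emph{formal}, I would therefore interpret each identity coefficient-wise as an equality of formal power series in $\vec{\lambda}$, so that matching the $\vec{\lambda}$-monomial coefficients on either side becomes an identity between well-defined finite moment expressions; genuine questions of convergence and summability are then deferred to case-by-case analyses, as the subsequent sections of the paper illustrate.
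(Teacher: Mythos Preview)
Your argument is correct and follows the same route as the paper's proof: differentiate using the Master equation, conjugate via the BCH identity $e^{A}Be^{-A}=e^{ad_A}B$, and then drop the $n=0$ term using $\bra{}H=0$. In fact you are more explicit than the paper on one point: the paper's proof passes from $H$ to $\hat{H}$ in the final sum without comment, whereas you justify this step by observing that $\bar{H}$ is diagonal (being a linear combination of observables) and hence commutes with $\vec{\lambda}\cdot\vec{O}$, so that $ad^{\circ n}_{\vec{\lambda}\cdot\vec{O}}H=ad^{\circ n}_{\vec{\lambda}\cdot\vec{O}}\hat{H}$ for all $n\geq 1$.
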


Consequently, the evolution of observables is entirely governed by \emph{static} combinatorial-algebraic relationships as encoded in the nested commutators in~\eqref{eq:MGev} of Theorem~\ref{thm:EMGFev}. We invite the interested readers to consult Example~\ref{exa:bd2} for a concrete case of the moment EGF evolution equation in the case of a one-species chemical reaction system.

\section{Combinatorial conversion for stochastic rewriting systems}
\label{sec:CC}

Combining the result on the evolution of the exponential moment generating function $M(t;\vec{\lambda})$ for a chosen subset $\mathsf{O}=\{O_1,O_2,\dotsc\}$ of observables as presented in Theorem~\ref{thm:EMGFev} with the result on jump-closure of observables (Theorem~\ref{thm:jc}), we encounter the following fundamental problem: each contribution to the evolution equation~\eqref{eq:MGev} of the form
\begin{equation*}
\left\langle \left\vert\left(ad_{\vec{\lambda}\cdot\vec{O}}^{\circ\:n}\hat{H}\right)\right.\right.
\end{equation*}
may be transformed by virtue of the jump-closure theorem into a linear combination of terms of the form 
$\bra{}O'_k$ for $O'_k\in \cO_{\bfC}$. In general, these terms may involve new observables not in 
our chosen subset $\mathsf{O}$.

We will now focus on a class of special cases in which this complication is absent. This will allow for making contact between traditional techniques of analyzing exponential moment generating functions known from the statistics literature and SRSs.

\begin{defi}\label{def:PJC}
Consider a stochastic DPO rewriting system over a finitary adhesive, extensive category $\bfC$ with evolution operator $H$. Then we refer to a set $\mathsf{O}$ of connected graph observables,
\[
  \mathsf{O}=\{O_i\in \cO_{\bfC}^c\}_{i\in\cI}\,,
\]
for some (possibly countably infinite) index set $\cI$ as \textbf{\emph{polynomially jump-closed}} if and only if all higher moments of the observables $O_i$ have evolution equations as in~\eqref{eq:MGev} that are expressible in terms of polynomials in the observables $O_i$. More concretely, this entails the \emph{polynomial jump closure $(\mathsf{PJC})$ condition} namely that for all $n$
\begin{equation*}
(\mathsf{PJC})\quad    
\exists \vec{N(n)}\in \bZ_{\geq 0}^{\cI}, \gamma_n(\vec{\lambda};\vec{k})\in \bR:\;\bra{}ad_{\vec{\lambda}\cdot\vec{O}}^{\circ\: n}H=
  \sum_{\vec{k}=\vec{0}}^{\vec{N(n)}}\gamma_n(\vec{\lambda};\vec{k})\bra{}\vec{O}^{\vec{k}}\,.
\end{equation*}
Here, the multi-index notation $\vec{O}^{\vec{k}}$ denotes a product of linear operators $O_i$ according to
\[
\vec{O}^{\vec{k}}:=\prod_{i\in\cI}O_i^{k_i}\,.
\]
Since the definition also covers the trivial case where $\cI$ indexes \emph{all} basis elements of the space $\cO_{\bfC}^c$ of connected graph observables, we sometimes emphasize the non-trivial cases where $\cI$ is either a finite set, or where $\{O_i\}_{i\in\cI}$ is a strict subset of the set of basis elements of $\cO_{\bfC}^c$.
\end{defi}
The readers interested in a heuristic motivation for the concept of polynomial jump-closure is invited to consult the proof of Theorem~\ref{thm:DMB}, in which this property is illustrated for the special case of (multi-species) chemical reaction systems that coincide with a certain type of stochastic rewriting systems over discrete graphs.\\

The property of polynomial jump-closure has the following remarkable consequence for a stochastic rewriting system, which constitutes one of the main results of this paper:

\begin{restatable}[The Combinatorial Conversion Theorem]{thm}{thmCCT}
\label{thm:CCT}
For a polynomially jump-closed set of observables $\mathsf{O}=\{O_i\}_{i\in\cI}$ (with $O_i\in\cO_{\bfC}^c$ for all $i\in\cI$) of a stochastic rewriting system over a finitary adhesive, extensive category $\bfC$ with evolution operator $H$, the evolution equation for the exponential moment generating function $M(t;\vec{\lambda})$ of $\mathsf{O}$ may be converted from its explicit expression in the observables $O_i$ into a \emph{partial differential equation} of $M(t;\vec{\lambda})$ itself w.r.t.\ the formal parameters $\{\lambda_i\}_{i\in\cI}$:
\begin{equation}\label{eq:CCTmain}
\begin{aligned}
\tfrac{\partial}{\partial t}M(t;\vec{\lambda})&=\bD(\vec{\lambda},\partial_{\vec{\lambda}})M(t;\vec{\lambda})\,,\qquad
\bD(\vec{\lambda},\partial_{\vec{\lambda}})=
\left(\left[ \bra{}\left(e^{ad_{\vec{\lambda}\cdot\vec{O}}}H\right)\right]\bigg\vert_{O_i\to\tfrac{\partial}{\partial\lambda_i}}\right)\ket{\emptyset}\,.
\end{aligned}
\end{equation}
Here, in the definition of the differential operator $\bD$, we have made use of the assumption of polynomial jump-closure in converting the expression in square brackets into $\bra{}$ applied to a formal series in the $O_i$.
\end{restatable}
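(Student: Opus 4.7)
The plan is to derive the PDE by starting from the evolution equation for $M(t;\vec\lambda)$ of Theorem~\ref{thm:EMGFev},
$$\tfrac{\partial}{\partial t}M(t;\vec\lambda)=\bra{}\bigl(e^{ad_{\vec\lambda\cdot\vec O}}H\bigr)e^{\vec\lambda\cdot\vec O}\ket{\Psi(t)},$$
and transforming its right-hand side in two conceptually independent steps. First, rewrite the prefactor $\bra{}\bigl(e^{ad_{\vec\lambda\cdot\vec O}}H\bigr)$ purely as a formal power series in the covectors $\bra{}\vec O^{\vec k}$ using the $(\mathsf{PJC})$ hypothesis. Second, convert each resulting composition $\bra{}\vec O^{\vec k}e^{\vec\lambda\cdot\vec O}\ket{\Psi(t)}$ into the mixed $\vec\lambda$-derivative $\partial_{\vec\lambda}^{\vec k}M(t;\vec\lambda)$, exploiting the fact that the observables $O_i$ are diagonal in the basis of pure states, hence mutually commuting and commuting with the formal exponential $e^{\vec\lambda\cdot\vec O}$.

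For the first step, I would expand $e^{ad_{\vec\lambda\cdot\vec O}}=\sum_{n\geq 0}\tfrac{1}{n!}ad_{\vec\lambda\cdot\vec O}^{\circ n}$ and apply hypothesis $(\mathsf{PJC})$ of Definition~\ref{def:PJC} term by term, obtaining at each order $n$ the finite decomposition $\bra{}ad_{\vec\lambda\cdot\vec O}^{\circ n}H=\sum_{\vec k=\vec 0}^{\vec N(n)}\gamma_n(\vec\lambda;\vec k)\,\bra{}\vec O^{\vec k}$. The $n=0$ contribution vanishes by Lemma~\ref{lem:Haux} ($\bra{}H=0$), and setting $\gamma(\vec\lambda;\vec k):=\sum_{n\geq 1}\tfrac{1}{n!}\gamma_n(\vec\lambda;\vec k)$ yields the formal-series identity
$$\bra{}\bigl(e^{ad_{\vec\lambda\cdot\vec O}}H\bigr)=\sum_{\vec k}\gamma(\vec\lambda;\vec k)\,\bra{}\vec O^{\vec k}.$$

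For the second step, Proposition~\ref{prop:Obs} ensures that each $O_i\in\cO_{\bfC}^c$ is diagonal in the basis $\{\ket{X}\}_{X\in\obj{\bfC}_{\cong}}$, so the $O_i$ pairwise commute and commute with $e^{\vec\lambda\cdot\vec O}$. The elementary generating-function identity
$$\vec O^{\vec k}\,e^{\vec\lambda\cdot\vec O}=\partial_{\vec\lambda}^{\vec k}\,e^{\vec\lambda\cdot\vec O}$$
then holds on each pure state and, by linearity, on $\ket{\Psi(t)}$. Pulling the $\vec\lambda$-independent functional $\bra{}$ through the $\vec\lambda$-derivatives gives $\bra{}\vec O^{\vec k}e^{\vec\lambda\cdot\vec O}\ket{\Psi(t)}=\partial_{\vec\lambda}^{\vec k}M(t;\vec\lambda)$. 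Substituting into the formula from the first step produces
$$\tfrac{\partial}{\partial t}M(t;\vec\lambda)=\sum_{\vec k}\gamma(\vec\lambda;\vec k)\,\partial_{\vec\lambda}^{\vec k}M(t;\vec\lambda)=:\bD(\vec\lambda,\partial_{\vec\lambda})M(t;\vec\lambda),$$
which matches~\eqref{eq:CCTmain} once one reads the right-hand bracket as the formal substitution $O_i\mapsto\partial/\partial\lambda_i$ applied to the covector $\bra{}\bigl(e^{ad_{\vec\lambda\cdot\vec O}}H\bigr)$; the trailing $\ket{\emptyset}$ in~\eqref{eq:CCTmain} serves merely to close the formal bra-ket and cast the result as a scalar-valued differential operator in $\vec\lambda$.

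The main subtlety is not in the calculation but in keeping the bookkeeping coherent across two nested formal expansions (the commutator expansion in $n$ and the observable expansion in $\vec k$). Hypothesis $(\mathsf{PJC})$ is precisely what makes the inner $\vec k$-sum finite at each fixed $n$, so the substitution $O_i\mapsto\partial/\partial\lambda_i$ produces a finite-order differential operator at each order $n$, and the global operator $\bD$ is well-defined as the resulting formal series in $n$. Consistently with the caveat preceding Theorem~\ref{thm:EMGFev}, the identity is purely formal in $t$ and the $\lambda_i$'s, so no analytic convergence or summability questions need to be resolved here.
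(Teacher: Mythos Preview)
Your proof is correct and follows essentially the same route as the paper: start from Theorem~\ref{thm:EMGFev}, apply the $(\mathsf{PJC})$ hypothesis termwise to the commutator expansion, then use the diagonal/commutativity of the $O_i$ to convert $\vec{O}^{\vec k}e^{\vec\lambda\cdot\vec O}$ into $\partial_{\vec\lambda}^{\vec k}e^{\vec\lambda\cdot\vec O}$. The only cosmetic difference is that the paper justifies the trailing $\ket{\emptyset}$ via the auxiliary identity $\bra{}e^{\vec\lambda\cdot\vec O}\ket{\emptyset}=\braket{}{\emptyset}=1$, which makes explicit that the $\bra{}$ left over after the substitution $O_i\to\partial_{\lambda_i}$ is killed off by pairing with $\ket{\emptyset}$; your reading of it as a notational device to ``close the bra-ket'' amounts to the same thing.
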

\begin{proof}
The proof follows immediately from the following application of the polynomial jump-closure assumption: with $\vec{\lambda}\cdot\vec{O}=\sum_{i\in\cI}\lambda_i O_i$, polynomial jump closure of $\mathsf{O}$ w.r.t.\ $H$ implies that
\begin{equation*}
\label{eq:ProofThm42A}
\begin{aligned}
  \forall n\in \bZ_{>0}:\;\bra{}ad_{\vec{\lambda}\cdot\vec{O}}^{\circ\: n}He^{\vec{\lambda}\cdot\vec{O}}&=
  \sum_{\vec{k}=\vec{0}}^{\vec{N(n)}}\gamma_n(\vec{\lambda};\vec{k})\bra{}\vec{O}^{\vec{k}}e^{\vec{\lambda}\cdot\vec{O}}\\
  &=
  \sum_{\vec{k}=\vec{0}}^{\vec{N(n)}}\gamma_n(\vec{\lambda};\vec{k})\left(\vec{\tfrac{\partial}{\partial \lambda}}\right)^{\vec{k}}\bra{}e^{\vec{\lambda}\cdot\vec{O}}\,.
\end{aligned}
\end{equation*}
Here, we made use of the convenient multi-index notation
\begin{equation*}
\left(\vec{\tfrac{\partial}{\partial \lambda}}\right)^{\vec{k}}:=\prod_{i\in\cI}\left(\tfrac{\partial}{\partial \lambda_i}\right)^{k_i}\,,
\end{equation*}
such that in particular
\[
\left(\vec{\tfrac{\partial}{\partial \lambda}}\right)^{\vec{k}}e^{\vec{\lambda}\cdot\vec{O}}=\vec{O}^{\vec{k}}e^{\vec{\lambda}\cdot\vec{O}}\,.
\]
Invoking Theorem~\ref{thm:EMGFev}, we find via~\eqref{eq:ProofThm42A} an evolution equation of the form
\begin{equation*}
  \begin{aligned}
    \frac{\partial}{\partial t}M(t;\vec{\lambda})&=
    \left.\left.\left\langle\left\vert \left(e^{ad_{\vec{\lambda}\cdot \vec{O}}}H\right)e^{\vec{\lambda}\cdot \vec{O}}\right\vert \Psi(t)\right\rangle\right.\right.
    =\sum_{n\geq1}\frac{1}{n!}
    \left.\left.\left\langle\left\vert \left(ad_{\vec{\lambda}\cdot\vec{O}}^{\circ\:n}\hat{H}\right)e^{\vec{\lambda}\cdot\vec{O}}\right\vert\Psi(t)\right\rangle\right.\right.\\
    &=\sum_{n\geq 1}\frac{1}{n!}\sum_{\vec{k}=\vec{0}}^{\vec{N(n)}}\gamma_n(\vec{\lambda};\vec{k})\left(\vec{\tfrac{\partial}{\partial \lambda}}\right)^{\vec{k}}
    \left.\left.\left\langle\left\vert e^{\vec{\lambda}\cdot\vec{O}}\right\vert\Psi(t)\right\rangle\right.\right.
    \,.
  \end{aligned}
\end{equation*}
Finally, the auxiliary identity
\[
    \bra{}e^{\vec{\lambda}\cdot\vec{O}}\ket{\emptyset}=\braket{}{\emptyset}=1
\]
permits to verify that
\begin{equation*}
    \bD(\vec{\lambda},\partial_{\vec{\lambda}}):=\sum_{n\geq 1}\frac{1}{n!}\sum_{\vec{k}=\vec{0}}^{\vec{N(n)}}\gamma_n(\vec{\lambda};\vec{k})\left(\vec{\tfrac{\partial}{\partial \lambda}}\right)^{\vec{k}}
    =\left(\left[ \bra{}\left(e^{ad_{\vec{\lambda}\cdot\vec{O}}}H\right)\right]\bigg\vert_{O_i\to\tfrac{\partial}{\partial\lambda_i}}\right)\ket{\emptyset}
    \,,
\end{equation*}
which concludes the proof.
\end{proof}

While we will present some concrete examples of combinatorial conversion for non-discrete graph rewriting systems in Sections~\ref{sec:AE} and~\ref{sec:BC}, it is worthwhile emphasizing that the concept itself is modeled after the type of computation relevant in the determination of moment EGF evolution equations in chemical reaction systems (see~\eqref{eq:MEGFbd} of Example~\ref{exa:bd2}).

\section{Moment bisimulation}
\label{sec:DB}

The manner in which  the Combinatorial Conversion Theorem describes the evolution of connected observables by the evolution of their exponential moment generating function, and whence of a formal power series, motivates the following definition:

\begin{defi}[Moment bisimulation]
Consider two SRSs with Hamiltonians $H_i$ and two equinumerous sets $\mathsf{O}_i$ of connected observables polynomially jump-closed w.r.t.\ $H_i$, respectively ($i\in 1,2$). Denote by $f:\mathsf{O}_1\xrightarrow{\cong}\mathsf{O}_2$ a bijection of the two sets of observables. Then the pairs $(H_1,\mathsf{O}_1)$ and $(H_2,\mathsf{O}_2)$ are said to be \textbf{\emph{moment bisimilar (via $f$)}} if the \emph{moment bisimilarity $(\mathsf{MB})$ condition} holds:
\begin{equation*}
\left(\left[ \bra{}\left(e^{ad_{\vec{\lambda}\cdot\vec{O}}}H_1\right)\right]\bigg\vert_{O_i\to\tfrac{\partial}{\partial\lambda_i}}\right)\ket{\emptyset}=\left(\left[ \bra{}\left(e^{ad_{\vec{\lambda}\cdot\vec{f(O)}}}H_2\right)\right]\bigg\vert_{f(O_i)\to\tfrac{\partial}{\partial\lambda_i}}\right)\ket{\emptyset}\,.
\tag*{\textsf{(MB)}}
\end{equation*}
Here, we have employed the multi-index notations
\[
\vec{\lambda}\cdot \vec{O}:=\lambda_1 O_1+\dotsc +\lambda_n O_n\,,\qquad \vec{\lambda}\cdot \vec{f(O)}:=\lambda_1  f(O_1)+\dotsc +\lambda_nf(O_n)\,,
\]
with $n=|\mathsf{O}_1|=|\mathsf{O}_2|$, and where for each $j\in\{1,\dotsc,n\}$, $f(O_j)\in\mathsf{O}_2$ denotes the image of $O_j\in \mathsf{O}_1$ under the isomorphism $f:\mathsf{O}_1\xrightarrow{\cong}\mathsf{O}_2$.
\end{defi}

By virtue of the Combinatorial Conversion Theorem (Theorem~\ref{thm:CCT}), in the situation described in the definition the moment bisimilarity condition $(\mathsf{MB})$, letting \[
M_1(t;\vec{\lambda}):=\left.\left.\left\langle\left\vert e^{\vec{\lambda}\cdot\vec{O}}\right\vert \Psi_1(t)\right\rangle\right.\right.\quad \text{and}\quad M_2(t;\vec{\lambda}):=\left.\left.\left\langle\left\vert e^{\vec{\lambda}\cdot\vec{f(O)}}\right\vert \Psi_2(t)\right\rangle\right.\right.
\]
entails that
\begin{equation*}
\tfrac{\partial}{\partial t}M_1(t;\vec{\lambda})=\tfrac{\partial}{\partial t}M_2(t;\vec{\lambda})\,.
\end{equation*}
Therefore, for choices of initial states $\ket{\Psi_1(0)},\ket{\Psi_2(0)}\in \Prob(\cS)$ such that $M_1(0;\vec{\lambda})=M_2(0;\vec{\lambda})$ one finds that $M_1(t;\vec{\lambda})=M_2(t;\vec{\lambda})$ for all $t\geq 0$.\\

As a first step towards a constructive theory of moment bisimulation, we present in the following a peculiar scenario in which a polynomially jump-closed set of connected observables $\mathsf{O}$ for a stochastic rewriting system (SRS) with Hamiltonian $H$ is moment bisimilar to a set $\mathsf{O}_{discr}$ of \emph{discrete} connected observables for a Hamiltonian $H_{discr}$ of a \emph{discrete} SRS. The idea originated from previous work on the study of exponential moment generating function evolution equations for chemical reaction systems~\cite{bdp2017}, which uncovered the precise form such evolution equations may take (and thus a characterization of the \emph{discrete moment bisimilarity class}). The interested readers may wish to compare the following statements to the discussion of discrete graph rewriting systems (i.e.\ chemical reaction systems) as provided in Section~\ref{sec:CRS}.
\begin{restatable}[Discrete Moment Bisimulation]{thm}{dBSthm}\label{thm:DMB}
Let $H=\sum_{j\in \cJ}\kappa_j(\rho(h_j)-\bO(h_j))$ be the Hamiltonian of a SRS, and let $\mathsf{O}=\{O_i\in \cO_c\}_{i\in \cI}$ be a polynomially jump-closed set of observables for $H$. Suppose the following two conditions\footnote{One may further relax condition $(i)$ to allow for constant coefficients $\eta_{ij}=\beta_i\cdot \eta'_{ij}$ with $\beta_i\in \bR_{\neq 0}$ and $\eta_{ij}'\in \bZ$, at the expense of a change of formal variables $\lambda_i$.} (amounting to the  \emph{discrete moment bisimulation $(\mathsf{DMB})$ condition}) are verified:
\begin{equation*}
\begin{aligned}
(i) && \forall i\in \cI,j\in \cJ:\exists \eta_{ij}\in \bZ:\quad   ad_{O_i}(\rho(h_j))&=\eta_{ij}\rho(h_j)\\
(ii)&& \forall j\in \cJ:\exists \alpha_j\in \bR,\vec{k}_j\in \bZ_{\geq 0}^{\cI}:
\quad \bra{}\rho(h_j)&=\alpha_j\sum_{\vec{n}=\vec{0}}^{\vec{k}_j}
s_1(\vec{k}_j,\vec{n})\bra{}\vec{O}^{\vec{n}}\,,
\end{aligned}
\tag*{\textsf{(DMB)}}
\end{equation*}
where we made use of the multi-index notation\footnote{Here, $s_1(m,n)$ denotes the (signed) Stirling numbers of the first kind, cf.\ sequence~\href{https://oeis.org/A008275}{A008275} of the OEIS database~\cite{OEISstirlingFirst}); compare~\eqref{eq:discrStirling}.}
\begin{equation*}
    s_1(\vec{k}_j,\vec{n}):=\prod_{i\in\cI} s_1((\vec{k}_j)_i;n_i)\,.
\end{equation*}
Then for every bijection $F:\cI\xrightarrow{\cong}\cC$ from $\cI$ to a set of \emph{vertex colors} $\cC$, denoting by $\mathsf{O}_{discr}:=\{\hat{n}_{c}\}_{c\in \cC}$ a set of discrete connected graph observables (with $\hat{n}_c\in \cO_{discr}$ counting vertices of color $c\in \cC$) and by $H_{discr}$ the Hamiltonian of an SGRS of discrete graphs with vertices of colors $\cC$ defined as
\begin{equation*}
\begin{aligned}
H_{discr}&:=\sum_{j\in \cJ}\alpha_j\kappa_j(\rho(\tilde{h_j})-\bO(\tilde{h_j}))\,,\; \tilde{h}_j:=
\delta\rSpanAlt{\vec{\bullet}^{(\vec{\eta}_j+\vec{k}_j)}}{}{\emptyset}{}{\vec{\bullet}^{\vec{k}_j}}\,,\;\vec{\eta}_{j_i}:=\eta_{ij}\,.
\end{aligned}
\end{equation*}
the pair $(H,\mathsf{O})$ is moment bisimilar to $(H_{discr},\mathsf{O}_{discr})$ via the isomorphism $f(O_i):=\hat{n}_{F(i)}$.
\end{restatable}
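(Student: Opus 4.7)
The plan is to prove the theorem by computing the differential operator $\bD(\vec{\lambda},\partial_{\vec{\lambda}})$ delivered by the Combinatorial Conversion Theorem (Theorem~\ref{thm:CCT}) on both sides of the proposed bisimulation and showing they coincide. Since both $(H,\mathsf{O})$ and $(H_{discr},\mathsf{O}_{discr})$ are polynomially jump-closed by hypothesis (resp.\ by a direct check I outline below), Theorem~\ref{thm:CCT} reduces the moment bisimilarity condition $(\mathsf{MB})$ to the equality of two explicit PDEs on formal power series.

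First I would process $(H,\mathsf{O})$. Using that the $O_i$ are diagonal (hence mutually commuting) and condition $(i)$, I get $ad_{\vec{\lambda}\cdot\vec{O}}^{\circ n}(\rho(h_j))=(\vec{\lambda}\cdot\vec{\eta}_j)^n\rho(h_j)$, so
\[
  e^{ad_{\vec{\lambda}\cdot\vec{O}}}\rho(h_j)=e^{\vec{\lambda}\cdot\vec{\eta}_j}\rho(h_j)\,.
\]
Since $\bO(h_j)\in \cO_{\bfC}$ is also diagonal, $ad_{\vec{\lambda}\cdot\vec{O}}$ annihilates it and it survives the exponential unchanged. Combining this with Lemma~\ref{lem:Haux}/the jump-closure Theorem~\ref{thm:jc}, which gives $\bra{}\rho(h_j)=\bra{}\bO(h_j)$, I obtain
\[
  \bra{}e^{ad_{\vec{\lambda}\cdot\vec{O}}}H=\sum_{j\in\cJ}\kappa_j\bigl(e^{\vec{\lambda}\cdot\vec{\eta}_j}-1\bigr)\bra{}\rho(h_j)\,.
\]
Applying condition $(ii)$ and the substitution $O_i\mapsto\partial_{\lambda_i}$ prescribed by Theorem~\ref{thm:CCT} then yields
\[
  \bD(\vec{\lambda},\partial_{\vec{\lambda}})=\sum_{j\in\cJ}\kappa_j\alpha_j\bigl(e^{\vec{\lambda}\cdot\vec{\eta}_j}-1\bigr)\sum_{\vec{n}=\vec{0}}^{\vec{k}_j}s_1(\vec{k}_j,\vec{n})\Bigl(\vec{\tfrac{\partial}{\partial\lambda}}\Bigr)^{\vec{n}}\,.
\]

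Next I would perform the analogous calculation for $(H_{discr},\mathsf{O}_{discr})$. Using the multi-species Heisenberg--Weyl realization implicit in Example~\ref{ex:discrGRS} and Example~\ref{exa:HWra2}, one identifies $\rho(\tilde{h}_j)=(a^{\dag})^{\vec{\eta}_j+\vec{k}_j}a^{\vec{k}_j}$ and $\hat{n}_c=a_c^{\dag}a_c$. The canonical commutation relation then gives $ad_{\hat{n}_c}\rho(\tilde{h}_j)=\bigl((\vec{\eta}_j+\vec{k}_j)_c-(\vec{k}_j)_c\bigr)\rho(\tilde{h}_j)=\eta_{cj}\rho(\tilde{h}_j)$, verifying $(i)$ with the \emph{same} $\eta_{ij}$ under the bijection $F$. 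For $(ii)$, jump-closure gives $\bra{}\rho(\tilde{h}_j)=\bra{}(a^{\dag})^{\vec{k}_j}a^{\vec{k}_j}$, and the multi-index version of identity~\eqref{eq:discrStirling} (a product over species of the one-species Stirling identity) rewrites this as $\sum_{\vec{n}=\vec{0}}^{\vec{k}_j}s_1(\vec{k}_j,\vec{n})\bra{}\hat{\vec{n}}^{\vec{n}}$, i.e.\ $(ii)$ with $\alpha_j=1$. Running the first-paragraph computation for $H_{discr}=\sum_j\alpha_j\kappa_j(\rho(\tilde{h}_j)-\bO(\tilde{h}_j))$ then produces exactly the same differential operator $\bD(\vec{\lambda},\partial_{\vec{\lambda}})$ after the substitution $\hat{n}_{F(i)}\mapsto\partial_{\lambda_i}$, matching the expression above termwise.

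The routine but somewhat delicate part of the argument is the multi-index bookkeeping: checking that the Stirling identity factorizes correctly across species, that the identifications of formal variables induced by $f(O_i)=\hat{n}_{F(i)}$ line up on both sides of $(\mathsf{MB})$, and that the signs/conventions for $\vec{\eta}_j\in\bZ^{\cI}$ (which may have negative components) are consistently handled when interpreting the output interface $\vec{\bullet}^{\vec{\eta}_j+\vec{k}_j}$ of $\tilde{h}_j$ in $\mathbf{FinGraph}$. I anticipate the main conceptual step is precisely the verification that conditions $(i)$ and $(ii)$ are not merely assumptions on $(H,\mathsf{O})$ but in fact \emph{characterize} the discrete bisimilarity class: the computation above shows that any $(H,\mathsf{O})$ satisfying $(\mathsf{DMB})$ is forced to have a PDE that can be realized verbatim by the discrete graph rewriting system with rules $\tilde{h}_j$ and rates $\alpha_j\kappa_j$, which is the content of the theorem.
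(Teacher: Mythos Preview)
Your proposal is correct and follows essentially the same strategy as the paper: both proofs hinge on verifying that the discrete system $(H_{discr},\mathsf{O}_{discr})$ satisfies the analogues of conditions $(i)$ and $(ii)$ with the specific data $\eta_{ij}$, $\vec{k}_j$ and $\alpha_j=1$, so that the two systems produce identical evolution operators under the Combinatorial Conversion Theorem. The paper's version is organized slightly differently---it first sets up the multi-species Heisenberg--Weyl picture, argues that on discrete graph states one may restrict w.l.o.g.\ to rules with empty context, then derives the discrete-side relations $(i)$ and $(ii)$ and concludes rather tersely that the claim follows ``by requiring the candidate SGRS \dots\ to satisfy the equivalent of equations $(i)$ and $(ii)$''---whereas you carry out the comparison more explicitly by writing out $\bD(\vec{\lambda},\partial_{\vec{\lambda}})$ on each side and matching termwise; but the underlying argument is the same.
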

\begin{proof}
The proof relies on results from the theory of stochastic mechanics for chemical reaction systems recently developed by the first-named author and his team~\cite{bdp2017}. Let thus $\cC$ be a set of colors, and let
\[
  a^{\dag}_c:=
  \rho\left(\delta\rSpanAlt{\bullet_c}{}{\emptyset}{}{\emptyset}\right)
  \quad\text{and}\quad a_c:=\rho\left(\delta\rSpanAlt{\emptyset}{}{\emptyset}{}{\bullet_c}\right)
\]
denote the \emph{``creation''} and \emph{``annihilation''} operators, respectively (of vertices of color $c\in \cC$; see also~\cite{bdg2016,Behr2018} for an extended discussion). According to these definitions, the operators fulfill the \emph{canonical commutation relations} of the multi-species Heisenberg-Weyl algebra, whence for all $c,c'\in \cC$,
\[
[a^{\dag}_c,a^{\dag}_{c'}]=0=[a_c,a_{c'}]\,,\quad
[a_c,a^{\dag}_{c'}]=\delta_{c,c'}\mathbb{1}\,.
\]
Let $\ket{\vec{\bullet}^{ + \:\vec{n}}}$ denote the basis vector associated to the finite discrete graph with $|\vec{n}|:=\sum_{c\in \cC}n_c$ vertices (with $n_c$ vertices of color $c\in \cC$), and $\hat{n}_c:=a^{\dag}_ca_c$ the so-called \emph{number operator of color $c$}, which by definition is effectively a graph observable ``counting'' vertices of color $c\in \cC$,
\[
\hat{n}_c\ket{\vec{\bullet}^{ + \:\vec{n}}}=n_c\ket{\vec{\bullet}^{ + \:\vec{n}}}\,.
\]
It may be checked from our definitions of rule composition that one may w.l.o.g.\ restrict the set of linear rules of discrete graphs acting on discrete graphs only to the standard set of linear rules giving rise to rule algebra elements of the form
\[
h_{\vec{o},\vec{i}}:=
\delta\rSpanAlt{\vec{\bullet}^{ + \:\vec{o}}}{}{\emptyset}{}{\vec{\bullet}^{ + \:\vec{i}}}\,,
\]
since for all $\vec{i},\vec{o},\vec{n}\in \bZ_{\geq 0}^{\cC}$ and for all injective partial morphisms 
\[
\left(\GRule{\vec{\bullet}^{ + \:\vec{o}}}{t}{\vec{\bullet}^{ + \:\vec{i}}}\right)=\rSpanAlt{\vec{\bullet}^{ + \:\vec{o}}}{}{\vec{\bullet}^{ + \:\vec{k}}}{}{\vec{\bullet}^{ + \:\vec{i}}}\,,
\]
one finds that
\[
\rho\left(\delta\rSpanAlt{\vec{\bullet}^{ + \:\vec{o}}}{}{\vec{\bullet}^{ + \:\vec{k}}}{}{\vec{\bullet}^{ + \:\vec{i}}}\right)\ket{\vec{\bullet}^{ + \:\vec{n}}}=
\rho\left(\delta\rSpanAlt{\vec{\bullet}^{ + \:\vec{o}}}{}{\emptyset}{}{\vec{\bullet}^{ + \:\vec{i}}}\right)\ket{\vec{\bullet}^{ + \:\vec{n}}}\,.
\]
Using the canonical commutation relations to verify the formula
\[
H_{\vec{o},\vec{i}}:=\rho(h_{\vec{o},\vec{i}})=\vec{a}^{\dag\:\vec{o}}\vec{a}^{\vec{i}}\,,
\]
we thus arrive at the first conclusion that we have identified the in this sense canonical basis for infinitesimal generators of discrete graph rewriting CTMCs. It remains to derive the precise form of the algebraic conditions as presented in the theorem. To this end, applying yet again the canonical commutation relations and making use of the jump-closure theorem, we derive the following two properties of discrete rewriting systems:
\begin{align*}
(i)&&\quad 
[\hat{n}_c,H_{\vec{o},\vec{i}}]&=(o_c-i_c)H_{\vec{o},\vec{i}}\\
(ii)&& \quad 
\bra{}H_{\vec{o},\vec{i}}
&=\bra{}\vec{a}^{\dag\:\vec{i}}\vec{a}^{\vec{i}}=\sum_{\vec{k}=\vec{0}}^{\vec{i}}\left(\prod_{c\in \cC}s_1(i_c,k_c)\right)\bra{}\vec{\hat{n}}^{\vec{k}}
\end{align*}%
Here, $(ii)$ may be seen as the multi-colored variant of~\eqref{eq:discrStirling}. The claim of the theorem then follows by requiring the candidate SGRS and its polynomially jump-closed set of observables to satisfy the equivalent of equations $(i)$ and $(ii)$ above for a choice of ``index-to-color'' bijection $f:\cI\xrightarrow{\cong}\cC$.
\end{proof}

The key point of this theorem is that one may focus on computing the evolution of $\cM_{discr}(t;\vec{\lambda})$ instead of on the more difficult task of computing $\cM(t;\vec{\lambda})$ directly. In particular, this opens the possibility of using e.g.\ standard SSA techniques to obtain (approximations of) $\cM_{discr}(t;\vec{\lambda})$ or individual moments by means of \emph{simulations}, or to employ in favorable cases exact solution techniques such as the ones developed in~\cite{bdp2017} for solving the evolution equation for $\cM_{discr}(t;\vec{\lambda})$ (and whence, if $\cM_{discr}(0;\vec{\lambda})=\cM(0;\vec{\lambda})$, for $\cM(t;\vec{\lambda})$). Intriguingly, this opens the possibility to obtain \emph{probability distributions} of observable counts, a feature hardly ever realizable in practice via direct simulation of stochastic graph rewriting systems. While as explained in the proof of this theorem any \emph{discrete} graph observable may be expressed as a polynomial in the discrete graph observables $\hat{n}_c$ with precisely the Stirling number coefficients as demanded by the theorem, this feature does not hold in general for generic graph observables, which renders the search for realizations of discrete moment bisimulations a highly non-trivial task.

\section{Application example case study: a Voter Model}
\label{sec:AE}

Consider the following variant of a \emph{Voter Model}~\cite{danos2014approximations}, constructed as a stochastic graph rewriting system over the (colored version of the) finitary adhesive, extensive category $\mathbf{uGraph}$~\cite{Behr2018} of finite undirected multigraphs with the following four transitions:
\gdef\tpScale{0.6}
\begin{equation*}
\GRule{\tP{%
\node[vertices,fill=black!40!white] (u) at (0,1) {};
\node[vertices,fill=white] (a) at (\colSEP,1) {};
\node[vertices] (b) at (2*\colSEP,1) {};
\draw (a) edge[undirEdge,black] (b);}}{\:\kappa_{0}\:}{\tP{%
\node[vertices,fill=black!40!white] (u) at (0,1) {};
\node[vertices,fill=white] (a) at (\colSEP,1) {};
\node[vertices] (b) at (2*\colSEP,1) {};
\draw (a) edge[undirEdge,black] (u);
}}\,,\quad
\GRule{\tP{%
\node[vertices,fill=black!40!white] (u) at (0,1) {};
\node[vertices] (a) at (\colSEP,1) {};
\node[vertices,fill=white] (b) at (2*\colSEP,1) {};
\draw (a) edge[undirEdge,black] (b);}}{\:\kappa_{1}\:}{\tP{%
\node[vertices,fill=black!40!white] (u) at (0,1) {};
\node[vertices] (a) at (\colSEP,1) {};
\node[vertices,fill=white] (b) at (2*\colSEP,1) {};
\draw (a) edge[undirEdge,black] (u);
}}\,,\quad
\GRule{\tP{%
\node[vertices,fill=white] (a) at (1,1) {};
\node[vertices] (b) at (1+\colSEP,1) {};
\draw (a) edge[undirEdge,black] (b);}}{\:\kappa_{01}\:}{\tP{%
\node[vertices,fill=white] (a) at (1,1) {};
\node[vertices,fill=white] (b) at (1+\colSEP,1) {};
\draw (a) edge[undirEdge,black] (b);}}\,,\quad
\GRule{\tP{%
\node[vertices,fill=white] (a) at (1,1) {};
\node[vertices] (b) at (1+\colSEP,1) {};
\draw (a) edge[undirEdge,black] (b);}}{\:\kappa_{10}\:}{\tP{%
\node[vertices] (a) at (1,1) {};
\node[vertices] (b) at (1+\colSEP,1) {};
\draw (a) edge[undirEdge,black] (b);}}\,.
\end{equation*}
In this shorthand notation, vertices marked $\tP{%
\node[vertices,fill=black!40!white] (u) at (0,1) {};}$ are of either black or white color. Translating the graphical rules into rule algebra elements via the standard dictionary presented in this paper, we obtain the following explicit realization of the model. 

\paragraph{\textbf{Notational convention:}} In visualizations of rule algebra elements associated to linear rules via rule diagrams, we take the convention that ``time flows upwards'', i.e.\ the bottom graph denotes the input $I$ and the top graph the output $O$ of the respective linear rule $(O\leftarrow K\rightarrow I)$ depicted a given diagram, while the dotted lines indicate which elements are preserved throughout the transformation (i.e.\ the structure of the graph $K$ and of the embeddings $K\rightarrow I$ and $K\rightarrow O$). For visual clarity, elements of $O$ and $I$ not in the codomains of $K\rightarrow O$ and $K\rightarrow I$, respectively, will be marked by in- and outgoing dotted lines decorated with a $\times$ symbol. As a notational simplification, we will moreover from hereon no longer indicate graphically whether a given edge is preserved vs.\@ deleted and then recreated, since the two corresponding rules are \emph{indistinguishable} in their action on graphs according to our notion of canonical representation. Consequently, we may lighten our graphical notations by assuming that edges on the input of a rule are always deleted, while edges on the output are always created, which permits us to drop the dotted lines that were previously used to indicate these fine details for the edges altogether.
\begin{defi}
The \emph{Voter Model} is a stochastic DPO graph rewriting system with evolution operator (with the operation $\bO$ as defined in~\eqref{eq:bOp})
\begin{equation*}
\label{eq:defVM}
\begin{aligned}
  H_{VM}&:=\rho(h_{VM})-\bO(h_{VM})\\
  h_{VM}&:=\kappa_0 (h_{0_w}+h_{0_b})+\kappa_1(h_{1_w}+h_{1_b})+\kappa_{01}h_{01}+\kappa_{10}h_{10}\\
  h_{0_w}&:=\includeFig{1}{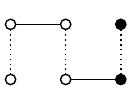}\,,\quad 
  h_{0_b}:=\includeFig{2}{images/defVM.pdf}\,,\quad
  h_{1_w}:=\includeFig{3}{images/defVM.pdf}\,,\quad
  h_{1_b}:=\includeFig{4}{images/defVM.pdf}\\
  h_{01}&:=\includeFig{5}{images/defVM.pdf}\,,\quad
  h_{10}:=\includeFig{6}{images/defVM.pdf}\\
  \bO(h_{0_w})&=\bO(h_{1_w})=O_{\tPgo{%
  \node[vertices,fill=white] (a) at (1,1) {};
  }\;\tPgo{%
\node[vertices,fill=white] (a) at (0,1) {};
\node[vertices] (b) at (\colSEP,1) {};
\draw (a) edge[undirEdge,black] (b);}}\,,\quad
  \bO(h_{0_b})=\bO(h_{1_b})=O_{\tPgo{%
  \node[vertices] (a) at (1,1) {};
  }\;\;\tPgo{%
\node[vertices,fill=white] (a) at (0,1) {};
\node[vertices] (b) at (\colSEP,1) {};
\draw (a) edge[undirEdge,black] (b);}}\\
\bO(h_{01})&=\bO(h_{10})=O_{\tPgo{%
\node[vertices,fill=white] (a) at (0,1) {};
\node[vertices] (b) at (\colSEP,1) {};
\draw (a) edge[undirEdge,black] (b);}}\,.
\end{aligned}
\end{equation*}
Here, we have made use of the shorthand notation $O_P:=\rho(\delta(P\hookleftarrow P'\hookrightarrow P))$, with $P\in \obj{\bfC}_{\cong}$ a ``pattern'', and for $P'$ obtained from $P$ by dropping all edges (which is the shape of observables obtained by applying $\bO$ to the contributions of $H$ in the present case).
\end{defi}

Our aim in studying this model is to illustrate the utility of the technique of exponential moment generating function approach in interplay with the rule algebra formalism. To this end, let us study the time-evolution of the exponential moment generating function of the following graph observables (counting vertices and edges of the different colors):
\begin{align*}
  O_w&:=O_{\tPgo{%
  \node[vertices,fill=white] (a) at (1,1) {};
  }}\,,\quad
  O_b:=O_{\tPgo{%
  \node[vertices] (a) at (1,1) {};
  }}\,,\quad
  O_{ww}:=\tfrac{1}{2}O_{\tPgo{%
\node[vertices,fill=white] (a) at (0,1) {};
\node[vertices,fill=white] (b) at (\colSEP,1) {};
\draw (a) edge[undirEdge,black] (b);}}\,,\quad
  O_{wb}:=O_{\tPgo{%
\node[vertices] (a) at (0,1) {};
\node[vertices,fill=white] (b) at (\colSEP,1) {};
\draw (a) edge[undirEdge,black] (b);}}\,,\quad
  O_{bb}:=\tfrac{1}{2}O_{\tPgo{%
\node[vertices] (a) at (0,1) {};
\node[vertices] (b) at (\colSEP,1) {};
\draw (a) edge[undirEdge,black] (b);}}\,.
\end{align*}%
Making use of rule-algebraic compositions, we may first of all convert the expressions for $\bO(h_{i_w})$ and $\bO(h_{i_b})$ ($i=0,1$) into the form
\begin{equation}\label{eq:VMjc}
\bO(h_{i_w})=(O_w-1)O_{wb}\,,\quad \bO(h_{i_b})=(O_b-1)O_{wb}\,.
\end{equation}
In the analysis presented in the following, we will feature a number of typical characteristics of such computations, which here include the appearances of certain \emph{symmetries} due to conservation laws (of total numbers of vertices and of edges, respectively).\\

To provide a simple, yet non-trivial example of discrete moment bisimulation, let us consider the case $\kappa_{01}=\kappa_{10}=0$, i.e.\ the Voter Model without vertex-recoloring transitions. Denote by $H_F:=H_{VM}\vert_{\kappa_{01}=\kappa_{10}=0}$ the corresponding Hamiltonian. It is evident that the remaining (``edge-flipping'') transitions, do not modify the numbers of vertices of white and black color, respectively, a feature that manifests itself upon computing the exponential moment generating function for the vertex observables $O_w$ and $O_b$:
\begin{equation}\label{eq:vertexCons}
\begin{aligned}
ad_{\lambda_wO_w+\lambda_b O_b}(H_F)&=0\quad\xRightarrow{\eqref{eq:MGev}}\quad
\tfrac{\partial}{\partial t}\cM_{F_v}(t;\lambda_w,\lambda_b):=\tfrac{\partial}{\partial t}\bra{}e^{\lambda_wO_w+\lambda_b O_b}\ket{\Psi(t)}=0\\
\Leftrightarrow\quad \cM_{F_v}(t;\lambda_w,\lambda_b)&=\cM_{F_v}(0;\lambda_w,\lambda_b)
\,.
\end{aligned}
\end{equation}
If we initialize the system at a \emph{pure} graph state, i.e.\ for $\ket{\Psi(0)}=\ket{G_0}$ (for some graph $G_0$), for which $\bra{}O_x\ket{G_0}=N_{x}$ ($x\in \{w,b\}$) are the numbers of white and black vertices, respectively, we find the explicit formula
\begin{equation*}
\cM_{F_v}(t;\lambda_w,\lambda_b)=\cM_{F_v}(0;\lambda_w,\lambda_b)=e^{\lambda_w N_w+\lambda_b N_b}\,,
\end{equation*}
which in turn implies that $O_x\ket{\Psi(t)}=N_x\ket{\Psi(t)}$ for $x\in \{w,b\}$ and for all $t\geq 0$.\\ 

To determine the dynamics of edge observables, we introduce the notations
\begin{equation*}
\begin{aligned}
\vec{\lambda_E}\cdot\vec{O_E}&:=\lambda_{ww}O_{ww}+\lambda_{wb}O_{wb}+\lambda_{bb}O_{bb}\\
\vec{\kappa_F}\cdot \vec{H_F}&:=
\kappa_0(H_{0_w}+H_{0_b})+\kappa_1(H_{1_w}+H_{1_b})
\end{aligned}
\end{equation*}
with $H_X:=\rho(h_X)$ and compute the following commutator (using the rule algebra):
\begin{equation}\label{eq:edgeComm}
[\vec{\lambda_E}\cdot\vec{O_E},\vec{\kappa_F}\cdot \vec{H_F}]=\kappa_0(\lambda_{ww}-\lambda_{wb})H_{0_w}+\kappa_1(\lambda_{bb}-\lambda_{wb})H_{1_b}\,.
\end{equation}
We observe in particular that for $O_E:=O_{ww}+O_{wb}+O_{bb}$,
\begin{equation*}
ad_{O_E}(\vec{\kappa_F}\cdot\vec{H_F})=0\,,
\end{equation*}
whence the \emph{total number of edges} $N_E=N_{ww}+N_{wb}+N_{bb}$ is conserved by the rewriting rules. Secondly, the commutation relation~\eqref{eq:edgeComm} entails the following evolution equation for the edge-observable EMGF $M(t;\vec{\lambda}_E)$:
\begin{equation*}
\begin{aligned}
\tfrac{\partial}{\partial t} M(t;\vec{\lambda_E})&=\tfrac{\partial}{\partial t} \left.\left\langle \left\vert e^{\vec{\lambda_E}\cdot\vec{O_E}}\right\vert\right. \Psi(t)\right\rangle
\overset{\eqref{eq:MGev}}{=}\sum_{n\geq 1}\tfrac{1}{n!}
\left.\left\langle \left\vert\left(ad_{\vec{\lambda_E}\cdot\vec{O_E}}^{\circ\:n}(\vec{\kappa_F}\cdot\vec{H_F})\right) e^{\vec{\lambda_E}\cdot\vec{O_E}}\right\vert \Psi(t)\right\rangle\right.\\
&=\left.\left\langle \left\vert\left(\kappa_0\left(e^{\lambda_{ww}-\lambda_{wb}}-1\right) H_{0_w} +\kappa_1\left(e^{\lambda_{bb}-\lambda_{wb}}-1\right)H_{1_b}\right)
e^{\vec{\lambda_E}\cdot\vec{O_E}}\right\vert \Psi(t)\right\rangle\right.\,.
\end{aligned}
\end{equation*}
Due to the jump-closure property described in Theorem~\ref{thm:jc} and via~\eqref{eq:VMjc}, we find that 
\begin{equation*}
\bra{}H_{0_w}=\bra{}\bO(h_{0_w})=\bra{}(O_w-1)O_{wb}\,,\qquad \bra{}H_{1_b}=\bra{}\bO(h_{1_b})=\bra{}(O_b-1)O_{wb}\,. 
\end{equation*}
Assuming moreover that $\ket{\Psi(0)}=\ket{G_0}$ for some graph $G_0$ (and whence $\bra{}O_x\ket{\Psi(t)}=N_x$ for $x\in\{w,b\}$ and all $t\geq 0$), we finally obtain the following instance of the Combinatorial Conversion Theorem:
\begin{equation*}
\begin{aligned}
\tfrac{\partial}{\partial t} M(t;\vec{\lambda_E})
&=
K\left(\bar{\kappa}_0\left(e^{\lambda_{ww}-\lambda_{wb}}-1\right)+\bar{\kappa}_1\left(e^{\lambda_{bb}-\lambda_{wb}}-1\right)\right)\tfrac{\partial}{\partial \lambda_{wb}}M(t;\vec{\lambda_E})\\
K&:=\kappa_0(N_w-1)+\kappa_1(N_b-1)\,,\;
\bar{\kappa}_0:=\tfrac{\kappa_0(N_w-1)}{K}\,,\;
\bar{\kappa}_1:=\tfrac{\kappa_1(N_b-1)}{K}\,.
\end{aligned}
\end{equation*}
This particular type of partial differential equation may be solved e.g.\ via the semi-linear normal-ordering technique as introduced in~\cite{Dattoli:1997iz,blasiak2005boson,blasiak2011combinatorial} (and recently applied in~\cite{bdp2017} to semi-linear PDEs for chemical reaction systems). Together with the initial condition 
\begin{equation*}
M(0;\vec{\lambda_E})=e^{\lambda_{ww}N_{ww}+\lambda_{wb}N_{wb}+\lambda_{bb}N_{bb}}\,,
\end{equation*}
where $N_{xy}$ denotes the initial number of edges of type $xy$, 
we thus obtain the following closed-form solution:
\begin{equation*}
M(t;\vec{\lambda_E})=e^{\lambda_{ww}N_{ww}+\lambda_{bb}N_{bb}}\left(e^{-Kt}e^{\lambda_{wb}}+\left(1-e^{-K t}\right)\left(\bar{\kappa}_0e^{\lambda_{ww}}+\bar{\kappa}_1e^{\lambda_{bb}}\right)\right)^{N_{wb}}\,.
\end{equation*}
We present in Figure~\ref{fig:VM2} the time-evolution of the first and second cumulants of the graph observables (computed from the corresponding EMGF $C(t;\vec{\lambda_E})=\log(M(t;\vec{\lambda_E}))$). Noticing that the commutation relations presented in~\eqref{eq:edgeComm} precisely fulfill the requirements of the discrete moment bisimulation theorem (Theorem~\ref{thm:DMB}), we may in fact conclude that via an isomorphism $(ww,wb,bb)\cong(W,D,B)$ of ``indexes-to-colors'' the evolution of the edge observable counts is bisimilar to the discrete rewriting system with linear rules
\begin{equation*}
\GRule{\bullet_W}{\kappa_0(N_w-1)}{\bullet_D}\,,\quad
\GRule{\bullet_B}{\kappa_1(N_b-1)}{\bullet_D}\quad\text{and}\quad \mathsf{O}_{discr}=\{\hat{n}_W,\hat{n}_D,\hat{n}_B\}\,.
\end{equation*}
Since for probability distributions of discrete rewriting systems one has the well-known relationship 
\begin{equation*}
P(t;\vec{\lambda})=M(t;\vec{\ln(\lambda)})
\end{equation*}
between probability and moment generating functions, and due to the conservation of the overall number of edges in this rewriting system, it is thus possible to compute the ternary discrete probability distribution plots as presented in Figure~\ref{fig:VM1}. To the best of our knowledge, this appears to be the first such computation of graph observable count probability distributions for stochastic graph rewriting systems of this kind.\\

We conclude our practical application example by illustrating the typical problem of the failure of the polynomial jump-closure property for a given set of graph observables, indicating that in practice the Combinatorial Conversion Theorem yields a valuable guiding criterion for analyzing SGRSs. The problem is exemplified via the following commutator:
\begin{align*}
[O_{ww},H_{01}]&=H_{01}+\rho\left(\tP{%
\vI{1}{1}{fill=white}{}{fill=white}{}
\vI{1}{2}{fill=white}{}{}{}
\vI{1}{3}{fill=white}{}{fill=white}{}
\seI{1}{1}{=}{}{}{}{}
\seI{1}{2}{=}{}{}{}{}
}\right)\,,\qquad H_{01}=\rho(\delta(h_{01}))\,.
\end{align*}%
Since this term would give a contribution to the evolution of the moments of the edge observable $O_{ww}$ according to the Combinatorial Conversion Theorem, we thus face the typical \emph{problem of growing motifs}: taking repeated commutators and applying the jump-closure theorem, the evolution of the moments of $O_{ww}$ is encoded in a PDE that involves observables for larger and larger motifs, and in particular the set of edge observables thus no longer possesses the polynomial jump-closure property required by the Combinatorial Conversion Theorem. It appears to be a fruitful direction for future research to refine our understanding of the rule-algebraic structure of polynomially jump-closed sets of observables, and to understand better which types of approximation schemes might be available in order to develop a notion of approximate polynomial jump-closure.

\section{Application example case study: a toy cryptocurrency model}
\label{sec:BC}

In order to demonstrate the versatility of the discrete bisimulation ideas, let us consider a toy model for a cryptocurrency, loosely inspired by~\cite{tangleWhitePaper}. The state space of the model is a particular kind of tree-like graph structure with four different types of vertices, with transitions as illustrated in Figure~\ref{fig:tmt}: 
an \emph{active ticket} type (marked $\vSquare$), 
a  \emph{de-activated ticket} type (marked $\vSquareSolid$), 
a \emph{transaction} vertex type (marked $\bullet$) and 
a \emph{ledger} type (marked $\vCirc$). The structure is interpreted as follows:
\begin{itemize}
	\item \emph{Active tickets} $\square$ are necessary in order for a \emph{transaction} vertex $\bullet$ to be able to grow the ledger, but their presence also activates the ticket growth and ticket pair growth transitions.
	\item The tickets $\square$ are \emph{de-activated} over time at random; if all tickets at a transaction vertex $\bullet$ are de-activated, that vertex in effect becomes inactive (i.e.\@ can no longer perform any transitions).
	\item \emph{Rearrangements} of tickets between transaction vertices only occur between vertices $\bullet$ with \emph{exactly one active ticket} and vertices $\bullet$ with \emph{strictly more than one ticket} (of which at least one is active).
\end{itemize}
The model describes a type of self-correcting statistical behavior whereby at any given time in the random evolution of the system there is a high probability that one of the transaction vertices $\bullet$ carries the largest number of active tickets $\vSquare$ and thus dominates the ledger growth.\\

As we will demonstrate in the following, this model possesses a set of polynomially jump-closed observables that moreover exhibit the discrete bisimulation phenomenon. However, since the resulting evolution equations are too complex still to be analytically solvable, we will take advantage of the discrete bisimulation in performing \emph{SSA algorithm based simulations} of the associated discrete rewriting system, thereby gaining access to some key dynamical statistical properties of this stochastic rewriting system system. The set of observables we will consider is based on the following rule algebra elements: 
\gdef\tpScale{0.7}
\begin{equation}\label{eq:tmtObs}
\begin{aligned}
o_{\vSquare}&:=\includeFig{1}{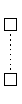}\,,\quad 
o_{\vSquareSolid}:=\includeFig{2}{images/tmtObs.pdf}\,,\quad
o_t:=\includeFig{3}{images/tmtObs.pdf}\,,\quad
o_{\vCirc}:=\includeFig{4}{images/tmtObs.pdf}\\
\\
o_{\bullet_1}&:=\left(\includeFig{5}{images/tmtObs.pdf}\right)\,,\quad
o_{\bullet_g}:=o_{\bullet}-o_{\bullet_1}\,,\quad\;\;\! 
o_{\bullet}:=\includeFig{6}{images/tmtObs.pdf}\\
o_{\vSquareSolid_1}&:=\left(
    \includeFig{7}{images/tmtObs.pdf}
\right)\,,\quad
o_{\vSquareSolid_g}:=o_{\vSquareSolid}-o_{\vSquareSolid_1}\,,\quad 
o_{\vSquareSolid}:=\includeFig{8}{images/tmtObs.pdf}\,.
\end{aligned}
\end{equation}
The canonical representations $\cO_X:=\rho(o_X)$ of the first four rule algebra elements $o_X$ implement the counting of vertices of the different types. The observables $\cO_{\bullet_1}:=\rho(o_{\bullet_1})$ in effect counts transaction vertices $\bullet$ that carry \emph{precisely one} ticket $\vSquare$, since in DPO rewriting the application of a rule that deletes a vertex and some specified incident edges may only be applied to a graph that has this precise pattern as subgraph. Analogously, $\cO_{\bullet_g}:=\rho(o_{\bullet_g})$ counts transaction vertices $\bullet$ with \emph{more than one} ticket (of which at least one is active). This fine distinction will prove rather important for the model to exhibit a discrete bisimulation property, as we will see momentarily. We also note that in all calculations of this type, one typically has to include at least the observables into the analysis that are already present in the system's infinitesimal generator (i.e.\@ in the form of the contribution $\bO(h)$), as is verified by our above choice of observables.

In order to verify that the chosen system of observables possesses the \emph{polynomial jump closure} property (cf.\ Definition~\ref{def:PJC}), we need to calculate the iterated commutators of the various observables with the contributions $h_X$ (ranging over the transitions) to the infinitesimal generator of the CTMC. We collect the corresponding data in Table~\ref{tab:tmtComm}. Note in particular that some of the commutators do not ``close'' polynomially, but the terms that interfere with the closure act trivially on the state space of the model, in which case one may drop the trivially acting terms from the calculation; this is indicated by the notation ${\color{blue}=_S}$ (for equality up to terms acting trivially on the state space). For brevity, we present only these special cases explicitly, since the remaining commutators directly exhibit polynomial jump closure\footnote{As an aside, it might be worth noting that it was crucial to define the rule $h_R$ in this fashion (i.e.\ tickets are only relinked from transaction vertices with precisely one ticket to transaction vertices with strictly more than one ticket), because if one dropped these constraints, one would obtain not only non-closure, but even if one kept the first part of the constraint and would allow relinking to arbitrary transaction vertices, one would lose the discrete bisimulation property.} and are rather simple in structure.\\

Since as the results of Table~\ref{tab:tmtComm} reveal the contributions $h_X$ of the evolution operator are eigenelements under the adjoint action of the vertex observables, it proves convenient to exploit the following lemma in order to proceed:
\begin{lem}\label{lem:tmtAUX}
With the two formal linear operators $\vec{\nu}\cdot\vec{\cO}_V$ and $\vec{\varepsilon}\cdot \vec{\cO}_E$ defined as
\[
	\vec{\nu}\cdot\vec{\cO}_V:=
	\nu_{\vSquare}\cO_{\vSquare}+\nu_{\vSquareSolid}\cO_{\vSquareSolid}+\nu_t \cO_t+\nu_{\vCirc}\cO_{\vCirc}\,,\quad \vec{\varepsilon}\cdot \vec{\cO}_E:=
	\varepsilon_1\cO_{\bullet_1}+\varepsilon_{\vSquareSolid_1}\cO_{\vSquareSolid_1}+\varepsilon_g\cO_{\bullet_g}+\varepsilon_{\vSquareSolid_g}\cO_{\vSquareSolid_g}\,,
\]
and with $H$ the infinitesimal generator of the CTMC, one finds that
\begin{equation}\label{eq:adActLem}
\begin{aligned}
	\sum_{m\geq 1}\frac{1}{m!}\bra{}\left(ad_{\vec{\nu}\cdot \vec{O}_V+\vec{\varepsilon}\cdot \vec{\cO}_E}^{\circ\:m}(H)\right)
	&=\sum_{q\geq 1}\frac{1}{q!} \bra{}\left(ad_{\vec{\nu}\cdot \vec{O}_V}^{\circ\:q}(H)\right)\\
	&\quad+\sum_{r\geq1}\frac{1}{r!} \bra{}\left(ad_{\vec{\varepsilon}\cdot \vec{\cO}_E}^{\circ\:r}(H)\right)\\
	&\quad +\sum_{r,q\geq 1}\frac{1}{r!q!}\bra{}\left(ad_{\vec{\varepsilon}\cdot \vec{\cO}_E}^{\circ\:r}\left(ad_{\vec{\nu}\cdot \vec{O}_V}^{\circ\:q}(H)\right)\right)\,.
\end{aligned}
\end{equation}
\begin{proof}
Taking advantage (in the steps marked $(c)$) of the fact that observables commute (i.e.\ $[\cO_X,\cO_Y]=0$), the claim follows from a straightforward application of the definition of the adjoint action ($e^A B e^{-A}=e^{ad_A}B$ for $A$ a formal operator, cf.\ \cite{hall2015lieGroups}, Prop.~3.35) and using the property $\bra{}H=0$ of the operator $H$ (used in the steps marked $(*)$):
{\allowdisplaybreaks
\begin{align*}
	\sum_{m\geq 1}\frac{1}{m!}\bra{}\left(ad_{\vec{\nu}\cdot \vec{O}_V+\vec{\varepsilon}\cdot \vec{\cO}_E}^{\circ\:m}(H)\right)
	&=\bra{}\left(
		e^{ad_{\vec{\nu}\cdot \vec{O}_V+\vec{\varepsilon}\cdot \vec{\cO}_E}}-1
	\right)H\overset{(*)}{=}\bra{}\left(
		e^{ad_{\vec{\nu}\cdot \vec{O}_V+\vec{\varepsilon}\cdot \vec{\cO}_E}}
	\right)H\\
	&=\bra{}e^{\vec{\nu}\cdot \vec{O}_V+\vec{\varepsilon}\cdot \vec{\cO}_E}H e^{-\vec{\nu}\cdot \vec{O}_V-\vec{\varepsilon}\cdot \vec{\cO}_E}\\
	&\overset{(c)}{=} 
	\bra{}e^{\vec{\varepsilon}\cdot \vec{\cO}_E}
	\left(e^{\vec{\nu}\cdot \vec{O}_V}H e^{-\vec{\nu}\cdot \vec{O}_V}\right)e^{-\vec{\varepsilon}\cdot \vec{\cO}_E}\\
	&= 
	\bra{}e^{\vec{\varepsilon}\cdot \vec{\cO}_E}H e^{-\vec{\varepsilon}\cdot \vec{\cO}_E}
	+\sum_{q\geq 1}\frac{1}{q!}\bra{}e^{\vec{\varepsilon}\cdot \vec{\cO}_E}
	\left(ad_{\vec{\nu}\cdot \vec{O}_V}^{\circ\: q}H \right)e^{-\vec{\varepsilon}\cdot \vec{\cO}_E}\\
	&\overset{(*)}{=}
		\sum_{r\geq 1}\frac{1}{r!}\bra{}\left(
			ad_{\vec{\varepsilon}\cdot \vec{\cO}_E}^{\circ\:r}H
		\right)+\sum_{q\geq 1}\frac{1}{q!}\bra{}
	\left(ad_{\vec{\nu}\cdot \vec{O}_V}^{\circ\: q}H \right)\\
	&\quad \sum_{r,q\geq 1}\frac{1}{r!q!}\bra{}
	\left(
		ad_{\vec{\varepsilon}\cdot \vec{\cO}_E}^{\circ\:r}
		\left(
			ad_{\vec{\nu}\cdot \vec{O}_V}^{\circ\: q}H
		\right)
	\right)\,.
\end{align*}%
}
\end{proof}
\end{lem}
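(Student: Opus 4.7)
The plan is to start from the observation that the left-hand side can be recognised as a power series expansion: if one lets $A:=\vec{\nu}\cdot\vec{\cO}_V+\vec{\varepsilon}\cdot\vec{\cO}_E$, then
\[
\sum_{m\geq 1}\frac{1}{m!}\bra{}\bigl(ad_{A}^{\circ\:m}(H)\bigr)=\bra{}\bigl(e^{ad_{A}}-\mathbb{1}\bigr)H=\bra{}e^{ad_{A}}H\,,
\]
where in the last step I would invoke Lemma~\ref{lem:Haux}, i.e.\ $\bra{}H=0$, to discard the $m=0$ term. Next I would apply the standard BCH-type identity $e^{ad_{A}}B=e^{A}Be^{-A}$ (cf.\ \cite{Hall_2015}, Prop.~3.35) to rewrite this as $\bra{}e^{A}He^{-A}$.

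The crucial observation is that observables are \emph{diagonal} in the basis of pure states, hence any two observables commute with one another. This implies $[\vec{\nu}\cdot\vec{\cO}_V,\vec{\varepsilon}\cdot\vec{\cO}_E]=0$, which allows me to factor $e^{A}=e^{\vec{\varepsilon}\cdot\vec{\cO}_E}e^{\vec{\nu}\cdot\vec{\cO}_V}$ and $e^{-A}=e^{-\vec{\nu}\cdot\vec{\cO}_V}e^{-\vec{\varepsilon}\cdot\vec{\cO}_E}$. I would then fold the inner factors back into adjoint form, obtaining
\[
\bra{}e^{A}He^{-A}=\bra{}e^{\vec{\varepsilon}\cdot\vec{\cO}_E}\bigl(e^{ad_{\vec{\nu}\cdot\vec{\cO}_V}}H\bigr)e^{-\vec{\varepsilon}\cdot\vec{\cO}_E}\,,
\]
and split the inner series into its degree-$0$ piece $H$ and the positive-degree remainder $\sum_{q\geq1}\tfrac{1}{q!}ad_{\vec{\nu}\cdot\vec{\cO}_V}^{\circ\:q}H$.

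Finally, for each term I would peel off the outer conjugation $\bra{}e^{\vec{\varepsilon}\cdot\vec{\cO}_E}(\cdot)e^{-\vec{\varepsilon}\cdot\vec{\cO}_E}=\bra{}e^{ad_{\vec{\varepsilon}\cdot\vec{\cO}_E}}(\cdot)$ and expand into its power series in $r$. The $q$-part produces the double sum $\sum_{q,r\geq 1}\tfrac{1}{q!r!}\bra{}ad_{\vec{\varepsilon}\cdot\vec{\cO}_E}^{\circ\:r}(ad_{\vec{\nu}\cdot\vec{\cO}_V}^{\circ\:q}H)$ together with the single sum $\sum_{q\geq 1}\tfrac{1}{q!}\bra{}ad_{\vec{\nu}\cdot\vec{\cO}_V}^{\circ\:q}H$ from the $r=0$ contribution, while the degree-$0$ piece $\bra{}e^{ad_{\vec{\varepsilon}\cdot\vec{\cO}_E}}H$ yields $\sum_{r\geq 1}\tfrac{1}{r!}\bra{}ad_{\vec{\varepsilon}\cdot\vec{\cO}_E}^{\circ\:r}H$ after a second application of $\bra{}H=0$ to kill its $r=0$ term. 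Collecting these three contributions reproduces precisely the right-hand side of~\eqref{eq:adActLem}. The proof is essentially mechanical; the only real conceptual ingredients are the mutual commutativity of all observables and the two uses of $\bra{}H=0$, so I do not anticipate any obstacle beyond bookkeeping.
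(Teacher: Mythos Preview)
Your proposal is correct and follows essentially the same approach as the paper's own proof: recognise the left-hand side as $\bra{}(e^{ad_A}-\mathbb{1})H$, use $\bra{}H=0$ to add back the $m=0$ term, apply the BCH identity $e^{ad_A}H=e^AHe^{-A}$, split the exponential via commutativity of observables, and then expand the inner and outer adjoint actions separately with a second invocation of $\bra{}H=0$. The bookkeeping and the two conceptual ingredients you identify (mutual commutativity of observables and the two uses of $\bra{}H=0$) match the paper exactly.
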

For the concrete model in hand, we find according to Table~\ref{tab:tmtComm} the following adjoint action of the vertex observables on $H$:
\begin{equation}\label{eq:adjVAtmt}
\begin{aligned}
	\sum_{q\geq 1}\frac{1}{q!}\bra{}ad_{\vec{\nu}\cdot \vec{O}_V}^{\circ\: q}(H)&=r_D\left(
		e^{\nu_{\vSquareSolid}-\nu_{\vSquare}}-1
	\right)\bra{}H_D+r_G\left(e^{\nu_{\vCirc}}-1\right)\bra{}H_G\\
	&\quad +r_T\left(e^{\nu_{\vSquare}+\nu_t+2\nu_{\vCirc}}-1\right)\bra{}H_T++r_P\left(e^{2\nu_{\vSquare}+\nu_t+2\nu_{\vCirc}}-1\right)\bra{}H_P\,.
\end{aligned}
\end{equation}
For the edge observables, we find
\begin{equation}\label{eq:adjEAtmt}
\begin{aligned}
	\sum_{r\geq 1}\frac{1}{r!}
		\bra{}ad_{\vec{\varepsilon}\cdot \vec{O}_E}^{\circ\: r}(H)&=
		r_D\left(
		e^{\varepsilon_{\vSquareSolid_1}-\varepsilon_{\vSquare_1}}-1
	\right)\bra{}H_{D_1}
	+r_D\left(
		e^{\varepsilon_{\vSquareSolid_g}-\varepsilon_{\vSquare_g}}-1
	\right)\bra{}H_{D_g}\\
	&\quad +r_T\left(e^{\varepsilon_1}-1\right)\bra{}H_T
	+r_P\left(e^{2\varepsilon_g}-1\right)\bra{}H_P\\
	&\quad +r_R\left(e^{\varepsilon_g-\varepsilon_1}-1\right)\bra{}H_R\,.
\end{aligned}
\end{equation}

The next step in the analysis then consists in assembling the EMGF evolution equation. To this end, we also need to calculate the following identities (where $H_X:=\rho(h_x)$ and $\cO_Y:=\rho(o_Y)$):
\begin{equation}\label{eq:tmtJC1}
\begin{aligned}
	\bra{}H_D&=\bra{}\cO_{\vSquare}\\
	\bra{}H_G&=\bra{}H_T=\bra{}H_P=\bra{}(\cO_{\bullet_1}+\cO_{\bullet_g})\\
	\bra{}H_{D_g}&=\bra{}\cO_{\bullet_g}\,,\; \bra{}H_{D_1}=\bra{}\cO_{\bullet_1}\,.
\end{aligned}
\end{equation}
The calculation of the contribution of the term $\bra{}H_R$ to the EMGF evolution equation is considerably more involved. In fact, we have to invoke the notion of equality up to terms acting trivially on the state space of the model yet again in order to be able to proceed with the analysis. As a first step, we compute (with the operation $\squplus$ as introduced in~\eqref{def:sup})
\begin{equation}\label{eq:tmtJC2}
\begin{aligned}
	\bra{}H_R&=\bra{}\left(\cO_{\bullet_g}  \squplus  \cO_{\bullet_1}\right)
	=\bra{}(\cO_{\bullet} \squplus\cO_{\bullet_1})
	-\bra{}(\cO_{\bullet_1} \squplus  \cO_{\bullet_1})\,.
\end{aligned}
\end{equation}
In order to resolve this equation, we need the following compositions:
\begin{equation}\label{eq:tmtJC3}
\begin{aligned}
	\cO_{\bullet}\cO_{\bullet_1}&=
	\cO_{\bullet} \squplus  \cO_{\bullet_1}+\cO_{\bullet_1}
	+\rho\left(
		\includeFig{1}{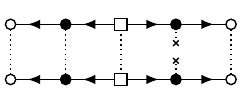}
	\right)
	+\rho\left(
		\includeFig{2}{images/tmtJC3.pdf}
	\right)\\
	&\quad +\rho\left(
		\includeFig{3}{images/tmtJC3.pdf}
	\right)\; {\color{blue}=_S}\; \cO_{\bullet} 
    \squplus  \cO_{\bullet_1}+\cO_{\bullet_1}\\
	\cO_{\bullet_1}\cO_{\bullet_1}&=\cO_{\bullet_1} \squplus  \cO_{\bullet_1}
    +\cO_{\bullet_1}+\rho\left(
		\includeFig{4}{images/tmtJC3.pdf}
	\right)
	+\rho\left(
		\includeFig{5}{images/tmtJC3.pdf}
	\right)\\
	&\quad +\rho\left(
		\includeFig{6}{images/tmtJC3.pdf}
	\right)\; {\color{blue}=_S}\;
	 \cO_{\bullet_1} \squplus  \cO_{\bullet_1}+\cO_{\bullet_1}\\
	 \\
	 \Rightarrow\quad
	 \cO_{\bullet_g} \squplus  \cO_{\bullet_1}&{\color{blue}=_S} \cO_{\bullet_g}\cO_{\bullet_1}\,.
\end{aligned}
\end{equation}
Thus invoking the up-to-trivial action on states equality ${\color{blue}=_S}$ ultimately is what renders the system effectively polynomially jump-closed.\\

The final part of the calculation regards determining the formal differential operator $\bD$ that enters the evolution equation for the EMGF of the chosen observable. Let us introduce the notation
\begin{equation}\label{eq:evoTMT1}
\begin{aligned}
\cM(t;\vec{\nu},\vec{\varepsilon})&:=\bra{}e^{\vec{\nu}\cdot\vec{\cO}_V+\vec{\varepsilon}\cdot \vec{\cO}_E}\ket{\Psi(t)}\\
\vec{\nu}\cdot\vec{\cO}_V&:=\nu_{\vSquare}o_{\vSquare}
+\nu_{\vSquareSolid}o_{\vSquareSolid}
+\nu_t o_t
+\nu_{\vCirc}o_{\vCirc}\,,\quad \vec{\varepsilon}\cdot \vec{\cO}_E:=\varepsilon_1\cO_{\bullet_1}+\varepsilon_g\cO_{\bullet_g}\,.
\end{aligned}
\end{equation}
Then by assembling the results on the commutators as presented in Table~\ref{tab:tmtComm} and the resulting expressions calculated in equations~\eqref{eq:adjVAtmt}, \eqref{eq:adjEAtmt}, \eqref{eq:tmtJC1}, \eqref{eq:tmtJC2} and~\eqref{eq:tmtJC3}, we may determine via Lemma~\ref{lem:tmtAUX} the evolution equation as follows:
\begin{equation*}
\begin{aligned}
	&\sum_{q\geq 1}\frac{1}{q!}\bra{}ad^{\circ\:q}_{\vec{\nu}\cdot\vec{\cO}_V+\vec{\varepsilon}\cdot \vec{\cO}_E}(H)\\
	&{\color{blue}=_S}\;
		r_D\left(
			e^{\nu_{\vSquareSolid}-\nu_{\vSquare}+\varepsilon_{\vSquareSolid_1}-\varepsilon_1}-1\right)\bra{}\cO_{\bullet_1}
			+r_D\left(
			e^{\nu_{\vSquareSolid}-\nu_{\vSquare}+\varepsilon_{\vSquareSolid_g}-\varepsilon_g}-1\right)\bra{}\cO_{\bullet_g}\\
		&\quad + r_G\left(e^{\nu_{\vCirc}}-1\right)\bra{}(\cO_{\bullet_1}+\cO_{\bullet_g})\\
		&\quad + r_T\left(e^{\nu_{\vSquare}+\nu_t+2\nu_{\vCirc}+\varepsilon_1}-1\right)\bra{}(\cO_{\bullet_1}+\cO_{\bullet_g})\\
		&\quad + r_P\left(e^{2\nu_{\vSquare}+\nu_t+2\nu_{\vCirc}+2\varepsilon_g}-1\right)\bra{}(\cO_{\bullet_1}+\cO_{\bullet_g})\\
		&\quad + r_R\left(e^{\varepsilon_g-\varepsilon_1}-1\right)\bra{}\cO_{\bullet_1}\cO_{\bullet_g}\,.
\end{aligned}
\end{equation*}
Here, we have made use of the special property of our model that ticket vertices $\vSquare$ are always attached to transaction vertices, which entails that
\begin{equation}\label{eq:TMTobsEq}
\bra{}\cO_{\vSquare}{\color{blue}=_S\;}\bra{}\cO_{\bullet}=\bra{}\cO_{\bullet_1}+\bra{}\cO_{\bullet_g}\,.
\end{equation}
Moreover, we have made repeated use of the auxiliary elementary identity
\[
	(e^a-1)+(e^b-1)+(e^a-1)(e^b-1)=e^{a+b}-1\,.
\]

Since thus the system possesses the property of polynomial jump closure (Definition~\ref{def:PJC}), we may apply the combinatorial conversion theorem (Theorem~\ref{thm:CCT}), and finally arrive at the evolution equation
\begin{equation}\label{eq:tmtEvo}
\begin{aligned}
	\tfrac{\partial}{\partial t}\cM(t;\vec{\nu},\vec{\varepsilon})&{\color{blue}=_S\;}\bD(\vec{\nu},\vec{\varepsilon},\partial_{\vec{\nu}},\partial_{\vec{\varepsilon}})\cM(t;\vec{\nu},\vec{\varepsilon})\\
	\bD(\vec{\nu},\vec{\varepsilon},\partial_{\vec{\nu}},\partial_{\vec{\varepsilon}})&=
	\bigg[
		r_D\left(e^{\nu_{\vSquareSolid}-\nu_{\vSquare}+\varepsilon_{\vSquareSolid_1}-\varepsilon_1}-1\right)
		+r_G\left(e^{\nu_{\vCirc}}-1\right)\\
		&\qquad
			+ r_T\left(e^{\nu_{\vSquare}+\nu_t+2\nu_{\vCirc}+\varepsilon_1}-1\right)
			+ r_P\left(e^{2\nu_{\vSquare}+\nu_t+2\nu_{\vCirc}+2\varepsilon_g}-1\right)
	\bigg]\partial_{\varepsilon_1}\\
	&\quad
	+\bigg[
		r_D\left(e^{\nu_{\vSquareSolid}-\nu_{\vSquare}+\varepsilon_{\vSquareSolid_g}-\varepsilon_g}-1\right)
		+r_G\left(e^{\nu_{\vCirc}}-1\right)\\
	&\qquad 
		+ r_T\left(e^{\nu_{\vSquare}+\nu_t+2\nu_{\vCirc}+\varepsilon_1}-1\right)
		+ r_P\left(e^{2\nu_{\vSquare}+\nu_t+2\nu_{\vCirc}+2\varepsilon_g}-1\right)
	\bigg]\partial_{\varepsilon_g}\\
		&\quad + r_R\left(e^{\varepsilon_g-\varepsilon_1}-1\right)\partial_{\varepsilon_1}\partial_{\varepsilon_g}\,.
\end{aligned}
\end{equation}
Upon closer inspection, we find that this dynamical system fulfills all requirements of the discrete moment bisimulation theorem (Theorem~\ref{thm:DMB}) if we either
\begin{itemize}
	\item[(a)] only consider \emph{vertex observables} (and use the equivalence ${\color{blue}=_S\;}$ described in~\eqref{eq:TMTobsEq} in order to write $\bra{}(\cO_{\bullet_1} + \cO_{\bullet_g}){\color{blue}=_S\;}\bra{}\cO_{\vSquare}$ in~\eqref{eq:evoTMT1}, which then allows to express the operator $\bD$ in terms of partial derivatives $\partial_{\nu_{\vSquare}}$), or
	\item[(b)] consider all observables but the observable $\cO_{\vSquare}$, i.e.\ set $\nu_{\vSquare}=0$.
\end{itemize}
Since the total number of ticket type vertices $\vSquare$ may also be recovered as the sum of the counts of the observables $\cO_{\bullet_1}$ and $\cO_{\bullet_g}$ (again due to the specific structure of the model's state space), we will from hereon consider option $(b)$. A similar argument of course applies to the inactive ticket type vertices $\vSquareSolid$, so we additionally set $\nu_{\vSquareSolid}=0$ without loss of information. Thus the dynamical evolution of the statistical moments of the graph observables under the transitions of the graph rewriting system coincides with the evolution of the number count observables for the following \emph{discrete} rewriting system (where we have made the isomorphism of observables clear by letting a ``discrete species'' $X_Y$ carry the identifier $Y$ of the associated graph observable $\cO_G$):
\begin{equation}\label{eq:tmtDGRS}
\begin{aligned}
	X_{\bullet_1}&\xrightharpoonup{r_D} X_{\vSquareSolid_1}
	& X_{\bullet_g}&\xrightharpoonup{r_G} X_{\vSquareSolid_g}\\
	\\
	X_{\bullet_1}&\xrightharpoonup{r_G} X_{\bullet_1}+X_{\vCirc}
	& 
	X_{\bullet_g}&\xrightharpoonup{r_G} X_{\bullet_g}+X_{\vCirc}\\
	\\
	X_{\bullet_1}&\xrightharpoonup{r_T}2X_{\bullet_1}+X_t+2X_{\vCirc}
	&
	X_{\bullet_g}&\xrightharpoonup{r_T}X_{\bullet_g}+X_{\bullet_1}+X_t+2X_{\vCirc}\\
	\\
	X_{\bullet_1}&\xrightharpoonup{r_P}X_{\bullet_1}+2X_{\bullet_g}+X_t+2X_{\vCirc}\qquad 
	& 
	X_{\bullet_g}&\xrightharpoonup{r_P}3X_{\bullet_g}+X_{\bullet_1}+X_t+2X_{\vCirc}\\
	\\
	X_{\bullet_1}&+X_{\bullet_g}\xrightharpoonup{r_R}2X_{\bullet_g}\,.
\end{aligned}
\end{equation}
In this highly non-trivial example of a discrete moment bisimulation, one may observe a couple of interesting structures in the associated discrete rewriting system. At first glance, one might in hindsight almost have anticipated some of the discrete transitions listed in~\eqref{eq:tmtDGRS}, since they in a certain sense amount to the \emph{causally consistent modifications} of the observable counts due to the transitions of the original graph rewriting system. But what is fascinating is that our theorem ensures that this set of discrete transitions is in fact \emph{complete}, in the sense that there are no other possibilities for changes of graph observables influencing each other during transitions (at least not until considering a larger set of observables in the original system). It is also worthwhile noting that the transition system described in~\eqref{eq:tmtDGRS} is \emph{not} of the type of a branching process, due to the second order transitions described in the last line of~\eqref{eq:tmtDGRS}. One might thus envision that there exists an intimate link between discrete moment bisimulations and causal analysis of stochastic rewriting systems, which we plan to investigate further in future work.\\

Back to the system at hand, one may finally take full advantage of the discrete moment bisimulation and extract dynamical statistical information on graph observables of our model via \emph{simulation} of the associated discrete rewriting system. The standard approach for the simulation of discrete rewriting system is using Gillespie's stochastic simulation algorithm (SSA)~\cite{Gillespie_1977}. We chose to apply the algorithm via using the \textsc{KaSim} simulation suite\footnote{\href{https://kappalanguage.org}{Kappa Language (https://kappalanguage.org):} A rule-based language for modeling interaction networks}. Referring to Appendix~\ref{sec:kappa} for the concrete implementation of our model (with an example for a choice of reaction rates given), one may produce system trajectories such as the one displayed in Figure~\ref{fig:TMTtraj}. Further work with the simulation engine would allow one to extract statistical information such as means of observable counts etc.\@ in the exact same fashion as familiar from the numerical study of chemical reaction networks. As the exemplary computation presented in this section illustrates, one would of course ultimately strive to develop some form of automatized algorithms in order to facilitate the search for interesting closure properties of rule-algebraic commutation relations, an endeavor which we leave to future work.

\begin{figure}
\centering
\includegraphics[width=0.9\textwidth]{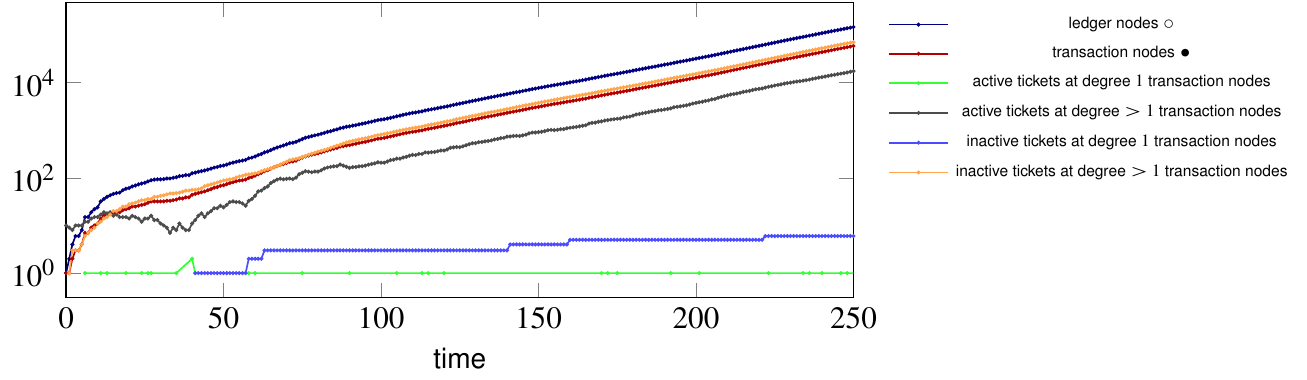}
\caption{Example trajectory as produced by simulating the Kappa model as described in~\ref{sec:kappa}. \label{fig:TMTtraj}}
\end{figure}

\section{Conclusion and outlook}
\label{sec:CO}

The framework presented in this paper firmly establishes the theory of stochastic rewriting systems (SRSs) as a natural sub-discipline of the study of continuous-time Markov chains and of statistical physics. Our formulation of SRSs in terms of a rule-algebraic stochastic mechanics framework possesses a number of marked practical advantages over previous approaches. At a fundamental level, it renders computations of the evolution of statistical quantities such as the moments of observables quite analogous in nature to comparable computations e.g.\ in the theory of \emph{chemical reaction systems (CRSs)} (see e.g. \cite{bdp2017}). Of immediate practical importance is the discovery that just as in the case of CRSs, the dynamics of statistical moments of observables is determined by certain \emph{static algebraic relations}, with the precise formulation of this discovery given by the Combinatorial Conversion Theorem in Section~\ref{sec:CC}. The theorem hinges on our novel and generalized notion of \emph{polynomial jump-closure} of sets of observables, whose rather combinatorial nature inspired the moniker of the theorem. Perhaps the most fascinating aspect of this mathematical structure is a striking resemblance of the moment evolution equations to certain evolution equations of other types of generating functions in the theory of analytical combinatorics and of combinatorial species~\cite{FlajoletSedgewick,blasiak2011combinatorial}. As a first hint of the utility of exploring such structural similarities further, we have demonstrated in Section~\ref{sec:AE} how to apply a technique known as semi-linear normal-ordering~\cite{Dattoli:1997iz,blasiak2005boson,blasiak2011combinatorial} (originally developed as part of the study of exponential generating function of series of combinatorial numbers) in order to obtain exact, closed-form solutions to moment generating function evolution equations. 

The passage to the study of exponential generating functions of moments of graph observables results in an additional novel avenue for the analysis of SRSs, which we refer to as \emph{moment bisimulation} (Section~\ref{sec:DB}). Contrary to more traditional approaches of bisimulations for CTMCs which are based on notions of lumpability of the state spaces of the CTMCs~\cite{buchholz_1994,Feret2012137,Petrov_2012}, our notion of bisimulation focuses instead on the evolution equations for the statistical moments of observables: given two sets of polynomially jump-closed observables for two different SRSs, we define the two systems to be moment bisimilar (w.r.t.\@ the given sets of observables) if their moment evolution equations according to the Combinatorial Conversion Theorem coincide. While a detailed study of the structure of the resulting bisimilarity classes is left to future work, we have presented one particularly interesting such class in the form of the \emph{Discrete Bisimulation Theorem}, which states the precise constructive conditions under which an SRS with a chosen set of polynomially jump-closed observables is bisimilar to a class of SRSs on discrete graphs, better known as chemical reaction systems. The criterion is purely rule-algebraic and combinatorial in nature, and we presented in Section~\ref{sec:AE} a concrete non-trivial example of an SRS that exhibits this phenomenon. We believe that moment bisimulation has immense potential as a technique to improve the tractability of the analysis of SRSs both in theory and in practice, and plan on taking it as the basis for the development of combinatorics-based approximation algorithms in future work.

We envision that the stochastic mechanics approach of the present paper in conjunction with further detailed studies of the techniques of combinatorial species theory~\cite{FlajoletSedgewick,blasiak2011combinatorial} will ultimately lead to a deeper understanding of the evolution of stochastic dynamical rewriting systems. In particular, our prototypical application example presented in Section~\ref{sec:CC} hints at the possibility for taking the rule-algebraic commutation relations that govern the moment-dynamics as a constructive criterion in the design of SRSs. Taking inspiration from frameworks such as the Kappa language~\cite{danos2004formal} developed for describing biochemical reaction networks, one might hope to unveil a fine interplay between the data structures (the objects to be rewritten), the precise rule sets (modeling the transitions) and the observables (the patterns to be observed) for a given SRS, where alternative design choices for a given system may very well strongly influence the tractability of the stochastic moment evolutions. A necessary precondition for such developments would consist in a constructive characterization of the  moment bisimilarity classes beyond the discrete bisimulation class, which we leave for future work.

\paragraph{Acknowledgments.}

The work of N.B.\@ was supported by a \emph{Marie Sk\l{}odowska-Curie Individual fellowship} (Grant Agreement No.~753750 -- RaSiR). The work of I.G.\@ was supported by the ANR grant REPAS. We would like to thank G.~Dattoli, G.H.E.~Duchamp, J.~Krivine, P.A.M.~Melli\`{e}s, K.A.~Penson and N.~Zeilberger for fruitful discussions.

\begin{figure}
\subfigure[Time-evolution of the edge observable count probability distribution]{\label{fig:VM1A}
\begin{minipage}{\textwidth}
  \begin{minipage}{.45\textwidth}
  \includegraphics[width=\textwidth,page=2]{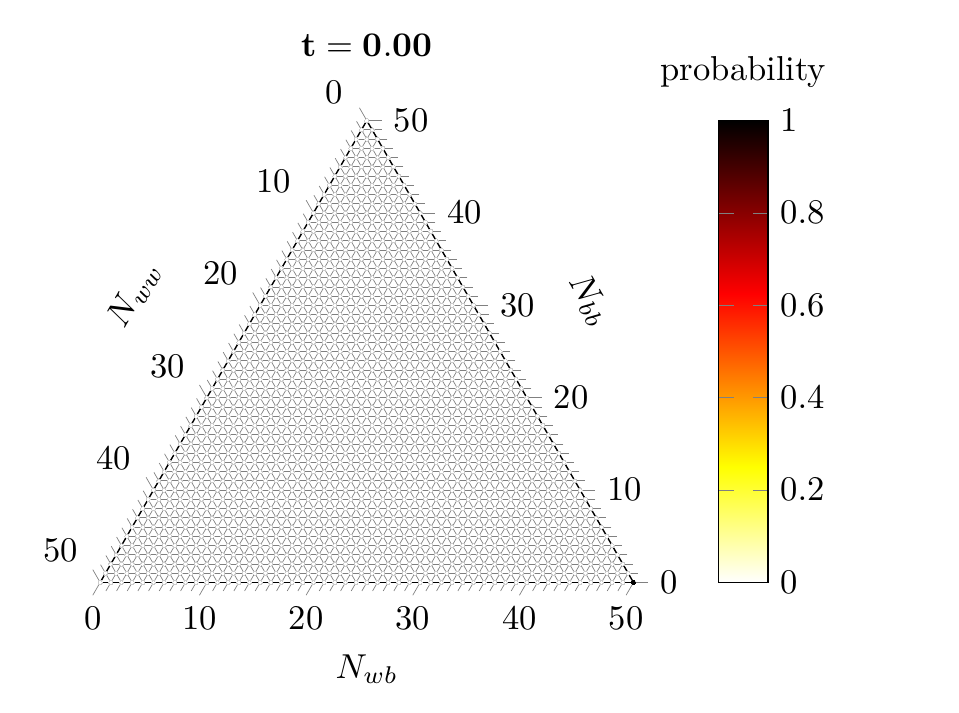}
\end{minipage}
\hspace{.025\textwidth}
\begin{minipage}{.45\textwidth}
  \includegraphics[width=\textwidth,page=6]{images/VM/tern-plot-VM-25-5-0-50-0-1-20-1o2-1o18.pdf}
\end{minipage}\\
\begin{minipage}{.45\textwidth}
  \includegraphics[width=\textwidth,page=11]{images/VM/tern-plot-VM-25-5-0-50-0-1-20-1o2-1o18.pdf}
\end{minipage}
\hspace{.025\textwidth}
\begin{minipage}{.45\textwidth}
  \includegraphics[width=\linewidth,page=21]{images/VM/tern-plot-VM-25-5-0-50-0-1-20-1o2-1o18.pdf}
\end{minipage}
\end{minipage}
}\\
\subfigure[Time-evolution of the means and (co-)variances of the edge observables]{\label{fig:VM2}
  \begin{minipage}{.45\linewidth}
    \includegraphics[width=\linewidth]{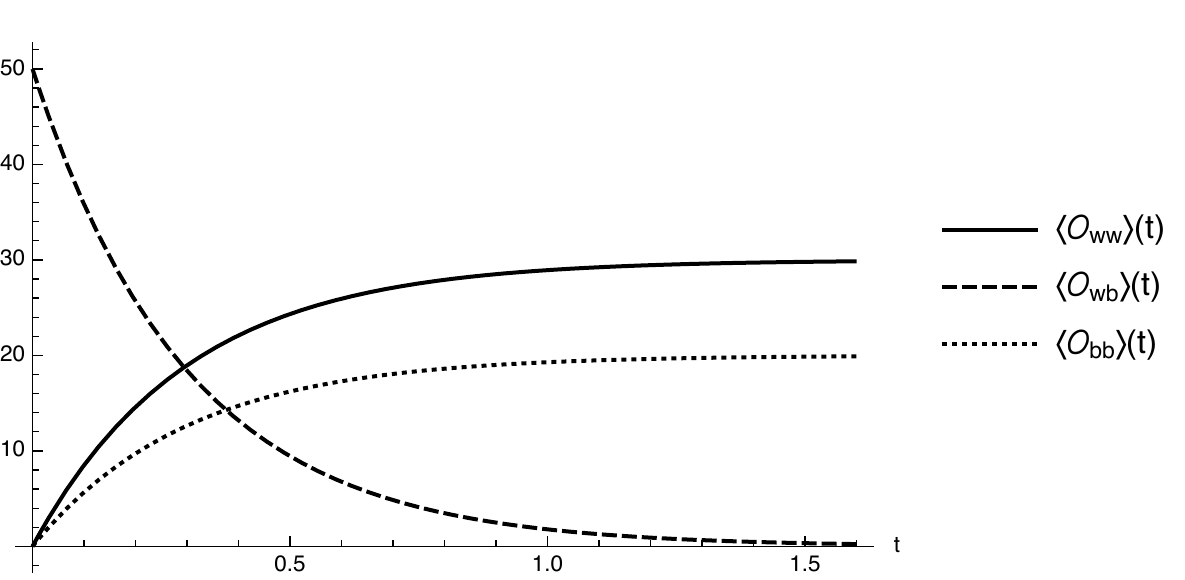}
  \end{minipage}
    \hspace{.05\linewidth}
  \begin{minipage}{.45\linewidth}
    \includegraphics[width=\linewidth]{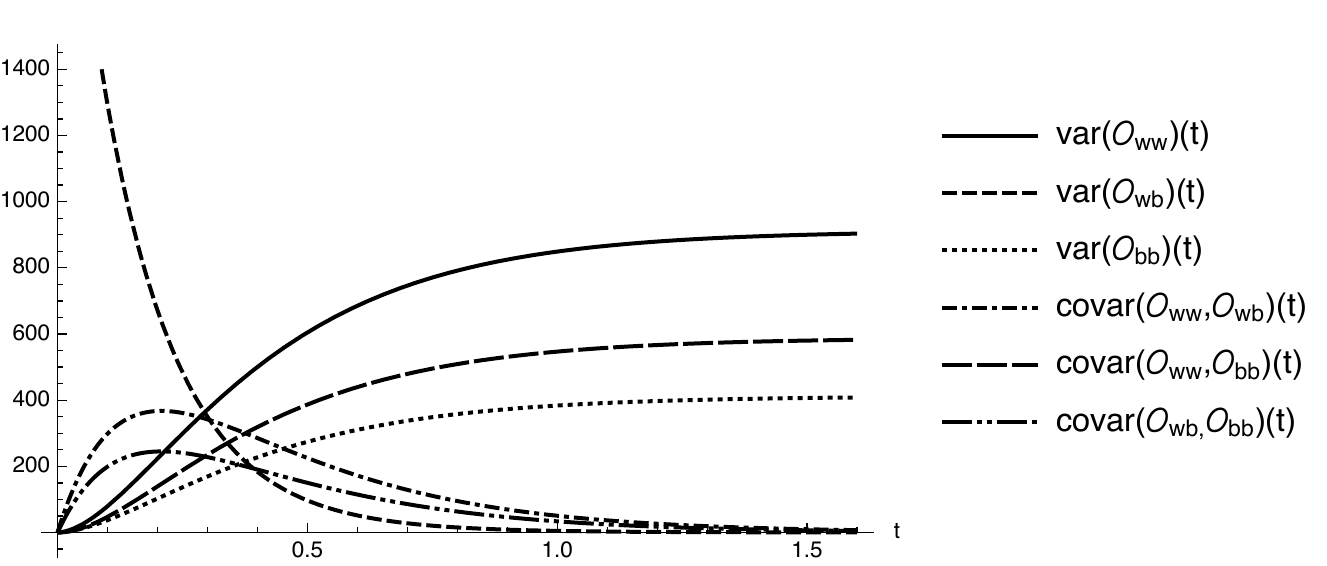}
  \end{minipage}
}%
\caption{Exemplary time-evolution of a Voter Model with ``edge-flipping'' rules only, for $\kappa_0=\tfrac{1}{2}$, $\kappa_1=\tfrac{1}{18}$ and an initial state $\ket{\Psi(0)}=\ket{G_0}$ with $(N_w,N_b,N_{ww},N_{wb},N_{bb})=(5,25,0,50,0)$.}
  \label{fig:VM1}
\end{figure}

\clearpage



\newpage
\appendix

\section{Kappa language implementation of the discrete bisimilar reaction model}\label{sec:kappa}

\lstinputlisting{kappa/TMT.ka}

\newpage

\renewcommand\arraystretch{1.0}
\begin{longtable}{RCL}
  \caption{Nested adjoint action of observables on contributions to the infinitesimal generator.\label{tab:tmtComm}}\\
\toprule
ad_{o_X}h_Y &=& o_X \ast_{\cR} h_Y-h_Y\ast_{\cR}o_X\\
    \midrule
\endfirsthead
\multicolumn{3}{l}{{\tablename\ \thetable\  \textit{-- continued from previous page}}} \\
\toprule\\
ad_{o_X}^{\circ \:q}h_Y &=& o_X \ast_{\cR} h_Y-h_Y\ast_{\cR}o_X\\
    \midrule
\endhead
\bottomrule\\ 
\multicolumn{3}{l}{\textit{Continued on next page}} \\
\endfoot
\bottomrule\\
\endlastfoot
ad_{\vec{\nu}\cdot \vec{o}_V}(h) &=& 
ad_{\nu_{\vSquare}o_{\vSquare}
+\nu_{\vSquareSolid}o_{\vSquareSolid}
+\nu_t o_t
+\nu_{\vCirc}o_{\vCirc}}(r_D h_D+r_G h_G\\
&&\qquad \qquad +r_T h_T+r_P h_P+r_Rh_R)\\
&=& (\nu_{\vSquareSolid}-\nu_{\vSquare})r_Dh_D
+\nu_{\vCirc} r_G h_G\\
&&\quad +(\nu_{\vSquare}+\nu_t+2\nu_{\vCirc})r_Th_T
+(2\nu_{\vSquare}+\nu_t+2\nu_{\vCirc})r_Ph_P\\
\midrule
\left[o_{\bullet_1},h_D\right]&=& 
\left[
	\includeFig{1}{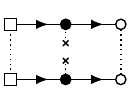},
    \includeFig{2}{images/tmtComm.pdf}
\right]
=-\includeFig{3}{images/tmtComm.pdf}
=:-h_{D_1}\\
\\
\left[o_{\bullet},h_D\right]&=& 
\left[
	\includeFig{4}{images/tmtComm.pdf},
    \includeFig{5}{images/tmtComm.pdf}
\right]
    =-\includeFig{6}{images/tmtComm.pdf}
    =:-h_{D_{\bullet}}\\
	\\
	\Rightarrow\quad
	\left[\varepsilon_1 o_{\bullet_1}+\varepsilon_g o_{\bullet_g},h_D\right]
	&=& -\varepsilon_1 h_{D_1}-\varepsilon_g h_{D_g}\,,\qquad  h_{D_g}:=h_{D_{\bullet}}-h_{D_1}\\
	\\
	\left[\varepsilon_1 o_{\bullet_1}+\varepsilon_{\bullet} o_{\bullet},h_{D_1}\right]
	&=& \varepsilon_1\left(
	-h_{D_1}
    -\includeFig{7}{images/tmtComm.pdf}
	\right)\\
	&&\qquad \qquad +\varepsilon_{\bullet}\left(
	-h_{D_1}
    -\includeFig{8}{images/tmtComm.pdf}
	\right)\\
	&{\color{blue}=_S}& -\varepsilon_1 h_{D_1}-\varepsilon_{\bullet} h_{D_1}\\
    \\
	\left[\varepsilon_1 o_{\bullet_1}+\varepsilon_{\bullet} o_{\bullet},h_{D_{\bullet}}\right]
	&=& \varepsilon_1\left(
	-h_{D_1}
    -\includeFig{9}{images/tmtComm.pdf}
	\right)\\
	&&\qquad \qquad\\
	&&\qquad \qquad+\varepsilon_{\bullet}\left(
	-h_{D_{\bullet}}
    -\includeFig{10}{images/tmtComm.pdf}
	\right)\\
	&{\color{blue}=_S}& -\varepsilon_1 h_{D_1}-\varepsilon_{\bullet} h_{D_{\bullet}}\\
	\\
	\Rightarrow\quad
	ad_{\varepsilon_1 o_{\bullet_1}+\varepsilon_{g} o_{\bullet_g}}^{\circ\: q}\left(h_{D}\right)
	&{\color{blue}=_S}& (-\varepsilon_1)^q h_{D_1}+(-\varepsilon_g)^q h_{D_g}\\
	\midrule
	ad_{\varepsilon_{\vSquareSolid_1} o_{\vSquareSolid_1}
	+\varepsilon_{\vSquareSolid_g} o_{\vSquareSolid_g}}^{\circ\: q}\left(h_{D}\right)
	&{\color{blue}=_S}& (\varepsilon_{\vSquareSolid_1})^q h_{D_1}+(\varepsilon_{\vSquareSolid_g})^q h_{D_g}\\
	\midrule
	ad_{\varepsilon_1 o_{\bullet_1}+\varepsilon_{g} o_{\bullet_g}+\varepsilon_{\vSquareSolid_1} o_{\vSquareSolid_1}
	+\varepsilon_{\vSquareSolid_g} o_{\vSquareSolid_g}}\left(h_G\right) &=& 0\\
	\midrule
	ad_{\varepsilon_1 o_{\bullet_1}+\varepsilon_{g} o_{\bullet_g}+\varepsilon_{\vSquareSolid_1} o_{\vSquareSolid_1}
	+\varepsilon_{\vSquareSolid_g} o_{\vSquareSolid_g}}\left( h_T\right) &=& 
	\varepsilon_1 r_T h_T\\
	\midrule
	ad_{\varepsilon_1 o_{\bullet_1}+\varepsilon_{g} o_{\bullet_g}+\varepsilon_{\vSquareSolid_1} o_{\vSquareSolid_1}
	+\varepsilon_{\vSquareSolid_g} o_{\vSquareSolid_g}}\left(h_P\right) &=& 
	2\varepsilon_g r_Ph_P\\
	\midrule
		ad_{\varepsilon_1 o_{\bullet_1}+\varepsilon_{g} o_{\bullet_g}+\varepsilon_{\vSquareSolid_1} o_{\vSquareSolid_1}
	+\varepsilon_{\vSquareSolid_g} o_{\vSquareSolid_g}}\left(h_R\right) &=& 
		(-\varepsilon_1+\varepsilon_g)h_R\\
\end{longtable}

\begin{figure}[htbp]
\centering\gdef\tpScale{0.9}\gdef\imgScale{1.2}
\subfigure[Ticket de-activation]{\label{fig:Drule}
	$\vcenter{\hbox{\includegraphics[scale=\imgScale]{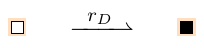}}}\qquad \widehat{=}\qquad h_D:=\tP{\vI{1}{1}{sqVertSolidGlow}{}{sqVertGlow}{}}\,,\quad o_D:=\tP{\vI{1}{1}{sqVert}{}{sqVert}{}}$
}\\
\subfigure[Ledger growth]{\label{fig:Grule}
	$\vcenter{\hbox{\includegraphics[scale=\imgScale]{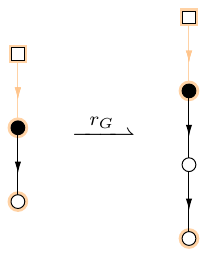}}}\qquad \widehat{=}\qquad 
	h_G:=\vcenter{\hbox{\includegraphics[page=1]{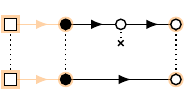}}}\,,\quad 
		o_G:=\includeFig{2}{images/Grule.pdf}$
}\\
\subfigure[Ticket production]{\label{fig:Trule}
	$\vcenter{\hbox{\includegraphics[scale=\imgScale]{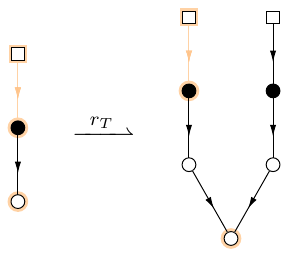}}}\qquad \widehat{=}\qquad \begin{array}{l}
	h_T:=\includeFig{1}{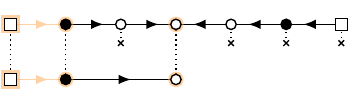}\\
	o_T:=\includeFig{2}{images/Trule.pdf}
		\end{array}$
}\\
\subfigure[Ticket pair production]{\label{fig:Prule}
	$\vcenter{\hbox{\includegraphics[scale=\imgScale]{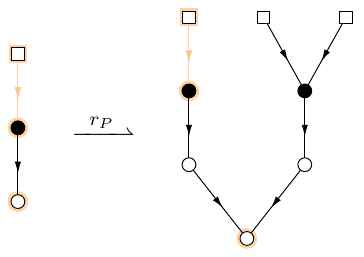}}}\qquad \widehat{=}\qquad \begin{array}{l}
	h_P:=\includeFig{1}{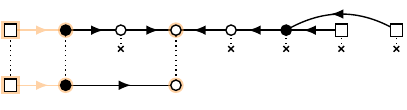}\\ 
	o_P:=\includeFig{2}{images/Trule.pdf}
		\end{array}$
}
\\
\subfigure[Ticket rearrangement]{\label{fig:Rrule}
	$\vcenter{\hbox{\includegraphics[scale=\imgScale]{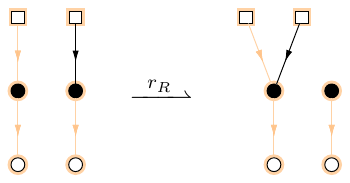}}}\qquad \widehat{=}\qquad \begin{array}{l}
	h_R:=\includeFig{1}{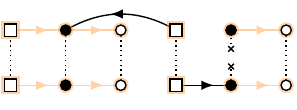}\\ 
		\qquad \qquad \qquad -
		\includeFig{2}{images/Rrule.pdf}\\
	o_R:=\left(\includeFig{3}{images/Rrule.pdf}
		-
		\includeFig{4}{images/Rrule.pdf}\right) \squplus  \left(
		\includeFig{5}{images/Rrule.pdf}
		\right)
		\end{array}$
}%
\caption{Transitions, corresponding rule algebra elements and associated observable rule algebra elements for the crypto-currency toy model. For the transitions, orange highlights indicate the graphical elements that are (effectively) preserved throughout the transition.\label{fig:tmt}}
\end{figure}

\end{document}